\tikzset{
  vert/.style={circle, draw=black!100,fill=black!100,thick, inner sep=0pt, minimum size=.8mm},
  empty/.style={draw=none, fill=none, minimum size=0mm, inner sep=0pt}} 
\tikzset{
  vert/.style={circle, draw=black!100,fill=black!100,thick, inner sep=0pt, minimum size=2mm}, 
  smallvert/.style={circle, draw=black!100,fill=black!100,thick, inner sep=0pt, minimum size=1mm},
  bigvert/.style={circle, draw=black!100,fill=black!100,thick, inner sep=0pt, minimum size=4mm},
  bigring/.style={fill = none, thick, minimum size = 8mm},
  fillring/.style={fill = red, thick, minimum size = 8mm, opacity = .5},
  labvert/.style={circle, draw=black!100,fill=none,thick, inner sep=2pt, minimum size=2mm}, 
  empty/.style={draw=none, fill=none, minimum size=0mm, inner sep=0pt},
  arc/.style = {->,> = latex'}
}
\newtheorem{fact}{Fact}
\newtheorem*{subclaim*}{Subclaim}
\newcounter{ctrclaim}[theorem]
\newcounter{ctrcase}[theorem]
 \newcommand{\Verts}[2][,]{\foreach \i/\x/\y in {#2}{\draw (\x#1\y) node (\i){};}}
 \newcommand{\Edges}[2][]{\foreach \i/\j in {#2}{\draw (\i) edge[#1](\j);}}
\newcommand{\NP}{{\sf NP}}
\newcommand{\cP}{{\sf P}}
\renewcommand{\H}{\ensuremath{\mathbb{H}}}
\newcommand{\app}{$\spadesuit$}
\newcommand{\LabVerts}[2][]{\foreach \i/\x/\y in {#2}{\draw (\x,\y) node[labvert] (#1\i){$\i$};}}
\newcommand{\Vlabel}[4][.3cm]{\draw node[empty, #3 of = #2, node distance = #1] () {$#4$};}
\newcommand{\Vlabels}[2][.3cm]{\foreach \ver/\pos/\lab in {#2}{\Vlabel[#1]{\ver}{\pos}{\lab};}}
\def\romannum{\begingroup
  \def\theenumi{\textup{(\roman{enumi})}}
  \def\p@enumi{}
  \def\labelenumi{\theenumi}
  \enumerate}
\title{Complexity Framework for Forbidden Subgraphs II:\\ Edge Subdivision and the ``H''-graphs} 
\titlerunning{Complexity Framework For Forbidden Subgraphs II}
\author{Vadim Lozin}{University of Warwick, Coventry, United Kingdom}{v.lozin@warwick.ac.uk}{}{}
\author{Barnaby Martin}{Durham University, Durham, United Kingdom}{barnaby.d.martin@durham.ac.uk}{}{}
\author{Sukanya Pandey}{Utrecht University, Utrecht, The Netherlands}{s.pandey1@uu.nl}{}{}
\author{Dani\"el Paulusma}{Durham University, Durham, United Kingdom}{daniel.paulusma@durham.ac.uk}{0000-0001-5945-9287}{}
\author{Mark Siggers}{Kyungpook National University, South Korea}{mhsiggers@knu.ac.kr}{}{}
\author{Siani Smith}{University of Bristol and Heilbronn Institute for Mathematical Research, Bristol, United Kingdom}{siani.smith@bristol.ac.uk}{}{}
\author{Erik Jan van Leeuwen}{Utrecht University, Utrecht, The Netherlands}{e.j.vanleeuwen@uu.nl}{0000-0001-5240-7257}{}
\authorrunning{V. Lozin et al.}
\keywords{forbidden subgraph, complexity dichotomy, edge subdivision, treewidth}
\begin{document}
\maketitle

\begin{abstract}
For a fixed set ${\cal H}$ of graphs, a graph $G$ is ${\cal H}$-subgraph-free if $G$ does not contain any $H \in {\cal H}$ as a (not necessarily induced) subgraph. A recently proposed framework gives a complete classification on ${\cal H}$-subgraph-free graphs (for finite sets ${\cal H}$) for problems that are solvable in polynomial time on graph classes of bounded treewidth, \NP-complete on subcubic graphs, and whose \NP-hardness is preserved under edge subdivision. While a lot of problems satisfy these conditions, there are also many problems that do not satisfy all three conditions and for which the complexity in ${\cal H}$-subgraph-free graphs is unknown.
We study problems for which only the first two conditions of the framework hold (they are solvable in polynomial time on classes of bounded treewidth and \NP-complete on subcubic graphs, but \NP-hardness is not preserved under edge subdivision). In particular, we make inroads into the classification of the complexity of four such problems: {\sc Hamilton Cycle}, {\sc $k$-Induced Disjoint Paths}, {\sc $C_5$-Colouring} and {\sc Star $3$-Colouring}. Although we do not complete the classifications, we show that the boundary between polynomial time and \NP-complete differs among our problems and also from problems that do satisfy all three conditions of the framework, in particular when
we forbid certain subdivisions of the ``H''-graph (the graph that looks like the letter ``H''). Hence, we exhibit a rich complexity landscape among problems for ${\cal H}$-subgraph-free graph classes.
\end{abstract}

\newpage

\section{Introduction}\label{s-intro}

Graph containment relations, such as the (topological) minor and induced subgraph relations, have been extensively studied both from a graph-structural and algorithmic point of view. In this paper, we focus on the {\it subgraph relation}.
If a graph $H$ can be obtained from a graph $G$ by a sequence of vertex deletions and edge deletions, then $G$ contains $H$ as a {\it subgraph}; otherwise, $G$ is \emph{$H$-subgraph-free}.
For a  set of graphs ${\cal H}$, a graph $G$ is {\it ${\cal H}$-subgraph-free} if $G$ is $H$-subgraph-free for every $H\in {\cal H}$; if ${\cal H}=\{H_1,\ldots,H_p\}$, then we also write that $G$ is $(H_1,\ldots,H_p)$-subgraph-free. 
Graph classes closed under deletion of edge and vertices are called {\it monotone}~\cite{ABKL07,BL02}, and every monotone graph class ${\cal G}$ can be characterized by a unique (and possibly infinite) set of forbidden induced subgraph ${\cal H}_{\cal G}$. We determine the complexity of two connectivity problems {\sc Hamilton Cycle} and {\sc $k$-Induced Disjoint Paths}, and two colouring problems {\sc $C_5$-Colouring} and {\sc Star $3$-Colouring} on ${\cal H}$-subgraph-free graphs for various families ${\cal H}$. We focus on families~${\cal H}$ consisting of certain {\it subdivided ``H''-graphs} ${\mathbb H}_i$, where $\mathbb{H}_1$ looks like the letter ``H'' (see Fig.~\ref{f-st} for the definition and illustration of the graphs ${\mathbb H}_i$). At first sight, these four problems appear to have not much in common, and the graphs ${\mathbb H}_i$ might also seem arbitrary. However,  these problems turn out to be well suited for a combined study, as they fit in a more general framework, in which the graphs $\mathbb{H}_i$ play a crucial role.

\smallskip
\noindent
{\bf Context.} If a graph problem is computationally hard, it is natural to restrict the input to some special graph class.
Ideally we would like to know exactly which properties $P$ such a graph class ${\cal G}$ must have such that any hard graph problem that satisfies some conditions $C$ becomes easy on graphs from ${\cal G}$ (here, the distinction between ``easy'' and ``hard'' could for example mean $\cP$ versus \NP-complete, or almost-linear versus at-least-quadratic). We first discuss some natural conditions $C$ a graph problem $\Pi$ might satisfy.

A graph is {\it subcubic} if every vertex has degree at most~$3$. For $p\geq 1$, the {\it $p$-subdivision} of an edge $e=uv$ of a graph~$G$ replaces $e$ by a path of $p+1$ edges with endpoints $u$ and~$v$. 
The {\it $p$-subdivision} of a graph~$G$ is the graph obtained from~$G$ after $p$-subdividing each edge; see also Fig.~\ref{f-st}.
For a graph class ${\cal G}$ and an integer~$p$, we let ${\cal G}^p$ be the class consisting of the $p$-subdivisions of the graphs in ${\cal G}$.
 A graph problem $\Pi$ is hard {\it under edge subdivision of subcubic graphs} if for every $j \geq 1$ there is an~$\ell \geq j$ such that:
if $\Pi$ is hard for the class ${\cal G}$ of subcubic graphs, then $\Pi$ is hard for ${\cal G}^{\ell}$.
We can now say that a graph problem~$\Pi$ has property:\\[-9pt]
\begin{itemize}
\item {C1} if $\Pi$ is easy for every graph class of bounded tree-width;\\[-12pt]
\item C2 if $\Pi$ is hard for subcubic graphs ($K_{1,4}$-subgraph-free graphs, where $K_{1,4}$ is the $5$-vertex star);\\[-12pt]
\item {C3} if $\Pi$ is  hard under edge subdivision of subcubic graphs;\\[-12pt]
\item {C4} if $\Pi$ is hard for planar graphs;\\[-12pt]
\item {C5} if $\Pi$ is hard for planar subcubic graphs.\\[-9pt]
\end{itemize}
We say that $\Pi$ is a C123-problem if it satisfies C1, C2 and C3, while $\Pi$ is a C14-problem if it satisfies C1 and C4, and so on. Classical results of Robertson and Seymour~\cite{RS86} yield the following two meta-classifications. 
For all sets~${\cal H}$, if ${\cal H}$ contains a planar graph, then every C14-problem $\Pi$ is easy on ${\cal H}$-minor-free graphs, or else $\Pi$ is hard. For all sets~${\cal H}$, if ${\cal H}$ contains a planar subcubic graph, then
every C15-problem~$\Pi$ is easy on ${\cal H}$-topological-minor-free graphs, or else $\Pi$ is hard. No meta-classification for the induced subgraph relation exists (apart from a limited one~\cite{JMOPPSV} that is a direct consequence of the treewidth dichotomy~\cite{LR22}). However, for the subgraph relation, known results on  {\sc Independent Set}~\cite{AK92}, {\sc Dominating Set}~\cite{AK92}, {\sc Long Path}~\cite{AK92}, {\sc Max-Cut}~\cite{Ka12} and {\sc List Colouring}~\cite{GP14}
for monotone graph classes that are {\it finitely defined} (so, where the associated set of forbidden subgraphs ${\cal H}$ is finite) were recently unified and extended in~\cite{JMOPPSV}. This led to a new meta-classification, where the set ${\cal S}$  consists of all graphs, in which every connected component is either a path or a subcubic tree with exactly one vertex of degree~$3$ (see Fig.~\ref{f-st}).

\begin{theorem}[\cite{JMOPPSV}]\label{t-dicho}
For any \emph{finite} set of graphs ${\cal H}$, a \emph{C123}-problem $\Pi$  is easy on ${\cal H}$-subgraph-free graphs if ${\cal H}$ contains a graph from ${\cal S}$, or else it is hard.
\end{theorem}

\noindent
In~\cite{JMOPPSV} a list of 25 C123-problems was given that include, apart from the five problems above, other well-known partitioning, covering, packing, network design, width parameter and distance metric problems.

\noindent
{\bf Our Focus.} Many graph problems  are not C123. See~\cite{BHKKOOZ20} and~\cite{BJMOPPSV} for partial complexity classifications of the C\hspace*{-1mm}$\not 1$23-problems
{\sc Subgraph Isomorphism} and {\sc Steiner Forest}, respectively, for $H$-subgraph-free graphs and~\cite{JMPPSV23} for partial complexity classifications of the C1\hspace*{-1mm}$\not 2$3-problems 
{\sc (Independent) Feedback Vertex Set}, {\sc Connected Vertex Cover}, {\sc Colouring} (see also~\cite{GPR15}) and {\sc Matching Cut} for $H$-subgraph-free graphs
(note that if a problem does not satisfy C2, then C3 is implied). Here, we consider the question:

\smallskip
\noindent
{\it Can we classify the complexity of C12\hspace*{-0.7mm}$\not 3$-problems (so that do not satisfy C3) on monotone graph classes?}

\begin{figure}[t]
\begin{minipage}[c]{0.48\textwidth}
\centering
\includegraphics[scale=0.7]{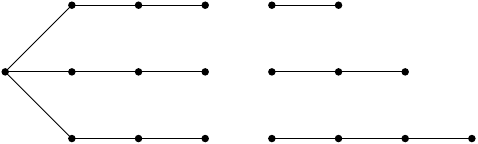}
\end{minipage}
\hspace{1.2cm}
\begin{minipage}[c]{0.42\textwidth}
\centering
\includegraphics[scale=0.4]{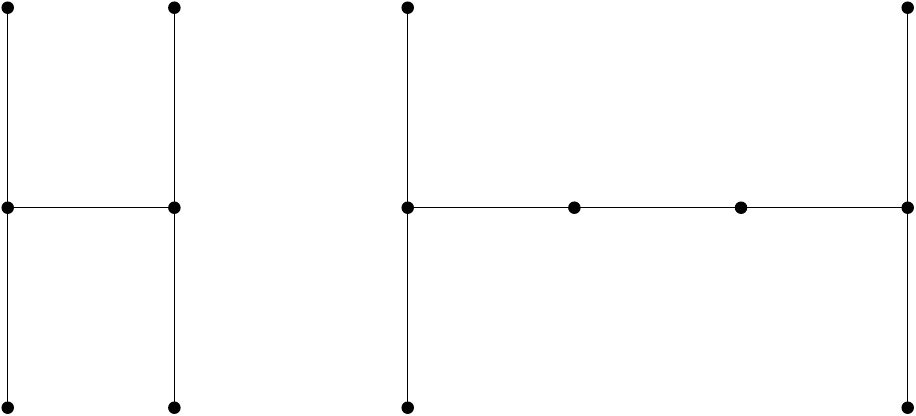}
\end{minipage}
\caption{\cite{BJMOPPSV} Left: A graph in ${\cal S}$: the graph $S_{3,3,3}+P_2+P_3+P_4$; 
note that $S_{3,3,3}$ is the $2$-subdivision of the claw $K_{1,3}$.
Right: the graphs $\mathbb{H}_1$ and $\mathbb{H}_3$; here, $\mathbb{H}_1$ is the ``H''-graph, formed by an edge (the {\it middle edge}) joining the middle vertices of two $P_3$s, and $\mathbb{H}_i$ ($i\geq 2$) is obtained from $\mathbb{H}_1$ by $(i-1)$-subdividing the middle edge.}\label{f-st}
\vspace*{-0.25cm}
\end{figure} 

\medskip
\noindent
{\bf Why the Graphs ${\mathbf{\H_i}}$.} 
All C1-problems are easy on ${\cal H}$-subgraph-free graphs if ${\cal H}$ has a graph from ${\cal S}$~\cite{RS84}.  
The infinite set ${\cal M}=\{C_3,C_4,\ldots,K_{1,4},\H_1,\H_2,\ldots\}$ of minimal graphs not in~${\cal S}$ is a maximal antichain in the poset of connected graphs under the subgraph relation.
Conditions C2 and C3 ensure that for every finite set ${\cal M}'$, C123-problems are hard on ${\cal M}'$-subgraph-free graphs  if ${\cal M}'\subseteq {\cal M}$.
If C3 is not satisfied, this is no longer guaranteed. Therefore, a natural starting point to answer our research question is to determine for which finite subsets ${\cal M}' \subseteq {\cal M}$, C12-problems are still easy on ${\cal M}'$-subgraph-free graphs.
So consider a C12-problem $\Pi$ that is not C3.  Let ${\cal M}'$ be a finite subset of ${\cal M}$. If ${\cal M}'=\{K_{1,4}\}$, then $\Pi$ is hard for  ${\cal M}'$-subgraph-free~graphs due to C2. Hence, ${\cal M}'$ must contain at least one $C_s$ or $\H_i$. Let the {\it girth} of a graph that is not a forest be the length of a shortest induced cycle in it. We say that $\Pi$ has property:\\[-9pt]
\begin{itemize}
\item C2' if for all $g\geq 4$, $\Pi$ is hard for subcubic graphs of girth $g$ ($(K_{1,4},C_3,\ldots,C_{g-1})$-subgraph-free graphs).\\[-9pt]
\end{itemize}
So if $\Pi$ is not only a C12-problem but even a C12'-problem, then $\Pi$ is hard on ${\cal M}'$-subgraph-free graphs unless ${\cal M'}$ contains some $\H_i$. This makes studying the graphs $\mathbb{H}_i$ even more pressing.

\medskip
\noindent
{\bf Our Testbed Problems.}
To address our research question, we take, as mentioned, four testbed problems: 

\smallskip
\noindent
{\sc Hamilton Cycle}. This  problem is to decide if a graph~$G$ has a {\it Hamiltonian cycle}, i.e., a cycle containing all vertices of $G$.
The problem is polynomial-time solvable for graphs of bounded treewidth~\cite{AP89} and \NP-complete for bipartite subcubic graphs of girth $g$, for every $g\geq 3$~\cite{ABKL07}.
Hence, it is even a C12'-problem.

\smallskip
\noindent
{\sc $k$-Induced Disjoint Paths}.  Given a graph $G$ and pairwise disjoint vertex pairs $(s_1, t_1), (s_2,t_2), \ldots (s_k, t_k)$ for some $k\geq 2$, this problem is to decide if $G$ has $k$ {\it mutually induced} $s_i$-$t_i$-paths $P^i$, i.e., $P^1,\ldots, P^k$ are pairwise vertex-disjoint and there are no edges between vertices from different $P^i$ and $P^j$.
The problem is polynomial-time solvable for graphs of bounded treewidth by Courcelle's Theorem~\cite{Co90} and \NP-complete for subcubic graphs for all $k\geq 2$~\cite{LLMT09}.
Hence, it is a C12-problem for all $k\geq 2$.

\smallskip
\noindent
 {\sc $C_5$-Colouring}. This problem is to decide if a graph $G$ has a {\it homomorphism} to the $5$-cycle $C_5$ (also called a {\it $C_5$-colouring}) which is a mapping $f:V(G)\to V(C_5)$ such that for every two vertices $u,v$ it holds that $f(u)f(v)\in E(C_5)$ whenever  $uv\in E(G)$. The problem is polynomial-time solvable for graphs of bounded treewidth~\cite{DP89} and \NP-complete for subcubic graphs~\cite{GHN00}. Hence, it is a C12-problem.
 
\smallskip
\noindent
{\sc Star $3$-Colouring}. This problem is to decide if a graph $G$ has a {\it star $3$-colouring}, which is a mapping $f:V(G)\to \{1,2,3\}$ such that for every $i$, the set $U_i$ of vertices of $G$ mapped to $i$ is independent (so, $f$ is a {\it $3$-colouring}) and $U_1\cup U_2$, $U_1\cup U_3$, $U_2\cup U_3$ all induce a disjoint union of stars. The problem is polynomial-time solvable for graphs of bounded treewidth due to Courcelle's Theorem~\cite{Co90} and \NP-complete for bipartite planar subcubic graphs of girth at least $g$, for every $g\geq 3$~\cite{SA22}. 
Hence, it is even a C12'-problem.

\medskip
\noindent
We do not know if {\sc $k$-Induced Disjoint Paths} and  {\sc $C_5$-Colouring} are C12', 
even though {\sc $C_5$-Colouring} is \NP-complete for graphs of maximum degree $6\cdot 5^{13}$ and girth at least $g$, for all $g\geq 3$ (see Appendix~\ref{a-degree}).

All four problems violate C3.
For $p\geq 3$,  {\sc $C_5$-Colouring} and {\sc Star $3$-Colouring} become 
 true 
under $p$-subdivision, while
 {\sc Hamilton Cycle} becomes 
 false 
 (unless we started with a cycle), and {\sc $k$-Induced Disjoint Paths} reduces to the polynomial-time solvable problem {\sc $k$-Disjoint Paths}~\cite{RS95,Sh80},
 which only requires the paths in a solution to be pairwise vertex-disjoint. See Appendix~\ref{a-c3}. We also note the following. First, when~$k$ is part of the input, {\sc Disjoint Paths} and {\sc Induced Disjoint Paths} are C123-problems \cite{JMOPPSV}.
Second, instead of $C_5$-{\sc Colouring} we could have considered $C_{2i+1}$-{\sc Colouring}, which is a C12-problem for all $i\geq 2$~\cite{DP89,GHN00}. Third, {\sc Star-$k$-Colouring} does not satisfy C2 for large $k$; see Appendix~\ref{a-star}.

\begin{figure}[t]
\vspace*{-0.6cm}
	\begin{center}
	\scalebox{.75}{
		\begin{picture}(150,40)
			\put(-10,5){\circle*{4}} 
			\put(20,5){\circle*{4}}
			\put(50,25){\circle*{4}}
			\put(50,5){\circle*{4}} 
			\put(80,5){\circle*{4}} 
			\put(20,25){\circle*{4}}
			\put(110,5){\circle*{4}}
			\put(140,5){\circle*{4}}
			\put(110,25){\circle*{4}}
			\put(110,5){\line(1,0){30}} 		
			\put(-8,5){\line(1,0){26}} 
			\put(22,5){\line(1,0){26}}
			\put(50,5){\line(1,0){30}} 
			\put(80,5){\line(1,0){30}}
			\put(50,5){\line(0,1){20}} 
			\put(20,5){\line(0,1){20}} 
			\put(110,5){\line(0,1){20}} 
		\end{picture}}		
	\end{center}
	\vspace*{-0.8cm}
	\caption{The tree $T$.}\label{fig:H12}
	\vspace*{-0.45cm}
\end{figure}

\smallskip
\noindent
{\bf Our Results.}
We show that the complexity of our four problems differ from each other and also from C123-problems, when we forbid certain graphs $\H_i$.
We first show that C1-problems, and thus C12-problems, are easy on $(\H_{\ell},\H_{\ell+1},\ldots)$-subgraph-free graphs for every $\ell\geq 1$ and on $(\H_i,\H_{2i},\H_{3i},\ldots)$-subgraph-free graphs for every  $i\geq 1$ (so, in particular if we forbid all even $\H_i$), as all these graph classes have bounded treewidth (see Appendix~\ref{a-h}).
In contrast, any hard problem for bipartite graphs in which one partition class has maximum degree~$2$ is hard on $(\H_1,\H_3,\ldots)$-subgraph-free graphs (so, if we forbid all odd $\H_i$): every path between vertices of degree at least~$3$ has even length. The \NP-hardness reduction in~\cite{ACKKR04} shows that  {\sc Star $3$-Colouring} is such a problem (see Appendix~\ref{a-al}).
We complement all these results as follows.

In Section~\ref{s-hc} we show that {\sc Hamilton Cycle} is polynomial-time solvable for $\H_\ell$-subgraph-free graphs for $\ell\in \{1,2,3\}$ by doing this for the superclass of $T$-subgraph-free graphs ($T$ is the tree shown in Figure~\ref{fig:H12}).

In Section~\ref{s-k} we prove that for all $k\geq 2$, {\sc $k$-Induced Disjoint Paths} is polynomial-time solvable~for $\H_\ell$-subgraph-free graphs for $\ell\in\{1,2\}$, but \NP-complete for subcubic $(\H_4,\ldots,\H_\ell)$-subgraph-free graphs for all~$\ell\geq 4$.
For the first result, we first apply the algorithm for {\sc $k$-Disjoint Paths}~\cite{RS95}. If this yields a solution that is not mutually induced, we apply a reduction rule and repeat the process on a smaller instance.
For the second result, we carefully adapt the proof of~\cite{LLMT09}  that shows that the problem of deciding if a subcubic graph contains an induced cycle between two given degree~$2$-vertices is~\NP-complete.

In Section~\ref{s-c5} we determine all  {\it $C_5$-critical} $\H_3$-subgraph-free graphs, which are not $C_5$-colourable unlike every proper subgraph of them. We show that this leads to a polynomial-time algorithm for $\H_3$-subgraph-free graphs that is even {\it certifying}. In contrast, the problem is \NP-complete for
the ``complementary'' class of $(\H_1,\H_2,\H_4,\H_5,\H_7,\H_8,\ldots)$-subgraph-free graphs (see Appendix~\ref{a-c5easy}).

In Section~\ref{s-star} we 
give a linear-time {\it certifying} algorithm for {\sc Star $3$-Colouring} on $(\H_2,\H_4,\H_6\ldots)$-subgraph-free graphs, after determining all star-$3$-colourable bipartite $(\H_2,\H_4,\H_6\ldots)$-subgraph-free~graphs.

Combining the above results yields the following four state-of-the-art summaries.\\[-5mm] 

\begin{theorem}\label{thm:main-hamilton}
{\sc Hamilton Cycle} is polynomial-time solvable for $(\H_{\ell},\H_{\ell+1},\ldots)$-subgraph-free graphs $(\ell\geq 1)$, for
$(\H_i,\H_{2i},\H_{3i},\ldots)$-subgraph-free graphs $(i\geq 1)$ and for $T$-subgraph-free graphs, and thus for $\H_\ell$-subgraph-free graphs $(\ell\in \{1,2,3\})$.\\[-5mm] 
\end{theorem}

\begin{theorem}\label{thm:main-kIDP}
For all $k\geq 2$, {\sc $k$-Induced Disjoint Paths} is polynomial-time solvable for $\H_\ell$-subgraph-free graphs  $(\ell\in\{1,2\})$, for $(\H_{\ell},\H_{\ell+1},\ldots)$-subgraph-free graphs $(\ell\geq 1)$ and for $(\H_i,\H_{2i},\H_{3i},\ldots)$-subgraph-free graphs $(i\geq 1)$, but \NP-complete for
subcubic 
 $(\H_4,\ldots,\H_\ell)$-subgraph-free graphs $(\ell\geq 4)$.\\[-5mm] 
\end{theorem}

\begin{theorem}\label{thm:main-C5-col}
{\sc $C_5$-Colouring} is polynomial-time solvable for $\H_3$-subgraph-free graphs, for $(\H_{\ell},\H_{\ell+1},\ldots)$-subgraph-free graphs $(\ell\geq 1)$ and for $(\H_i,\H_{2i},\H_{3i},\ldots)$-subgraph-free graphs $(i\geq 1)$,
but \NP-complete for $(\H_1,\H_2,\H_4,\H_5,\H_7,\H_8,\ldots)$-subgraph-free graphs.\\[-5mm] 
\end{theorem}

\begin{theorem}\label{thm:main-star}
{\sc Star $3$-Colouring} is polynomial-time solvable for $(\H_{\ell},\H_{\ell+1},\ldots)$-subgraph-free graphs $(\ell\geq 1)$ and $(\H_i,\H_{2i},\H_{3i},\ldots)$-subgraph-free graphs $(i\geq 1)$, but \NP-complete for  $(\H_1,\H_3,\H_5\ldots)$-subgraph-free~graphs.\\[-5mm] 
\end{theorem}

\noindent
We note that the complexity classifications above indeed differ except perhaps for {\sc Hamilton Cycle} and {\sc $k$-Induced Disjoint Paths}. Hence, Theorems~\ref{thm:main-hamilton}--\ref{thm:main-star} give clear evidence of a rich landscape for C12-problems on ${\cal H}$-subgraph-free graphs.
In Section~\ref{s-con} we discuss open problems resulting from our study.

\section{Hamilton Cycle}\label{s-hc}

Before proving Theorem~\ref{t-hhh} we note that there are trees $T^*$ for which {\sc Hamilton Cycle} is \NP-complete over $T^*$-subgraph-free graphs. For example, take a complete binary tree with depth $3$ (four layers), or any graph that is not a caterpillar with hairs of arbitrary length (see Section 3.2 in \cite{KLMT11}). Proofs of missing claims in the proof of Theorem~\ref{t-hhh} are in Appendix~\ref{a-hhh} (throughout our paper, proofs of claims, lemmas and theorems marked with an \app\ are in the appendix).

\begin{theorem}\label{t-hhh}
	{\sc Hamilton Cycle} is polynomial-time solvable for $T$-subgraph-free graphs.
\end{theorem}

\begin{proof}
	Let $G$ be a $T$-subgraph-free graph. We call vertices of degree $2$ in $G$ {\it white}
	and vertices of degree at least $3$ {\it black}. The {\it black graph} is a subgraph of $G$ induced by black
	vertices and a {\it black component} is a connected component in the black graph.
	
	We first describe some helpful rules to solve the problem and 
	a set of reductions simplifying the input graph, i.e. reductions transforming $G$ into 
	a graph $G'$ that has fewer edges and/or vertices and that has a Hamiltonian cycle if and only if $G$ has. 
	We emphasize that by deleting an edge or a vertex from an $H$-subgraph-free graph, we obtain 
	an $H$-subgraph-free graph again.
	We start with some obvious rules:\\[-10pt]
	\begin{itemize}
		\item[(R1)] if the graph has vertices of degree $0$ or $1$, then stop: $G$ has no Hamiltonian cycle.
		\item[(R2)] if the graph contains a vertex adjacent to more than two white vertices, 
		then stop: $G$ has no Hamiltonian cycle.
		\item[(R3)] if the graph is disconnected, then stop: $G$ has no Hamiltonian cycle.
		\item[(R4)] if the graph contains a vertex $v$ adjacent to exactly two white vertices,
		then delete the edges connecting $v$ to all other its neighbours (if there are any).\\[-10pt]
	\end{itemize} 	
Now we introduce a reduction applicable to a graph $G$ containing an induced 
subgraph shown on the left in Figure~\ref{fig:diamond}, in which vertices $a,b,c$
have degree $3$ in $G$. The reduction depends on the degree of $x$. If 
the degree of $x$ is also $3$, the reduction consists in deleting
the edges $ab$ and $xc$. Otherwise, it transforms the graph as shown
in Figure~\ref{fig:diamond}. We refer to this reduction as 
the {\it diamond reduction} and denote it by~(R5).  
	
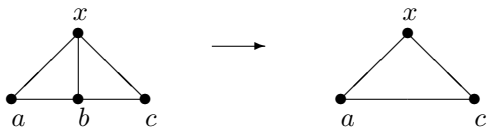
\begin{figure}[ht]
	\begin{center}
		\begin{picture}(120,50)
			\put(25,5){\circle*{4}}
			\put(50,30){\circle*{4}}
			\put(50,5){\circle*{4}} 
			\put(75,5){\circle*{4}}
			
			\put(25,5){\line(1,0){25}}
			\put(50,5){\line(1,0){25}}
			
			\put(50,30){\line(0,-1){25}}
			\put(50,30){\line(-1,-1){25}}
			\put(50,30){\line(1,-1){25}}
			
			\put(100,25){\vector(1,0){20}} 
			\put(25,-5){$a$}
			\put(50,-5){$b$}
			\put(75,-5){$c$}
			\put(48,35){$x$}
		\end{picture}
		\begin{picture}(100,50)
			\put(25,5){\circle*{4}}
			\put(50,30){\circle*{4}}
			\put(75,5){\circle*{4}}
			
			\put(25,5){\line(1,0){25}}
			\put(50,5){\line(1,0){25}}
			
			\put(50,30){\line(-1,-1){25}}
			\put(50,30){\line(1,-1){25}}
			
			\put(25,-5){$a$}
			\put(75,-5){$c$}
			\put(48,35){$x$}\end{picture}
	\end{center}
	\caption{The diamond reduction: it is applicable to a graph $G$ containing 
		an induced subgraph shown on the left, in which vertices $a,b,c$ have degree $3$ 
		in $G$. If the degree of $x$ is also $3$, the reduction consists in deleting
		the edges $ab$ and $xc$. Otherwise, the reduction consists in deleting vertex $b$
		and introducing the edge $ac$.}
	\label{fig:diamond}
\end{figure}

\begin{claim}
	Let $G'$ be a graph obtained from $G$ by the diamond reduction. Then $G$ has a Hamiltonian cycle
	if and only if $G'$ has a Hamiltonian cycle. Moreover, if $G$ is $T$-subgraph-free, then so is $G'$.
\end{claim}

\begin{proof}
Assume first that the degree of $x$ is $3$ in $G$, in which case the reduction 
consists in deleting the edges $ab$ and $xc$. Clearly, if $G'$ has a Hamiltonian
cycle, then the same set of edges forms a Hamiltonian cycle in $G$. Conversely, 
assume $G$ has a Hamiltonian cycle $C$. It is not difficult to see that either 
the edges $ax$, $xb$, $bc$ belong to $C$ or the edges $ab$, $xb$, $xc$ belong 
to $C$. Since vertices $x$ and $b$ do not have neighbours different from $a$ and $c$,
any of the two combinations can be chosen to create a Hamiltonian cycle. Therefore, 
by deleting the edges $ab$ and $xc$ we transform $G$ into a graph with 
a Hamiltonian cycle. 

Now assume that the degree of $x$ is more than $3$, in which case the reduction 
is illustrated in Figure~\ref{fig:diamond}.
Suppose first that $G$ has a Hamiltonian cycle $C$. Since the degree of $b$ is $3$,
at least one of the edges $ab$ and $bc$ belong to $C$. 
If both of them belong to $C$, then by replacing 
these edges with $ac$ we obtain a Hamiltonian cycle in $G'$. 
Otherwise, without loss of generality, $ab\in C$ and $bc\not\in C$. 
Then $bx\in C$ and $cx\in C$ (since the degree of $b$ and $c$ is $3$),
and hence by replacing the edges $ab$ and $bx$ with the edge $ax$ we 
obtain a Hamiltonian cycle in $G'$.

Now let $G'$ have a Hamiltonian cycle $C$. If the edge $ac$ belongs to $C$, 
then by replacing this edge with the edges $ab$ and $bc$ we obtain 
a Hamiltonian cycle in $G$. Otherwise, both edges incident to 
$a$ and both edges incident to $c$ belong to $C$. Therefore, 
by replacing $ax$ with the edges $ab$ and $bx$
we obtain a Hamiltonian cycle in $G$.

Now we show that the transformation represented in Figure~\ref{fig:diamond}
preserves $T$-subgraph-freeness. Assume to the contrary that $G'$ contains 
a copy of $T$ as a subgraph. Then this copy contains the edge $ac$, since otherwise
$G$ contains $T$ as a subgraph. If $a$ (or $c$) is a leaf of $T$, then by replacing
it with $b$ we obtain a copy of $T$ in $G$. Therefore, without loss of generality,
$a$ has degree $3$ in $T$ and $c$ has degree $2$ in $T$. If $a$ represents 
the rightmost vertex of degree $3$ in Figure~\ref{fig:H12}, then by replacing in $T$ 
the three edges incident to $a$ with the three edges incident to $b$ we obtain 
a copy of $T$ in $G$. 

Now assume that $a$ represents the middle vertex of degree~$3$ in
Figure~\ref{fig:H12}. We denote by $u_0$ the neighbour of $a$ of degree~$3$ in $T$
and by $u_1,u_2$ the neighbours of $u_0$ of degree~$1$. Also, let $v_0$ be 
the neighbour of $c$ of degree $3$ in $T$ (different from $a$) and let $v_1,v_2$ be
the neighbours of $v_0$ of degree $1$. Since the degree of $x$ in $G$ is more than
$3$, we may consider a vertex $y\not\in\{a,b,c\}$ adjacent to $x$.
If $y=u_0$, then the edges $u_0u_1$, $u_0u_2$, $u_0x$, $xc$, $cb$, $cv_0$, $v_0v_1$,
$v_0v_2$ form a forbidden subgraph in $G$. If $y=u_1$, then the edges  
$au_0$, $ab$, $ax$,  $xu_1$, $xc$, $cv_0$, $v_0v_1$,
$v_0v_2$ form a forbidden subgraph in $G$. By symmetry, we also conclude that 
$y\ne u_2,v_0,v_1,v_2$. But then the edges $u_0u_1$, $u_0u_2$, $u_0a$, $ax$, $xy$,
$xc$, $cb$, $cv_0$ form a forbidden subgraph in $G$. 
This contradiction completes the proof.
\end{proof}

One more reduction is illustrated in Figure~\ref{fig:butterfly}. 
We will refer to this reduction as the {\it butterfly reduction} and 
will denote it by (R6). 

\begin{figure}[ht]
	\begin{center}
		\begin{picture}(100,50)
			\setlength{\unitlength}{0.5mm}
			\put(5,5){\circle*{4}}
			\put(25,5){\circle*{4}}
			\put(15,15){\circle*{4}}
			\put(5,25){\circle*{4}} 
			\put(25,25){\circle*{4}}
			
			\put(25,5){\line(-1,1){10}}
			\put(25,5){\line(0,1){20}}

			\put(5,5){\line(0,1){20}}
			\put(5,5){\line(1,1){10}}
			
			\put(15,15){\line(-1,1){10}}
			\put(15,15){\line(1,1){10}}
			\put(-3,22){$a$}
			\put(-3,3){$b$}
			\put(28,3){$c$}
			\put(28,22){$d$}
			\put(13,7){$x$}
			
			\put(40,15){\vector(1,0){20}} 
		\end{picture}
		\begin{picture}(50,50)
			\setlength{\unitlength}{0.5mm}
			\put(5,5){\circle*{4}}
			\put(25,5){\circle*{4}}
			\put(15,15){\circle*{4}}
			\put(5,25){\circle*{4}} 
			\put(25,25){\circle*{4}}
			
			\put(25,5){\line(-1,1){10}}
			\put(25,5){\line(0,1){20}}

			\put(5,5){\line(0,1){20}}
			
			\put(15,15){\line(1,1){10}}
			
			\put(-3,22){$a$}
			\put(-3,3){$b$}
			\put(28,3){$c$}
			\put(28,22){$d$}
			\put(13,7){$x$}
		\end{picture}
	\end{center}
	\caption{The butterfly reduction: it is applicable to a graph $G$ containing 
		an induced subgraph shown on the left, in which vertices $a,b,c$ have degree 
		$3$ in $G$, and moreover, $a$ and $b$ have white neighbours.}
	\label{fig:butterfly}
\end{figure}
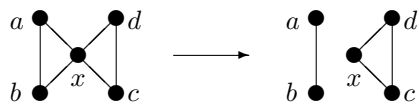

\begin{claim}[\app]\label{c-butterfly}
	Let $G'$ be a graph obtained from $G$ by the butterfly reduction. 
	Then $G$ has a Hamiltonian cycle if and only if $G'$ has a Hamiltonian cycle. 
\end{claim}

\noindent
In our algorithm to solve the problem we implement the above rules and reductions
whenever they are applicable. 
We now develop more reductions allowing us to bound the number of vertices 
in black components. We assume that none of the above rules and reductions is applicable to $G$.

\begin{claim}[\app]\label{claim:ld}
	Let $x$ be a vertex of degree at least $13$. If the neighbourhood of $x$ does not 
	contain two adjacent vertices of degree $3$, then $G$ has no Hamiltonian cycle. 
	Otherwise, $G$ has a Hamiltonian cycle if and only if $G-x$ has.
\end{claim}

\noindent
Application of Claim~\ref{claim:ld} to vertices of large degree either shows that $G$ has no Hamiltonian 
cycle or reduces the input graph to a graph of maximum degree $12$. We will refer to this reduction as 
the {\it large degree reduction} and will denote it by (R7).

\begin{claim}\label{claim:2}
	The black graph has no induced paths of length $8$.
\end{claim}

\begin{proof}
Assume by contradiction that $P=(u_0,u_1,\ldots,u_{8})$ is an induced path consisting of black vertices.
By definition, every vertex of $P$ has a neighbour outside of $P$. 
\begin{itemize}
	\item[($\star$)] If a vertex $v\notin P$ is adjacent to vertices $u_i$ and $u_j$ $(1\le i<j\le 7)$,
	then $j-i\le 2$. 
	
	To show this, assume without loss of generality, that $i=1$ and $j=4$ (for 
	other values of $i$ and $j$ the proof is similar). Then any neighbour $w\notin P$ of $u_5$ 
	must coincide with $v$, since otherwise the edges $u_0u_1$, $u_1u_2$,  $u_3u_4$, $u_4u_5$,
	$u_5u_6$, $u_1v$, $u_4v$, $u_5w$ form a forbidden subgraph. This contradiction shows that $v$ is the only neighbour of $u_5$ outside of $P$, and similarly, $v$ is the only neighbour of $u_6$ and $u_7$ outside of $P$. But then the diamond reduction is applicable to the subgraph induced by $v,u_5,u_6,u_7$.  
\end{itemize}
From ($\star$) we conclude that $u_1,u_4,u_7$ have pairwise different neighbours outside of $P$. We
denote these neighbours by $v_1,v_4,v_7$, respectively. 
\begin{itemize}
	\item[($\star$$\star$)] Vertex $u_2$ has exactly one neighbour outside of $P$, which is either $v_1$ or $v_4$.
	Moreover, if $u_2$ is adjacent to $v_1$ (respectively, $v_4$), then $v_1$ (respectively, $v_4$) 
	is the unique neighbour of $u_1$ (respectively, $u_4$) outside of $P$.
	
	To prove this, assume a neighbour $w\notin P$ of $u_2$ is different from $v_1$ and $v_4$. 
	Then the edges $u_0u_1$, $u_1u_2$, $u_2u_3$, $u_3u_4$, $u_4u_5$,
	$u_1v_1$, $u_4v_4$, $u_2w$ form a forbidden subgraph. This proves, in particular, 
	the second sentence of ($\star$$\star$).
	
	Also, if $u_2$ is adjacent both to $v_1$ and $v_4$, then a neighbour $w\notin P$ of $u_5$ must be different 
	from $v_1$ and $v_4$ (by ($\star$)), in which case the edges  $u_1u_2$, $u_3u_4$, $u_4u_5$, $u_5u_6$,
	$u_2v_1$, $u_2v_4$, $u_4v_4$, $u_5w$ form a forbidden subgraph. 
	
	By symmetry, similar conclusions apply to vertices $u_3$, $u_5$, $u_6$.
\end{itemize}

Assume $u_2$ is adjacent to $v_4$. Then, according  ($\star$),
$u_5$ and $u_6$ are not adjacent to $v_4$ and hence by ($\star$$\star$) 
$v_7$ is the only neighbour of $u_5$, $u_6$ and $u_7$
outside of $P$. Therefore, the diamond reduction is applicable to 
the subgraph induced by $v_7,u_5,u_6,u_7$. 
This contradiction shows that $v_1$ is the only neighbour of $u_2$ outside of $P$. 
Similarly, $v_7$ is the only neighbour of $u_6$ outside of $P$. 

Now the diamond reduction is applicable either to the subgraph induced by $v_1,u_1,u_2,u_3$ 
(if $u_3$ is adjacent to $v_1$) or to the subgraph induced by $v_7,u_5,u_6,u_7$ (if $u_5$ is adjacent
to $v_7$) or to the subgraph induced by $v_4,u_3,u_4,u_5$ (if $u_3$ is adjacent
to $v_4$ and $u_5$ is adjacent to $v_4$).        
\end{proof}

\noindent
Since graphs of diameter $D$ and maximum degree $\Delta$ have fewer than $\frac{\Delta}{\Delta-2}(\Delta -1)^D$
vertices, we conclude that after eliminating vertices of large degree, every black component has fewer than
$\frac{12}{10}11^7$ vertices.

To develop more rules and reductions, assume $G$ has a Hamiltonian cycle $C$.
We can further assume that not all vertices of the graph are black, since otherwise
the graph contains fewer than $\frac{12}{10}11^7$ vertices, in which case
we can solve the problem by brute-force. 
A sequence of consecutive vertices of $C$ surrounded by white vertices will be 
called a {\it black interval}. Observe that 
each black interval consists of at least $2$ vertices (according to (R4)).

Let $K$ be a black component of $G$. We will call the vertices of $K$ 
that have white neighbours the {\it contact vertices}. Note that $K$ 
may consists of one or more intervals. Each interval gives rise to exactly two contact vertices,
and hence the number of contact vertices in $K$ is even. 

Let us show that for $T$-subgraph-free graphs, then number of intervals is at most $2$.

\begin{claim}\label{l-11}
	Any black component consists of at most two intervals.   
\end{claim}

\begin{proof}
	 Assume that a black component $K$ has more than two intervals. 
	Then $K$ must contain an interval $I_1$ the vertices of which have neighbours 
	in two other intervals, say in $I_2$ and $I_3$. 
	
	We denote $I_1=(a_1,\ldots,a_{k_1})$, $I_2=(b_1,\ldots,b_{k_2})$,
	$I_3=(c_1,\ldots,c_{k_3})$ ($k_1,k_2,k_3\ge 2$)
	and let $a_ib_k$ and $a_jc_{\ell}$ be two chords. Without loss of generality we 
	assume that $i\le j$ and that $j-i$ is as small as possible. Also, we assume 
	that if we move along $C$ from $I_1$ in the anticlockwise direction, then
	first we meet $I_2$ and then $I_3$. Observe that $a_{i}$ and $b_k$ are separated 
	by at least two vertices of $C$, since otherwise the only vertex of $C$ 
	separating $a_i$ and $b_k$ must be white, in which case the edge $a_ib_k$ 
	cannot belong to any Hamiltonian cycle and hence can be removed from the graph.
	
	\medskip
	{\it Case} 1: $j=i+1$. If $b_k$ and $c_{\ell}$ are separated by 
	at least two vertices of $C$, then the two edges of $C$ incident to $b_k$,  
	the two edges of $C$ incident to $c_{\ell}$, the two edges of $C$ incident to $a_{i}$
	together with the chords $a_ib_k$ and $a_jc_{\ell}$ form a forbidden subgraph. 
	
	Now assume that $b_k$ and $c_{\ell}$ are separated by a single (white) vertex 
	$x$ of $C$, say $k=1$ and $\ell=1$ and $x$ is adjacent to both $b_1$ and $c_1$.  
	If a chord incident to $b_2$
	connects it to a vertex not in $\{a_{i},a_{i+1},c_1,c_2\}$, then
	the two edges of $C$ incident to $b_2$,  
	the two edges of $C$ incident to $x$ together with the edges $c_1c_2$, $a_ib_1$, 
	$a_{i+1}c_1$ and the chord incident to $b_2$ form a forbidden subgraph.
	If any chord incident to $b_2$ connects it to a vertex in $\{a_{i},a_{i+1},c_1,c_2\}$,
	and by symmetry any chord incident to $c_2$ connects it to a vertex in
	$\{a_{i},a_{i+1},b_1,b_2\}$, then we are in the conditions of 
	the first paragraph of Case 1.
	
	\medskip
	{\it Case} 2: $i=j$. Assume without loss of generality that $b_{k+1}$ is black.
	To avoid Case~1, $b_{k+1}$ has no neighbours in $I_3$ and 
	is not adjacent to the neighbours of $a_i$ on $C$. If additionally $b_{k+1}$ is not
	adjacent to $a_i$, then the two edges of $C$ incident to $b_{k+1}$, 
	the  two edges of $C$ incident to $c_{\ell}$ together with the edges $a_ib_k$,
	$a_ic_{\ell}$, an edge of $C$ incident to $a_i$ and a chord incident to $b_{k+1}$
	form a forbidden subgraph (if $b_{k+1}$ and $c_{\ell}$ have a common (white)
	neighbour in $C$, then replace the edge $a_ib_k$ with the second edge of $C$ 
	incident to $a_i$). This contradiction leads us to the conclusion that 
	$a_ib_{k+1}$ is the only chord incident to $b_{k+1}$, and by symmetry $a_ib_{k}$ 
	is the only chord incident to $b_{k}$, i.e.
	$b_k$ and $b_{k+1}$ have degree $3$ in $G$. If $I_2$ has more than $2$ vertices,
	say $b_{k+2}$ is black, then for the same reason $a_ib_{k+2}$ is the only chord
	incident to $b_{k+2}$, in which case the diamond reduction is applicable to 
	the subgraph induced by $a_i,b_k,b_{k+1},b_{k+2}$. Thus, we conclude that 
	$I_2$ consists of exactly two vertices both of which are adjacent to $a_i$. 
	By symmetry $I_3$ consists of exactly two vertices both of which are adjacent 
	to $a_i$. But now the butterfly reduction is applicable to $G$.
	
	\medskip
	{\it Case} 3: $i+1<j$. Since $j-i$ is minimal possible, 
	the vertices of $I_1$ between $a_i$ and $a_j$ have no neighbours in $I_2$ or $I_3$.
	Also, to avoid Case~2, $a_i$ has no neighbours in $I_3$, 
	while $a_j$ has no neighbours in $I_2$.
	As before, without loss of generality we assume that $b_{k+1}$ is black. 
	
	\medskip
	{\it Case} 3.1: there is a chord $b_{k+1}x$ with $x\ne a_i$. 
	We also consider a chord $a_{i+1}y$ and a chord $a_{i+2}z$. 
	Note that $y$ belongs to $I_1$ (to avoid Case~1).
	
	First, we observe that $z\in \{a_{i-1}, a_{i}\}$, since otherwise 
	the four edges of $C$ incident to $a_i$, $a_{i+1}$, $a_{i+2}$, 
	the two edges of $C$ incident to $b_k$ together with the chords $a_ib_k$ 
	and $a_{i+2}z$ form a forbidden subgraph. 
		
	Next, we note that $x\in \{b_{k-1}, a_{i-1}\}$. To show this, assume 
	$x\not\in \{b_{k-1}, a_{i-1}\}$. If $x\ne y$,
	then the three edges of $C$ incident to $b_k$ 
	and $b_{k+1}$, the two edges of $C$ incident to $a_{i+1}$ together with the chords
	$a_ib_k$, $b_{k+1}x$, $a_{i+1}y$ form a forbidden subgraph.
	If $x=y$, then $x\in I_1-\{a_i,a_{i+1},\ldots,a_j\}$ (according to the above arguments).
	Assume $x$ belongs to the path of $C$ connecting $a_i$ to $b_k$ and avoiding $I_3$
	(the other case is similar). Also let $u$ be the neighbour of $x$ on $C$
	between $x$ and $b_k$ (such a vertex different from $b_k, b_{k+1}$ must exist,
	because $x$ and $b_k$ belong to different intervals; moreover, $ub_{k+1}$ is not
	an edge of $C$, since otherwise $u$ is white, in which case
	the edge $b_{k+1}x$ cannot belong to any Hamiltonian cycle and hence can be removed 
	from graph). Then the two edges of $C$
	incident to $b_{k+1}$, the two edges of $C$ incident to $a_{i+2}$, together 
	with the edges $xu$, $a_{i+1}x$, $b_{k+1}x$ and $a_{i+2}z$ form a forbidden subgraph
	(remember that $z\in \{a_{i-1}, a_{i}\}$ and $x\ne a_{i-1}$).

	Finally, we conclude that $y=a_{i-1}$, since otherwise the two edges of $C$ 
	incident to $a_{i+1}$, the two edges of $C$ incident to $b_{k+1}$ together with 
	the chords $a_ib_k$, $a_{i+1}y$, $a_{b+1}x$ and the edge $a_{i-1}a_i$ 
	(if $x=b_{k-1}$) or the edge $b_{k-1}b_k$ (if $x=a_{i-1}$) form a forbidden subgraph.

	Assume $x=b_{k-1}$.  If $z=a_i$, then
	the two edges of $C$ incident to $a_{i-1}$, the two edges of $C$ incident 
	to $b_{k+1}$ (or to $b_{k-1}$ if $b_{k+1}$ is closer to $I_1$ than $b_{k-1}$)
	together with the chords $a_{i+1}a_{i-1}$, $a_{i+2}a_{i}$, $b_{k+1}b_{k-1}$, $a_ib_k$
	form a forbidden subgraph. If $z=a_{i-1}$, then  
	the two edges of $C$ incident to $a_{i-1}$, the two edges of $C$ incident to
	$a_{i+2}$, the two edges of $C$ incident to $b_k$ together with the chords $a_ib_k$
	and $a_{i+2}a_{i-1}$ form a forbidden subgraph (observe that since $a_{i-1}$,  
	$b_{k-1}$, $b_{k+1}$ are black, there is at least one white vertex separating 
	$a_{i-1}$ from $b_{k-1}$, $b_{k+1}$). 
	
	Suppose at last that $x=a_{i-1}$. 
	Let $u$ be the neighbour of $a_{i-1}$ on $C$ different from $a_i$, 
	and let $v$ be the neighbour of $b_{k+1}$ on $C$ different from $b_k$.
	Note that $u\ne v$, since otherwise $u=v$ is white, in which case 
	the edge $b_{k+1}a_{i-1}$ does not belong to any Hamiltonian cycle and 
	hence can be removed from the graph. If $z=a_i$, then the two edges of $C$ 
	incident to $a_{i+2}$, the two edges of $C$ incident to $b_{k+1}$
	together with the chords $a_{i+1}a_{i-1}$, $a_{i+2}a_{i}$, $a_{i-1}b_{k+1}$ and 
	the edge $a_{i-1}u$ form a forbidden subgraph. We are left with the case
	when $a_{i+2}a_{i-1}$ is the only chord incident to $a_{i+2}$. In this case we
	can consider vertex $a_{i+3}$ and a chord $f$ incident to $a_{i+3}$. 
	If $f\ne a_{i+3}a_{i-1}$, then the three edges of $C$ incident to $a_{i+2}$ and
	$a_{i+3}$, the two edges of $C$ incident to $b_{k+1}$ together with the chords 
	$a_{i-1}b_{k+1}$, $a_{i-1}a_{i+2}$ and $f$ form a forbidden subgraph. If 
	$a_{i+3}a_{i-1}$ is the only chord incident to $a_{i+3}$, then the diamond reduction
	is applicable to the subgraph of $G$ induced by $a_{i-1},a_{i+1},a_{i+2},a_{i+3}$.

	\medskip
	{\it Case} 3.2: the chord $b_{k+1}a_i$ is the only chord incident to $b_{k+1}$.
	By symmetry, $b_{k}a_i$ is the only chord incident to $b_{k}$. To avoid 
	the diamond reduction and Case~3.1, we conclude that $I_2$ consists of exactly
	two vertices both of which are adjacent to $a_i$. 
	
	If $a_{i+1}$ is incident to a chord $a_{i+1}y$ with $y\ne a_{i-1}$, 
	then the two edges of $C$ incident to $y$, the two edges of $C$ incident to $b_{k+1}$
	together with the edges $a_ia_{i+1}$, $a_ib_{k+1}$, $a_{i+1}y$ and one of the edges
	$a_{i+1}a_{i+2}$ or $a_{i-1}a_{i}$ (depending on whether $y$ appears in $I_1$ 
	before or after $a_i$) form a forbidden subgraph. 
	
	If $a_{i+1}a_{i-1}$ is the only chord incident to $a_{i+1}$,
	then the butterfly reduction is applicable to the subgraph of $G$ induced 
	by the two vertices of $I_2$ and the three vertices $a_{i-1},a_i,a_{i+1}$ of $I_1$.
\end{proof}

\noindent
By Claim~\ref{l-11}, if $G$ has a Hamiltonian cycle, then every black component 
has two or four contact vertices.

\begin{itemize}
	\item[(R8)]  If a black component $K$ has exactly two contact vertices, check
	if $K$ has a Hamiltonian path connecting the contact vertices. If such a path does not exist, 
	then stop: the input graph has no Hamiltonian cycle. Otherwise, choose arbitrarily a Hamiltonian path connecting the contact vertices, include the edges of the path in the solution and delete all other edges from $K$. 
\end{itemize}

\noindent
Rule (R8) destroys black components with two contact vertices, i.e. after 
its implementation all vertices in such components become white. 

Now we discuss the case where each black component has exactly four contact vertices.
Let $K$ be such a component with contact vertices $v_1,v_2,v_3,v_4$. If $G$ has a Hamiltonian cycle, 
then the vertices of $K$ can be partitioned into two parts each of which forms a path connecting 
a pair of contact vertices. We will call such a partition a {\it pairing} (of contact vertices) 
and will refer to a pairing as the set of edges in the two paths. Also, we will say that two 
pairings are of the same type, if they pair the contact vertices in the same way. Clearly, 
if all possible pairings in $K$ have the same type, then it is irrelevant which one to choose,
since non-contact vertices of $K$ have no neighbours outside of $K$.

The above discussion justifies the following two rules.

\begin{itemize}
	\item[(R9)] If a black component $K$ with four contact vertices does not admit any pairing, 
	then stop: the input graph has no Hamiltonian cycle.
	\item[(R10)] If in a black component $K$ with four contact vertices all possible pairings have the same 
	type, then choose arbitrarily any such pairing and delete all other edges from $K$.
	If this procedure disconnects the graph, then stop: the input graph has no Hamiltonian cycle.
\end{itemize}

Finally, we analyse the situation when each black component of $G$ admits pairings of at least two
different types. 

\begin{claim}\label{lem:key}
	If 
	each black component of the (connected) graph $G$ admits pairings of at least two
	different types, then $G$ has a Hamiltonian cycle.
\end{claim}

\begin{proof}
	Let $K$ be a black component with contact vertices $v_1,v_2,v_3,v_4$ and 
	let $B$ and $R$ be 
	two pairings of different types, say $B$ pairs $v_1$ with $v_2$ and $v_3$ with $v_4$,
	while $R$ pairs $v_1$ with $v_3$ and $v_2$ with $v_4$. Assume that 
	\begin{itemize}
		\item[(1)] the deletion of all edges of $K$ except for the edges of $B$ disconnects the graph into two 
		components $C_{12}$ (containing vertices $v_1$ and $v_2$) and $C_{34}$ (containing vertices $v_3$ and $v_4$),
		and
		\item[(2)] the deletion of all edges of $K$ except for the edges of $R$ disconnects the graph into two 
		components $C_{13}$ (containing vertices $v_1$ and $v_3$) and $C_{24}$ (containing vertices $v_2$ and $v_4$).
	\end{itemize}
	Note that (1) separates $v_1$ from $v_3$ and $v_4$, while (2) separates $v_1$ from $v_4$. Therefore, 
	after the deletion of {\it all} edges of $K$ vertex $v_1$ is separated from all other contact vertices.
	In other words, after the deletion of all edges of $K$, vertices $v_1,v_2,v_3,v_4$ belong to pairwise 
	different connected components, say $V_1,V_2,V_3,V_4$, respectively. 
	
	We observe that in each connected component $V_i$ vertex $v_i$ has degree $1$ 
	(it is adjacent to a white vertex only).
	Any other vertex of odd degree in $V_i$ (if there is any) is black, i.e. appears in some black component $K'$.
	In the graph $G[K']$ the number of odd vertices is even (by the Handshake lemma). 
	Attaching to $G[K']$ four white neighbours changes the parity of exactly four vertices of $K'$ and 
	hence leaves the number of vertices of $K'$ with odd degrees in the graph $G$ even. 
	Since all vertices of $K'$ belong to only one of the components $V_i$, we conclude 
	that in each component $V_i$ the number of vertices of odd degree is odd. This is not possible by 
	the Handshake lemma and hence either (1) or (2) is not valid, 
	i.e. we can keep one of the pairings  and delete all other edges of $K$ without disconnecting $G$.
	This operation destroys $K$, i.e. makes all vertices of $K$ white. 
	
	Applying the above arguments to all black components, one by one, we transform $G$ into a connected graph
	in which all vertices are white, i.e. to a Hamiltonian cycle. 
\end{proof}

We summarize the discussion in the following algorithm to solve the problem.

\begin{itemize}
	\item[1.] Apply rules and reductions (R1) -- (R7)
	as long as they are applicable. 
	
	\item[2.] If the algorithm did not stop at Step 1 and the graph has fewer than $\frac{12}{10}11^7$ vertices, then solve the problem by brute-force. Otherwise, 
	check the number of contact vertices in black components. If there is a black component with the number of contact vertices different from $2$ or $4$, then stop: 
	$G$ has no Hamiltonian cycle.
	
	\item[3.] If the algorithm did not stop at Step 2,  then apply (R8) to 
	black components with two contact vertices, and (R9) and (R10) to black components with four contact vertices.
	\item[4.] If the algorithm did not stop at Step 3, then find a Hamiltonian cycle 
	according to Claim~\ref{lem:key}.
\end{itemize}

Reductions (R8), (R9), (R10) can be implemented in constant time, because the number of 
vertices in each black component is bounded by a constant. It is also obvious that all other rules, and hence all steps of the algorithm can be implemented in polynomial time. The correctness of the algorithm follows from the proofs of the claims.
\end{proof}

\section{$\mathbf{k}$-Induced Disjoint Paths}\label{s-k}

Our first result essentially follows from the observation that solutions of {\sc $k$-Induced Disjoint Paths} with long paths are solutions of {\sc $k$-Disjoint Paths}, which we recall is polynomial-time solvable~\cite{RS95}. 

\begin{theorem}[\app]\label{t-kip}
For all~$k\geq 2$, {\sc $k$-Induced Disjoint Paths} is polynomial-time solvable for $\H_1$-subgraph-free graphs.
\end{theorem}

\noindent
We sketch our proof for the case where $H=\H_2$.

\begin{theorem}[\app]\label{t-kip2}
For all~$k\geq 2$, {\sc $k$-Induced Disjoint Paths} is polynomial-time solvable for $\H_2$-subgraph-free graphs.
\end{theorem}

\noindent{\bf \sffamily\bfseries\textcolor{lipicsGray}{Proof sketch.}}
First, branch on all $O(2^k n^{3k})$ options (so a polynomial number, as $k$ is fixed) of solution paths that have at most three internal vertices. For each branch, we remove the guessed solution paths and the neighbours of the vertices on these paths. Let $k$ still be the number of terminal pairs. We now only look for solution paths with at least four vertices. Branch on all $O(n^{4k})$ options of choosing the first two vertices $a_z,b_z$ on the solution path from every terminal $z\in \{s_i,t_i\}$ for $i\in \{1,\ldots,k\}$. In each branch, we remove all other neighbours of $z,a_z$ from the graph, so every terminal $z$ now has degree~$1$, while $a_z$ has degree~$2$. We discard the branch if $(\dagger)$ $\{a_z,b_z\} \cap \{a_{z'},b_{z'}\} \not= \emptyset$ for some terminals $z,z'$ or one of $a_z,b_z$ is the same or neighbours one of $a_{z'},b_{z'}$ for some terminals $z,z'$ not from the same terminal pair.

We now start a recursive procedure. We first preprocess the input. If $b_{s_i}$ and $b_{t_i}$ are adjacent for some $i \in \{1,\ldots,k\}$, then we remove the solution path $s_i,a_{s_i}, b_{s_i}, b_{t_i},a_{t_i},t_i$ and their neighbours from the graph. If $b_{z}$ and $b_z'$ are adjacent for some terminals $z,z'$ that do not form a terminal pair, we discard the branch. 

We run the polynomial-time algorithm for {\sc $k$-Disjoint Paths} from~\cite{RS95} on the remaining terminal pairs. If this results in a no-answer, then we discard the branch. Else, we found a solution $P_1,\ldots,P_k$. We may assume that each path $P_i$ is induced, or we may shortcut it. If $P_1,\ldots,P_k$ is also a solution of {\sc $k$-Induced Disjoint Paths}, then we return a yes-answer. Otherwise, there is (say) an edge $(x_1,x_2)$ between paths $x_1 \in P_1$ and $x_2 \in P_2$. We pick $x_1$ such that it is closest to $t_1$ on $P_1$ and under that condition we pick $x_2$ such that it is closest to $t_2$ on $P_2$.

Let $z_1,z_3$ be the two neighbours of $x_1$ on $P_1$ and $z_2,z_4$ the two neighbours of $x_2$ on $P_2$.
We let $S=\{z_1,x_1,z_3,z_2,x_2,z_4\}$. Observe that $S$ does not contain any terminal by $\dagger$ and the preprocessing.
Suppose any $z \in \{z_1,z_2,z_3,z_4\}$ has two neighbours outside of $S$. Then $G$ has a $\H_2$ as a subgraph. Thus we may assume $(\ddagger)$ that each $z \in \{z_1,z_2,z_3,z_4\}$ has at most one neighbour outside of $S$.

By the choice of $(x_1,x_2)$ and the fact that $P_1$ is induced, $z_3$ has no neighbours in $S$ except $x_1$. Suppose the edge $(z_1,z_2)$ exists and one of $\{x_1,x_2\}$ has a neighbour outside of $S$. Then there is an $\H_2$ with middle path $x_1,z_1,z_2$ since $s_2 \not\in S$. Suppose the edge $(z_1,z_4)$ exists and one of $\{x_1,x_2\}$ has a neighbour outside of $S$. Then there is an $\H_2$ with middle path $x_1,z_1,z_4$ since $s_2 \not\in S$. 

Now either the edges $(z_1,z_2)$ and $(z_1,z_4)$ do not exist (see Figure~\ref{fig:rule-1-bis}) or at least one of them exists and $x_1,x_2$ have no neighbours outside $S$ (see Figure~\ref{fig:rule-2-bis}). In the former case, we will apply Rule 1, while in the latter case, we will apply Rule 2 (see Figure~\ref{fig:rule-1-bis} and~\ref{fig:rule-2-bis} for the description of these two rules). Using $\ddagger$, we can prove that Rules~1 and~2 are safe and that they both preserve $\dagger$ and $H_2$-subgraph-freeness (\app). We can recognize both rules and apply them in polynomial time.
This makes the instance smaller by one vertex and we recurse. Hence, our algorithm will terminate in polynomial time.
\pushQED{\qed}\popQED

\begin{figure}
	\centering
	\includegraphics[width=0.8\textwidth]{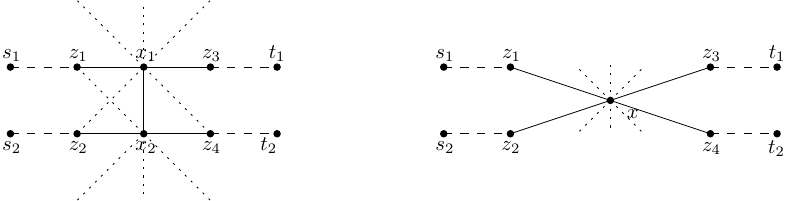}
	\vspace*{-0.2cm}
	\caption{Rule 1. Possible connections in our subgraph (left). What we replace this subgraph with (right). Dotted lines are possible additional edges.}
	\label{fig:rule-1-bis}
\end{figure}

\begin{figure}
	\centering
	\includegraphics[width=0.65\textwidth]{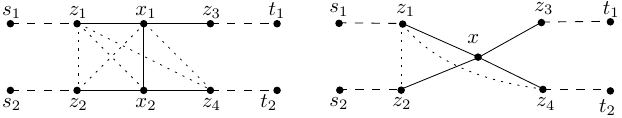}
	\caption{Rule 2. Possible connections in our subgraph (left). What we replace this subgraph with (right).}
	\label{fig:rule-2-bis}
\end{figure}

\medskip
\noindent
For our next result we follow the proof from Section 2.4 in \cite{LLMT09}. We cannot just take the $p$-subdivision~of that construction for some fixed $p$. However, as we will show, {\it some} of the edges may be liberally subdivided.

\begin{theorem}[\app]\label{t-khard}
For all $k\geq 2$, {\sc $k$-Induced Disjoint Paths} is \NP-complete for subcubic $(\H_4,\ldots,\H_\ell)$-subgraph-free graphs for all $\ell\geq 4$.
\end{theorem}

\section{$\mathbf{C_5}$-Colouring}\label{s-c5}

In this section, we give our polynomial-time certifying algorithm for {\sc $C_5$-Colouring} on $\H_3$-subgraph-free graphs.
For this purpose we introduce the following graphs. The {\em $C_5$-flower} $F_n$ is the graph (see Figure~\ref{fig:h3counters}) that we get from $C_{3n}$ (for $n \geq 3$) by adding a new central vertex with an edge to every third vertex of $C_{3n}$. If $n$ is odd, we call $F_n$ an {\em odd $C_5$-flower}, and if it is even we call $F_n$ an  {\em even $C_5$-flower}. We refer to the graphs $E_1, E_2$ and $E_3$ shown in Figure~\ref{fig:h3counters} as {\em exceptional graphs}. 
The following lemma (whose proof is a simple exercise) shows that all these graphs are $C_5$-critical (i.e. they are not $C_5$-colourable but every proper subgraph of them is). 

\begin{lemma}\label{l-5}
     The graph $K_3$, the odd flowers $F_n$ for odd $n \geq 3$ and the exceptional graphs $E_1, E_2$ and $E_3$ 
     are all $\H_3$-subgraph-free and $C_5$-critical. 
\end{lemma}

    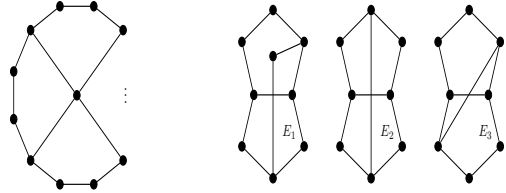
\begin{figure}
    \begin{center}
    \resizebox{6.5cm}{2.5cm}{\begin{tikzpicture}[every node/.style={vert}]
    
    \begin{scope}[xshift = -6cm, yshift = 1.8cm, rotate=0]
     \foreach \i[evaluate=\i as \d using {15+30*\i}] in {1,...,10}
             {\draw (\d:2) node (v\i){};} 
        \draw (0,0) node (m){};
        \draw (v1.center) -- (v2.center) -- (v3.center) -- (v4.center) -- (v5.center) -- (v6.center) -- (v7.center) -- (v8.center) -- (v9.center) -- (v10.center);
       \foreach \i in {1,4,7,10}{\draw (m) edge (v\i);}
       \draw (4:1.5) node[empty] (){{\Large $\vdots$}};
       \end{scope}
    
      \begin{scope}[xshift = 0cm, yshift = 1cm, rotate=90]
      \begin{scope}
       \Verts[:]{v1/36/1, v2/108/1, v3/180/1, v4/-108/1, v5/-36/1}
       \Edges{v1/v2,v2/v3,v3/v4,v4/v5,v5/v1} 
       \draw (0,-.5) node[empty]{$E_1$};
      \end{scope}
      \begin{scope}[xshift = 1.65cm]
        \Verts[:]{u3/0/1,u2/72/1,u4/-72/1, m/-90/0} 
       \end{scope} 
       \Edges{v1/u2, u2/u3, u3/u4, u4/v5, v3/m, m/u4}        
       \end{scope}
       
      \begin{scope}[xshift = 3cm,yshift = 1cm,rotate=90]
      \begin{scope}
       \Verts[:]{v1/36/1, v2/108/1, v3/180/1, v4/-108/1, v5/-36/1}
       \Edges{v1/v2,v2/v3,v3/v4,v4/v5,v5/v1} 
       \draw (0,-.5) node[empty]{$E_2$};
      \end{scope}
      \begin{scope}[xshift = 1.65cm]
        \Verts[:]{u3/0/1,u2/72/1,u4/-72/1} 
       \end{scope} 
       \Edges{v1/u2, u2/u3, u3/u4, u4/v5, v3/u3}
       \end{scope}
        
      \begin{scope}[xshift = 6cm,yshift = 1cm,rotate=90]
      \begin{scope}
       \Verts[:]{v1/36/1, v2/108/1, v3/180/1, v4/-108/1, v5/-36/1}
       \Edges{v1/v2,v2/v3,v3/v4,v4/v5,v5/v1} 
       \draw (0,-.5) node[empty]{$E_3$};
      \end{scope}
      \begin{scope}[xshift = 1.65cm]
        \Verts[:]{u3/0/1,u2/72/1,u4/-72/1} 
       \end{scope} 
       \Edges{v1/u2, u2/u3, u3/u4, u4/v5, v2/u4}
       \end{scope}
    \end{tikzpicture}}
    \caption{A (partial) $C_5$-flower and the three exceptional $\H_3$-subgraph-free $C_5$-critical graphs $E_1$, $E_2$ and $E_3$.}\label{fig:h3counters} 
    \end{center}
    \end{figure}

\noindent
We can now show the following structural result. 

  \begin{theorem}[\app]\label{thm:C5-char}
    The only $\H_3$-subgraph-free $C_5$-critical graphs are $K_3$, 
    odd flowers $F_n$ for $n \geq 3$, and the exceptional graphs $E_1, E_2$ and $E_3$. Equivalently,  the following three statements are true:
  \begin{enumerate}
    \item All $\H_3$-subgraph-free graphs of girth at least $6$ are $C_5$-colourable. 
    \item The only $\H_3$-subgraph-free $C_5$-critical graphs 
    of girth $5$ are $E_1$, $E_2$ and odd $C_5$-flowers $F_n$.
    \item The only $\H_3$-subgraph-free $C_5$-critical graph of
    girth $4$ is $E_3$. 
  \end{enumerate}
  \end{theorem}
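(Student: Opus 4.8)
The plan is to obtain the classification through a girth-by-girth analysis, which is exactly the split recorded in the three equivalent statements, together with the trivial remaining case of girth $3$: any $C_5$-critical graph containing a triangle must equal $K_3$, since otherwise $K_3$ would be a non-$C_5$-colourable proper subgraph, contradicting criticality. Throughout I would use two standard consequences of criticality, both via the usual arguments: a $C_5$-critical graph is $2$-connected and has minimum degree at least $2$ (a vertex of degree $\le 1$ is coloured last, and at a cut vertex one glues colourings of the two sides after rotating one of them, using that $C_5$ is vertex-transitive). I would also record the local effect of a degree-$2$ vertex $v$ with neighbours $a,b$: since $G-v$ is colourable but $G$ is not, \emph{every} colouring of $G-v$ must place $a$ and $b$ on adjacent vertices of $C_5$, the unique configuration from which $v$ cannot be extended.

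The engine of the argument is the interaction between $\H_3$-freeness and paths of length $3$. It is convenient to encode a homomorphism to $C_5$ as a labelling of the edges by $\pm 1$ (the difference of the endpoints' values in $\mathbb{Z}_5$), so that $G$ is $C_5$-colourable iff every cycle has edge-labels summing to $0 \bmod 5$; in particular a path of length $3$ between two vertices always changes the value by an odd amount, so it can never force its endpoints to be equal. The first key step is structural: if two vertices $u,v$ of degree at least $3$ are joined by a path of length $3$ through two degree-$2$ vertices, then the copy of $\H_3$ that they would span, using two private neighbours at each of $u$ and $v$, can be avoided only if those private neighbourhoods are not disjoint, i.e.\ $u$ and $v$ are forced to share a neighbour or otherwise close up a short cycle. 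This is precisely why the $C_5$-flower is permitted (its spokes share the centre) while generic length-$3$ connections between branch vertices are not.

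For Statement $1$ I would argue by minimal counterexample: a smallest $\H_3$-subgraph-free graph of girth at least $6$ that is not $C_5$-colourable is $C_5$-critical, hence $2$-connected with minimum degree $2$. Girth at least $6$ rules out the coincidences above, so by the key step no two vertices of degree at least $3$ (call these \emph{branch vertices}) are joined by a degree-$2$ path of length $3$. Moreover, by Lemma~\ref{lem:C5-simple} any degree-$2$ path of length at least $4$ imposes \emph{no} constraint on the colours of its two endpoints, since walks of every length $\ge 4$ exist between any two vertices of $C_5$; deleting such a path's interior yields a proper, hence colourable, subgraph whose colouring extends back along the path, contradicting minimality. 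Thus every maximal degree-$2$ path between branch vertices has length $1$ or $2$, and a direct colouring argument then finishes: with only short connections the branch vertices can be labelled compatibly (intuitively, using an independent set of $C_5$ for the length-$2$ connections, after accounting for length-$1$ edges via girth) and every intermediate degree-$2$ vertex filled in. Hence there are no $C_5$-critical graphs of girth at least $6$.

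For Statements $2$ and $3$ the forbidden coincidence is now \emph{available}: a length-$3$ connection between two branch vertices closes into a $5$-cycle (girth $5$) or a $4$-cycle (girth $4$) exactly by sharing a vertex, and this is what assembles the flower. In the girth-$5$ case I would fix a vertex $c$ of maximum degree and use $\H_3$-freeness to show that all the $5$-cycles through $c$ must pairwise meet at $c$ rather than spreading into independent petals, so that the off-centre vertices have degree at most $3$ with every third edge running to $c$; this forces $G-c$ to be a single cycle $C_{3n}$ attached to $c$ at every third vertex, i.e.\ $G=F_n$, and the $\pm 1$/parity argument (each petal is monochromatically $+$ or $-$, so the spoke-values alternate around $c$) shows $F_n$ is critical exactly when $n$ is odd. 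The girth-$4$ case is handled analogously with $4$-cycles in place of petals. The main obstacle, where the bulk of the work lies, is ruling out all \emph{sporadic} critical graphs that deviate slightly from this clean flower template: the small graphs $E_1,E_2$ (girth $5$) and $E_3$ (girth $4$) are exactly the configurations where the required overlaps form a short path rather than a full cyclic flower while still avoiding $\H_3$ and remaining critical, and the delicate part is a finite but careful enumeration, repeatedly invoking the degree-$2$ local constraint and criticality to prune candidates, that proves no further sporadic graphs occur. Combined with the ``only if'' direction already given in the Fact above, this yields the claimed characterisation.
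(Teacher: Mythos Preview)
Your girth-by-girth split and your preliminary observations (criticality implies $2$-connectedness, removability of $3^+$-threads, the $\pm 1$ edge-labelling) are all sound and match the paper's setup. The difficulty is that after this setup you do not actually prove any of the three statements; in particular, your argument for Statement~1 contains a genuine gap, and for Statements~2 and~3 you explicitly defer the work.

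For Statement~1 you correctly deduce that at girth $\ge 6$ every maximal degree-$2$ path between branch vertices has length $1$ or $2$. But the sentence ``a direct colouring argument then finishes: with only short connections the branch vertices can be labelled compatibly'' is not a proof. A length-$1$ edge forces its endpoints to differ by $\pm 1$ in $\mathbb{Z}_5$, a length-$2$ path forces $\pm 2$; you give no reason why an arbitrary $\H_3$-free, girth-$\ge 6$ system of such constraints is satisfiable, and ``using an independent set of $C_5$'' does not address mixed length-$1$/length-$2$ interactions. The paper avoids this entirely: it fixes a shortest odd cycle $C=C_{2k+1}$ with $2k+1\ge 7$, lets $X$ be the set of $3^+$-vertices on $C$, and observes that $\H_3$-freeness forbids two vertices of $X$ at distance exactly $3$ on $C$, while criticality (no $3$-threads) forces $X$ to meet every three consecutive vertices. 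A short parity check on the odd cycle then gives a contradiction. This is a different and self-contained argument; your proposed ``direct colouring'' step would need substantial additional work to be made rigorous.

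For Statements~2 and~3 your sketch is at the level of ``this is what the answer looks like and there is a finite enumeration''. The paper's proofs are long and use ingredients you do not mention: the notion of \emph{saturated} subgraphs (proper subgraphs with at most one non-saturated vertex, which cannot occur in a $2$-connected critical graph), an ear-decomposition lemma tailored to this setting (your Lemma~\ref{lem:BB'} analogue), specific reductions via the two-pentagon graphs $B_1,B_2$ in the girth-$5$ case, and, in the girth-$4$ case, the lemma that every $3^+$-vertex has an edge to every $4$-cycle, followed by exclusion of $K_{2,3}$ and $K_{3,3}$. Your flower heuristic (``petals share the centre'') is the right picture for the main family, but it does not by itself rule out non-flower critical graphs, and the paper's case analysis is where that actually happens.
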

 
 \noindent
 We can now prove the following result.
 
 \begin{theorem}
 There exists a polynomial-time certifying algorithm for {\sc $C_5$-Colouring} on $\H_3$-subgraph-free graphs.
 \end{theorem}
   
\begin{proof}
As every graph that does not map to $C_5$ must contain a $C_5$-critical subgraph, we get from Theorem~\ref{thm:C5-char} that $\H_3$-subgraph-free $C_5$-colouring is solved by detecting the non-existence of the graphs $K_3, E_1, E_2, E_3$ and $F_n$ (odd $n\geq 3$) in a $\H_3$-subgraph-free input graph $G$. We can detect the presence of the fixed graphs $K_3$, $E_1, E_2$ and $E_3$ in $G$ in polynomial time by brute force. What we must show is that we can detect the presence of any odd $C_5$-flower $F_n$ in polynomial time.
  To do so,  we simply observe that for a fixed centre vertex, $v_0$ we can make an auxiliary graph on its neighbours putting an edge between two if there is a path on three edges between them in $G$. Now, $G$ contains an odd $C_5$-flower with centre $v_0$ if and only if this auxiliary graph contains an odd cycle. We can check this in polynomial time for each $v_0$, so can find an odd $C_5$-flower in $G$ polynomial time.  
\end{proof}

\section{Star ${\mathbf{3}}$-Colouring}\label{s-star}

In this section, we give our linear-time certifying algorithms for {\sc Star $3$-Colouring} for bipartite $(\H_2,\H_4,\H_6\ldots)$-subgraph-free~graphs and general $(\H_2,\H_4,\H_6\ldots)$-subgraph-free~graphs, starting with the bipartite case.
Let $\mathbb{A}$ and $\overline{2P_4+3P_6}$ be the graphs shown in Figure~\ref{f-two}. We can show the following result.

\begin{figure}[t]
	\centering
	\includegraphics[width=0.65\textwidth]{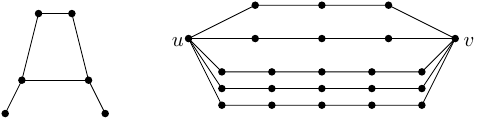}
	\caption{The graph $\mathbb{A}$ (left) and the graph $\overline{2P_4 + 3P_6})$ (right).}\label{f-two}
\end{figure}

\begin{theorem}[\app]\label{t-bipartite}
A bipartite $(\H_2,\H_4,\H_6\ldots)$-subgraph-free~graph is star $3$-colourable if and only if it is $(\mathbb{A},\overline{2P_4 + 3P_6})$-subgraph-free.
\end{theorem} 

\noindent
Theorem~\ref{t-bipartite} leads to a linear-time certifying algorithm, as $(\H_2,\H_4,\H_6\ldots)$-subgraph-free~graphs have bounded treewidth (see Appendix~\ref{a-h}) and we can then use Courcelle's Theorem~\cite{Co90} to check for a subgraph isomorphic to $\mathbb{A}$ or $\overline{2P_4 + 3P_6}$. For the general case, we apply the same arguments but we do not have an explicit list of the forbidden graphs (which we prove has finite size).

\begin{theorem}[\app]\label{t-general}
A $(\H_2,\H_4,\H_6\ldots)$-subgraph-free~graph is star $3$-colourable if and only if it is ${\cal F}$-subgraph-free for some \emph{finite} set of graphs ${\cal F}$.
\end{theorem} 

\section{Conclusions}\label{s-con}

We took four classic problems, {\sc Hamilton Cycle}, {\sc $k$-Induced Disjoint Paths}, {\sc $C_5$-Colouring} and {\sc Star $3$-Colouring}, that are easy on bounded treewidth, but for which we showed that  their hardness on subcubic graphs is not preserved under edge subdivision. For these C12-problems we gave both polynomial and \NP-completeness results for ${\cal H}$-subgraph-free graphs when ${\cal H}$ is some subset of $\{\H_1,\H_2,\ldots\}$.
Of course, we would like to have a classification for all our problems, among all $\mathcal{H}$-subgraph-free classes (even where $|\mathcal{H}|=1$), but it makes sense to understand the $\H_i$ first, and below we pose relevant open problems.

First, is there a graph $\H_\ell$ such that {\sc Hamilton Cycle} is \NP-complete for $\H_\ell$-free graphs? Second, what is the complexity of {\sc $k$-Induced Disjoint Paths} on $\H_3$-subgraph-free graphs?
The answer would give a dichotomy for {\sc $k$-Induced Disjoint Paths} on $\H_i$-subgraph-free graphs. Third, what is the complexity of {\sc $C_5$-Colouring} on $\H_i$-subgraph-free graphs, when $i=0\bmod 3$?
If these are in $\cP$, then we would have a dichotomy for  {\sc $C_5$-Colouring} on $\H_i$-subgraph-free graphs based on $i \bmod 3$. 
Fourth, what is the complexity of {\sc Star $3$-Colouring} on $\H_{2i}$-subgraph-free graphs for $i\geq 1$? 
If these are in $\cP$, then we would have a dichotomy for  {\sc Star $3$-Colouring} on $\H_i$-subgraph-free graphs based on $i \bmod 2$. 
Moreover, even though {\sc Star $k$-Colouring} is not C2 for $k\geq 10$ (see Appendix~\ref{a-star}), this is not known for $k=4$ (in~\cite{SA22}, Shalu and Antony ask about the complexity of {\sc Star $4$-Colouring} on subcubic graphs). 

We also ask what is the complexity of {\sc $k$-Induced Disjoint Paths} and {\sc $C_5$-Colouring} for subcubic graphs of girth $g$ for $g\geq 3$, i.e., are both C12-problems even C12' just like the other two? 
We also do not know the complexity of {\sc $k$-Induced Disjoint Paths}, for $k\geq 2$, on graphs of girth at least~$g$ without a degree bound, whereas the best degree bound for {\sc $C_5$-Colouring} is $6\cdot 5^{13}$ (see Appendix~\ref{a-degree}).

Finally, we note that {\sc Acyclic $3$-Colouring}, we have the same results as for {\sc Star $3$-Colouring} in Theorem~\ref{thm:main-star}; see Appendix~\ref{a-acyclic}. However, in contrast to {\sc Star $3$-Colouring}, we do not know if {\sc Acyclic $3$-Colouring} satisfies C2 (see also~\cite{SA24}), and we also have no certifying algorithms for this problem.



\newpage
\appendix

\section{${\mathbf C_{\mathbf 5}}$-Colouring for Bounded Degree and Large Girth}\label{a-degree}

The {\sc $k$-Colouring} problem is to decide if a graph $G$ has a {\it $k$-colouring}, which is a mapping $c:V(G)\to \{1,\ldots,k\}$ such that $c(u)\neq c(v)$ for any two adjacent vertices $u$ and $v$ of $G$.
We need a result of Emden-Weinert, Hougardy and Kreuter:

 \begin{theorem}[\cite{EHK98}]\label{t-g}
For all $k\geq 3$ and all $g\geq 3$, {\sc $k$-Colouring}  is \NP-complete for graphs with girth at least~$g$ and with maximum degree at most~$6k^{13}$. 
\end{theorem}

\noindent
We now repeat the proof of Chudnovsky et al.~\cite{CHRSZ19}, which comes down to replacing each edge of an input graph $G$ of {\sc $5$-Colouring}, which we may assume has girth at least $g$ and maximum degree at most $6\cdot 5^{13}$ due to Theorem~\ref{t-g}, by a path of length~$3$. This yields a new graph $G'$ of girth at least $g$, such that $G$ and~$G'$ have the same maximum degree. Hence, we derive the following result.

 \begin{proposition}
For every $g\geq 3$, {\sc $C_5$-Colouring}  is \NP-complete for graphs with girth at least~$g$ and with maximum degree at most~$6\cdot 5^{13}$. 
\end{proposition}

\section{The Four Testbed Problems Do Not Satisfy C3}\label{a-c3}

In this section we show that none of our four problems satisfy C3. We use the following notation in this section: for a graph $G$ and an integer $p\geq 1$, let $G_p$ be the $p$-subdivision of $G$ (which we recall is the graph obtained from $G$ after subdividing each edge of $G$ exactly $p$ times).

\begin{proposition}
{\sc Hamilton Cycle} does not satisfy C3.
\end{proposition}

\begin{proof}
We observe that for every graphs $G$ and every $p\geq 1$, $G_p$ is a no-instance of {\sc Hamilton Cycle} unless $G$ was a cycle.
\end{proof}

\begin{proposition}
{\sc $k$-Induced Disjoint Paths} does not satisfy C3.
\end{proposition}

\begin{proof}
Under any kind of subdivision, {\sc $k$-Induced Disjoint Paths} reduces to {\sc $k$-Disjoint Paths} over the same graph, which is in $\cP$ for all $k\geq 2$, as shown in~\cite{Sh80} for $k=2$ and in~\cite{RS95} for every $k\geq 3$.
\end{proof}

\begin{proposition}
{\sc $C_5$-Colouring} does not satisfy C3.
\end{proposition}

\begin{proof}
We first prove that for all $p\geq 4$, and for all $x,y \in V(C_5)$, there is a walk of length $p$ in $C_5$ from $x$ to $y$. 
First let $p=4$. 
To walk a distance of zero: walk two forward then two back. To walk at distance one (without loss of generality) forward: walk four backward. To walk at distance two (without loss of generality) forward: walk one back, one forward, and two forward. 
Now let $p=5$.
To walk a distance of zero: walk five forward. To walk at distance one (without loss of generality) forward: walk two forward, two back and one forward. To walk at distance two (without loss of generality) forward: walk one back, one forward, and three back. 
Finally, let $p\geq 6$. Keep moving one forward then one back until one of the two previous cases applies.

Now let $G$ be a graph. We give each vertex in $G$ a label from $\{1,\ldots,5\}$. From the above it follows that for every $p\geq 3$, we can extend $c$ to a homomorphism from $G_p$ to $C_5$; in other words, $G_p$ is a yes-instance of {\sc $C_5$-Colouring}.
\end{proof}

\begin{proposition}\label{p-sc3}
{\sc Star $3$-Colouring} does not satisfy C3.
\end{proposition}

\begin{proof}
Let $G$ be a graph. We show that for all $p\geq 3$, $G_p$ is a yes-instance of {\sc Star $3$-Colouring}. We do this by giving each vertex in $G$ a label from $\{1,2,3\}$. The resulting labelling $c$ might not be a $3$-colouring, but this is not important: we will show that we can extend $c$ to a star $3$-colouring of $G_p$ as follows.

Consider an edge $e$ in $G$ and let $P$ be the corresponding path (of $p+1$ edges) in $G_p$. 
It suffices to give two star $3$-colourings of this path, so that the first three vertices are distinct colours and the last three vertices are distinct colours: one in which the first and last vertices are the same colour and one in which they are a different colour. Let us identify a $3$-colouring of $P$ by a sequence of length $p+1$ over $\{1,2,3\}$. If $p+1$ is a multiple of three, then use $(123)^{\frac{p+1}{3}}$ for the different colour and $(123)^{\frac{p+1}{3}-1}231$ for the same colour. If $p+1$ is $1 \bmod 3$, then use $(123)^{\frac{p}{3}-1}2132$ for the different colour and $(123)^{\frac{p}{3}}1$ for the same colour.  If $p+1$ is $2 \bmod 3$, then use $(123)^{\frac{p-1}{3}}12$ for the different colour and $(123)^{\frac{p-1}{3}}21$  for the same colour.
\end{proof}

\section{Star ${\mathbf k}$-Colouring on Subcubic Graphs for Large ${\mathbf k}$}\label{a-star}

We cannot generalise our result for {\sc Star $3$-Colouring} to {\sc Star $k$-Colouring} for any $k\geq 3$, as for large $k$ the problem no longer satisfies C2. In fact, we prove even a stronger statement. A $k$-colouring of a graph $G$ is said to be {\it injective} if for every vertex $u\in V(G)$, every neighbour of $u$ is assigned a different colour, or in other words, the union of any two colour-classes induce a disjoint union of isolated vertices and edges. So, any injective $k$-colouring is a star $k$-colouring (but the reverse does not necessarily hold, for instance the $P_3$ is star $2$-colourable but has no injective $2$-colouring).

\begin{proposition}
For $k\geq 10$, all subcubic graphs have an injective $10$-colouring. 
\end{proposition}

\begin{proof}
It suffices to prove the statement for $k=10$. We do this by induction. For the base case, a graph with one vertex is star $10$-colourable. Now take a vertex~$v$ in a graph $G$ and assume $G \setminus \{v\}$ has an injective $10$-colouring. As $G$ is subcubic, $v$ has at most three neighbours, each of which have at most two more neighbours each. Thus there are at most nine vertices whose colour we wish to avoid. As we have ten colours in total, this means that we can safely colour $v$.
\end{proof}

\section{Bounded Treewidth Results}\label{a-h}

A graph $G$ contains $H$ as a {\it minor} if $G$ can be modified to $H$ by a sequence of vertex deletions, edge deletions and edge contractions; if not, then $G$ is {\it $H$-minor-free}. 
We need the following classic result.

\begin{theorem}[\cite{BRST91}]\label{t-bi}
For every forest $F$, all $F$-minor-free graphs have pathwidth, and thus treewidth, at most $|V(F)|-2$.
\end{theorem}

\noindent
We now make the following observation.

\begin{proposition}\label{p-h}
For every $\ell\geq 1$, the class of $(\H_{\ell},\H_{\ell+1},\ldots )$-subgraph-free graphs has bounded treewidth.
\end{proposition}

\begin{proof}
Let $\ell\geq 1$.
As every $(\H_{\ell},\H_{\ell+1},\ldots )$-subgraph-free graph is $H_\ell$-minor-free, we apply Theorem~\ref{t-bi}.
\end{proof}

\noindent
We also prove the following result.

\begin{proposition}\label{p-h2}
For every $n\geq 1$, the class of $(\H_n,\H_{2n},\H_{3n},\ldots)$-subgraph-free graphs has bounded treewidth.
\end{proposition}

\begin{proof}
We proceed by contraposition. If a class of graphs has unbounded treewidth, then every grid appears as a minor in some graph~\cite{RS86}. Let us explain the argument for $n=2$ first. We consider that the $3 \times 3$-grid appears as a minor in some graph $G$ in our class and let $f$ be the minor map from $G$ to the $3 \times 3$-grid. Consider the three vertices in the $3 \times 3$-grid that form the central row as $u,v,w$ (in succession). Choose $u' \in f^{-1}(u), v' \in f^{-1}(v), w' \in f^{-1}(w)$ so that $u',v',w'$ have degree greater than $2$, noting that such vertices must exist. If the distance in $G$ between $u'$ and $v'$ is even, of length $2i$, then there is an $\H_{2i}$ subgraph in $G$ with central path from $u'$ to $v'$. If the distance in $G$ between $v'$ and $w'$ is even, of length $2i$, then there is an $\H_{2i}$ subgraph in $G$ with central path from $v'$ to $w'$. Else, there is a path of even length $4i$ from $u'$ to $w'$ and then there is an $\H_{4i}$ subgraph in $G$ with central path from $u'$ to $w'$.

For $(\H_n,\H_{2n},\ldots)$-subgraph-free graphs, consider the Abelian group $(\mathbb{Z}/n\mathbb{Z})$. The Davenport constant of an Abelian group $G$ is the minimum $d$ so that any sequence of elements of $G$ contains a non-empty consecutive subsequence of zero-sum (that adds to the identity element $0$). It is known that for $(\mathbb{Z}/n\mathbb{Z})$ the Davenport constant is $n$ (see page 24 in \cite{GR09}). Take an $(n+1) \times (n+1)$-grid and consider some row not at the top or bottom of the grid with vertices $w_1,\ldots,w_{n+1}$ in succession. Consider some $w'_1 \in f^{-1}(w_1), \ldots, w'_{n+1} \in f^{-1}(w_{n+1})$ where $f$ is the minor map as before, and the distances $x_i$ between $w'_{i+1}$ and $w'_i$. Using the Davenport constant, there is a subsequence $x_j,\ldots,x_{j'}$ ($j'>j$) such that $x_j+\ldots+x_{j'}=0 \bmod n$. Now choose $w'_j,\ldots,w'_{j'+1}$ as the central path in some $\H_{in}$.
\end{proof}

\section{The Standard \NP-hardness Reduction to Star-$\mathbf{3}$-Colouring}\label{a-al}

For reference, we explain the gadget from Albertson et al.~\cite{ACKKR04} that yields the following result.

\begin{theorem}[\cite{ACKKR04}]\label{t-s3c}
{\sc Star $3$-Colouring} is \NP-complete for planar bipartite graphs in which one partition class has size~$2$.
\end{theorem}

\begin{proof}
Reduce from {\sc $3$-Colourability} which is known to be \NP-complete even for planar graphs~\cite{Da80}.
Let $G$ be a planar graph. Replace each edge $e$ by three new vertices $a_e$, $b_e$, $c_e$ that are made adjacent only to the two end-vertices of $e$ in $G$. Let $G'$ be the resulting graph. Then every vertex of $V(G')\setminus V(G)$ has degree~$2$ in $G$. Moreover, $G'$ is planar and bipartite with partition classes $V(G')\setminus V(G)$ and $V(G)$.
It remains to observe that $G$ has a $3$-colouring if and only if $G'$ has a star $3$-colouring.
\end{proof}

\section{The Standard \NP-hardness Reduction to $\mathbf{C_5}$-Colouring}\label{a-c5easy}

We make the following observation.

\begin{proposition}
{\sc $C_5$-Colouring} is \NP-complete for $(\H_1,\H_2,\H_4,\H_5,\ldots)$-subgraph-free graphs.
\end{proposition}

\begin{proof}
It is well known~\cite{HN90} and easy to see that there is a reduction from $K_5$-{\sc Colouring}, which is to decide if a graph has a $K_5$-colouring, that is, a homomorphism from $G$ to the complete graph $K_5$ on five vertices. This problem is well known to be \NP-complete~\cite{HN90}. Let $G$ be a graph, and let $G'$ be the $2$-subdivision of $G$. 
We note that $G'$ is $(\H_1,\H_2,\H_4,\H_5,\ldots)$-subgraph-free (but may contain many instances of $\H_\ell$ where $\ell=0\mod 3$). Moreover, $G$ has a $K_5$-colouring if and only if $G'$ has a $C_5$-colouring.
\end{proof}

\section{Missing Proofs of Claims in Theorem~\ref{t-hhh}}\label{a-hhh}

Here is the missing proof of Claim~\ref{c-butterfly}.

\medskip
\noindent
{\bf Claim~\ref{c-butterfly} (restated).}
{\it Let $G'$ be a graph obtained from $G$ by the butterfly reduction. 
	Then $G$ has a Hamiltonian cycle if and only if $G'$ has a Hamiltonian cycle.}

\begin{proof}
	Since $G'$ is obtained from $G$ be deleting some edges, any Hamiltonian cycle in
	$G'$ is also a Hamiltonian cycle  in $G$. Conversely, assume $C$ is a Hamiltonian 
	cycle in $G$. If the edge $ab$ belongs to $C$, then the edges $ax$ and $bx$ 
	do not belong to $C$ (remember that $a$ and $b$ have white neighbours), 
	in which case $C$ is also a Hamiltonian cycle in $G'$. 
	
	Suppose now that $ab$ does not belong to $C$. Then  $ax$ and $bx$ belong to $C$
	and hence $cx$ and $dx$ do not belong to $C$, implying that $cd$ belongs to $C$ 
	(remember that the degree of $c$ is $3$ in $G$). But then $G$ has one more 
	Hamiltonian cycle obtained from $C$ be replacing the edges $ax$, $bx$, $cd$
	with the edges  $cx$, $dx$, $ab$. This second cycle is clearly a Hamiltonian cycle
	in $G'$. 
\end{proof}

\noindent
Here is the missing proof of Claim~\ref{claim:ld}.

\medskip
\noindent
{\bf Claim~\ref{claim:ld} (restated).}
{\it Let $x$ be a vertex of degree at least $13$. If the neighbourhood of $x$ does not 
	contain two adjacent vertices of degree $3$, then $G$ has no Hamiltonian cycle. 
	Otherwise, $G$ has a Hamiltonian cycle if and only if $G-x$ has.}

\begin{proof}
	Assume $G$ has a Hamiltonian cycle $C$. We call any edge of $G$ not in $C$ a {\it chord}
	and the endpoints of a chord {\it weak neighbours}.
	
	Let $y$ and $z$ be the neighbours of $x$ on the cycle.  
	Without loss of generality, we let $y$ be black (otherwise (R4) is applicable to $G$) and 
	$v$ be a weak neighbour of $y$. Assume $v\ne z$. Since the degree of $x$ is at least $13$,
	$x$ has at least $5$ weak neighbours either between $y$ and $v$ or between $z$ and $v$ on the cycle $C$.
	Let $u$ be a middle of these $5$ neighbours. In the first case, two edges of $C$ incident to 
	$y$, two edges of $C$ incident to $u$, two edges of $C$ incident to $v$ together 
	with the edges $xu$ and $yv$ form a forbidden subgraph. In the second case, two edges of $C$ incident to 
	$x$, two edges of $C$ incident to $u$, two edges of $C$ incident to $v$ together 
	with the edges $xu$ and $yv$ form a forbidden subgraph. A contradiction in both cases shows that $v=z$,
	i.e. the degree of $y$ is $3$. By symmetry, the degree of $z$ is $3$. This proves that 
	if the neighbourhood of $x$ does not contain two adjacent vertices of degree $3$, 
	then $G$ has no Hamiltonian cycle. 
	
	Now assume that the neighbourhood of $x$ contains two adjacent vertices of degree $3$.
	Suppose first that $G$ has a Hamiltonian cycle. From the first part of the proof, we 
	know that the neighbours of $x$ on the cycle are adjacent. Therefore, by removing $x$ 
	from the graph we are left with a Hamiltonian cycle in $G-x$.
	
	Conversely, let $G-x$ have a Hamiltonian cycle $C$. Denote by $y$ and $z$ two adjacent vertices of degree $3$
	in the neighbourhood of $x$. In the graph $G-x$ the vertices $y$ and $z$
	have degree $2$ and hence the edge $yz$ belongs to $C$. By replacing this edge with the edges
	$xy$ and $xz$ we obtain a Hamiltonian cycle in the graph $G$. 
\end{proof}

\section{The Proof of Theorem~\ref{t-kip}}\label{a-kip}

Here is the missing proof of Theorem~\ref{t-kip}, which we restate below.

\medskip
\noindent
{\bf Theorem~\ref{t-kip}. (restated)}
{\it For all~$k\geq 2$, {\sc $k$-Induced Disjoint Paths} is polynomial-time solvable for $\H_1$-subgraph-free graphs.}

\begin{proof}
We prove the result for $k=2$. The extension to $k\geq 2$ will be straightforward. Let $G$ be an instance of {\sc $2$-Induced Disjoint Paths} together with two terminal pairs $(s_1,t_1)$ and $(s_2,t_2)$. We may assume without loss of generality that there is no edge between $s_1$ and $t_1$ and no edge between $s_2$ and $t_2$. 

We first check if there exists a solution in which one of the paths has length~$2$. We can do this in polynomial time as follows. We first consider all $O(n)$ options of choosing a vertex to be the middle vertex of one of these paths. We then check if the graph obtained from removing the guessed middle vertex and its two neighbouring terminals $s_i$ and $t_i$ as well all the neighbours of these three vertices has a connected component that contains both terminals $s_j$ and $t_j$ of the other pair. This takes polynomial time.

We now check if there exists a solution in which both paths have length at least $3$. We consider all $O(n^4)$ options of choosing the neighbours $s_1'$, $t_1'$, $s_2'$, $t_2'$ of $s_1$, $t_1$, $s_2$, $t_2$, respectively, on the two solution paths (should a solution exist).  We discard a branch if there exists an edge between a vertex of $\{s_1,s_1',t_1,t_1'\}$ and a vertex of $\{s_2,s_2',t_2,t_2'\}$. Suppose this is not the case. We remove $s_1,t_1,s_2,t_2$ and every neighbour of a vertex in $\{s_1,t_1,s_2,t_2\}$ that does not belong to $\{s_1',t_1',s_2',t_2'\}$. Afterwards, it suffices to solve $2$-{\sc Disjoint Paths} on the resulting graph $G'$ with terminal pairs $(s_1',t_1')$ and $(s_2',t_2')$. This can be seen as follows. Any solution of $2$-{\sc Induced Disjoint Paths} is a solution of {\sc $2$-Disjoint Paths}. Now suppose we have a solution $(P_1,P_2)$ of {\sc $2$-Disjoint Paths}. If there exist an edge between a vertex of $P_1$ and a vertex of $P_2$, then we find the forbidden subgraph $\H_1$ (possibly after adding the vertices $s_1,t_1,s_2,t_2$ back).
Since the number of branches is $O(n^4)$ and each created instance of $2$-{\sc Disjoint Paths} can be solved in polynomial time~\cite{RS95,Sh80}, the running time of this case is polynomial as well. 
\end{proof}

\section{The Missing Parts of Theorem~\ref{t-kip2}}\label{a-kip2}

Here we prove that Rules~1 and~2 displayed in Figures~\ref{fig:rule-1-bis} and~\ref{fig:rule-2-bis} are safe, and moreover that they both preserve $\dagger$ and $H_2$-subgraph-freeness (this was all what was left to show).

\medskip
\noindent
Recall that Rule~1 is applied in the situation that $(z_1,z_2)$ and $(z_1,z_4)$ are both not edges of the graph. The rule is to contract the edge $(x_1,x_2)$, removing any parallel edges that may arise; see Figure~\ref{fig:rule-1-bis}. We first show that Rule~1 is safe. 

Suppose that we have a solution to {\sc $k$-Induced Disjoint Paths} in $G$. If this solution uses no vertices in $S$, then it is already a solution to {\sc $k$-Induced Disjoint Paths} in $G'$. Thus, it must use some vertex in $S$. If the solution does not use $x_1$ nor $x_2$, then recall that by $\ddagger$, each of $z_1,z_2,z_3,z_4$ has at most one neighbour outside of $S$, and thus the solution must avoid thus $S$ entirely, a contradiction.
If the solution uses both $x_1$ and $x_2$, then it must use the edge $(x_1,x_2)$. We can substitute the edge $(x_1,x_2)$ in the solution to {\sc $k$-Induced Disjoint Paths} in $G$ with $x$ to obtain a solution to {\sc $k$-Induced Disjoint Paths} in $G'$. 
Hence, without loss of generality, suppose the solution uses $x_1$. We can substitute this for $x$ to obtain a solution to {\sc $k$-Induced Disjoint Paths} in $G'$, unless some other solution path runs through a neighbour $q$ of $x_2$. Note $q$ cannot be a terminal due to our preprocessing. Hence it has two neighbours $p$ and $r$ on this other solution path, and these are outside of $\{z_1,x_1,z_3\}$ because this path must avoid $x_1$ and any of its neighbours. But now $p,q,r$, $q,x_2,x_1$, $z_1,x_1,z_3$ forms an $\H_2$ (with middle path $q,x_2,x_1$), a contradiction.

Suppose we have a solution to {\sc $k$-Induced Disjoint Paths} in $G'$. If this solution does not involve $x$, then it maps to a solution of {\sc $k$-Induced Disjoint Paths} in $G$. Suppose now it does involve $x$. Suppose mapping $x$ to either of $x_1$ or $x_2$ does not produce a solution to {\sc $k$-Induced Disjoint Paths} in $G$. Then mapping $x$ to either the edge $(x_1,x_2)$ (or the symmetric $(x_2,x_1)$) must produce a solution to {\sc $k$-Induced Disjoint Paths} in $G$.

\medskip
\noindent
Recall that Rule~2 is applied if at least one of $(z_1,z_2)$ and $(z_1,z_4)$ is an edge of the graph and $x_1,x_2$ do not have any neighbours outside $S$. The rule is to contract the edge $(x_1,x_2)$, removing any parallel edges that may arise; see Figure~\ref{fig:rule-2-bis}. We first show that Rule~2 is safe. 

Suppose we have a solution to {\sc $k$-Induced Disjoint Paths} in $G$. If it uses no vertices in $S$, then it is already a solution to {\sc $k$-Induced Disjoint Paths} in $G'$. Thus, it must use some vertex in $S$. Suppose the edge $(z_1,z_2)$ exists and the solution uses $(z_1,z_2)$. Then by $\ddagger$, the solution does not use any other vertex from $S$ and we can keep this edge to obtain a solution for {\sc $k$-Induced Disjoint Paths} in $G'$. Suppose the edge $(z_1,z_4)$ exists and the solution uses $(z_1,z_4)$. Then by $\ddagger$, the solution does not use any other vertex from $S$ and we can keep this edge to obtain a solution for {\sc $k$-Induced Disjoint Paths} in $G'$.

If the solution uses both $x_1$ and $x_2$, then it must use the edge $(x_1,x_2)$ and we can substitute $(x_1,x_2)$ in the solution to {\sc $k$-Induced Disjoint Paths} in $G$ with $x$ to obtain a solution to {\sc $k$-Induced Disjoint Paths} in $G'$. Suppose it uses neither of $x_1$ and $x_2$. Then by $\ddagger$ and the fact that $S$ is used, the solution must use either the edge $(z_1,z_4)$ or $(z_1,z_2)$ and we are in a previous case.

Hence, without loss of generality, suppose the solution uses $x_1$. We can substitute this for $x$ to obtain a solution to {\sc $k$-Induced Disjoint Paths} in $G'$. This is safe, as $x_1,x_2$ have no neighbours outside $S$.

Suppose we have a solution to {\sc $k$-Induced Disjoint Paths} in $G'$. If this solution does not involve $x$ then it maps to a solution of {\sc $k$-Induced Disjoint Paths} in $G$. Suppose now it does involve $x$. Suppose mapping $x$ to either of $x_1$ or $x_2$ does not produce a solution to {\sc $k$-Induced Disjoint Paths} in $G$. Then mapping $x$ to either the edge $(x_1,x_2)$ (or the symmetric $(x_2,x_1)$) must produce a solution to {\sc $k$-Induced Disjoint Paths} in $G$. 

\medskip
\noindent
Next, we show that any graph $G'$, obtained after applying Rule~1 or~2 is $\H_2$-subgraph-free as well. Suppose $G'$ has an $\H_2$. Then this $\H_2$ must contain $x$. If $x$ is a leaf vertex in $\H_2$, then it is clear that $G$ already had this $\H_2$ involving either $x_1$ or $x_2$. 

Suppose $x$ is a degree-$3$ vertex in this $\H_2$. If the neighbours of $x$ in the $\H_2$ were all neighbours of $x_1$ or all neighbours of $x_2$ in $G$, then it is clear that $G$ already had this $\H_2$, a contradiction. Let $z'_1$ and $z'_2$ be the leafs of the $\H_2$ adjacent to $x$ in $G'$. 

Suppose that $z'_1$ and $z'_2$ are both adjacent to $x_2$ and both not to $x_1$. Then the middle vertex of the $\H_2$ is only adjacent to $x_1$. Ideally, we would replace $x$ by $x_1$, $z'_1$ by $z_1$ and $z'_2$ by $z_3$. This does not work if (say) $z_1$ is part of the $\H_2$. However, $z'_1$ and $z'_2$ are both not $z_1$, because $z_1$ is adjacent to $x_1$ and we would contradict our assumption on the adjacency of $z'_1$ and $z'_2$. We now consider three cases, depending on where $z_1$ is in the $\H_2$.

Suppose that $z_1$ is a leaf of the $\H_2$. By $\ddagger$ and the inducedness of paths, its neighbouring degree-3 vertex cannot be one of $z_2,z_3,z_4$. Hence, this must be the unique neighbour $p$ of $z_1$ outside $S$. The other neighbours $q,r$ of $p$ on the $\H_2$, where $r$ is the middle vertex, are both not $z_3$ since $P_1$ is induced. Hence, $G$ has a $\H_2$ formed by $q,p,z_1$, $p,r,x_1$, $x_2,x_1,z_3$, a contradiction.

Suppose that $z_1$ is the middle vertex of the $H_2$. By $\ddagger$, the other degree-3 vertex of the $\H_2$ cannot be $z_2$ or $z_4$, so it must be the unique neighbour $p$ of $z_1$ outside $S$. The other neighbours $q,r$ of $p$ on the $\H_2$, which are both leafs of the $H_2$, are both not $z_3$ since $P_1$ is induced. Hence, $G$ has a $\H_2$ formed by $q,p,r$, $p,z_1,x_1$, $x_2,x_1,z_3$, a contradiction.

Suppose that $z_1$ is a degree-3 vertex of the $\H_2$. Let $p$ be the unique neighbour of $z_1$ outside $S$; it is unique by $\ddagger$. Then one of $p,z_2,z_3$ must be the middle vertex of the $\H_2$ and the other two the leafs neighbouring $z_1$. If the middle vertex is $z_2$, then $z'_1,x_2,z'_2$, $x_2,z_2,z_1$, $p,z_1,z_4$ is a $\H_2$ in $G$, a contradiction. The other cases are similar. 

This concludes the argument when $z'_1$ and $z'_2$ are both adjacent to $x_2$ and both not to $x_1$. 

Suppose instead that, say $z'_1$, is adjacent to $x_1$ and the other, $z'_2$, is adjacent to $x_2$. Let $x',x'',z''_1,z''_2$ form the remaining vertices of the $\H_2$ where $x,x',x''$ and $z''_1,x'',z''_2$ are both paths of length $2$ in this $\H_2$. Thus, $z'_1,x,z'_2$, $x,x',x''$ and $z''_1,x'',z''_2$ form the $\H_2$ in $G'$. Without loss of generality, suppose $x'$ was adjacent to $x_1$ in $G$. Now it is clear that $z'_1,x_1,x_2$, $x_1,x',x''$ and $z''_1,x'',z''_2$ formed an $\H_2$ in $G$.

Finally, suppose that $x$ is the degree-$2$ vertex in $\H_2$. Let $z'_1,x',z'_2$, $x',x,x''$, $z''_1,x'',z''_2$ be the paths that form the $\H_2$ in $G'$. Suppose, without loss of generality, that $x'$ was adjacent to $x_1$ in $G$. If $x''$ was also adjacent to $x_1$ in $G$, then $z'_1,x',z'_2$, $x',x_1,x''$, $z''_1,x'',z''_2$ are paths that form an $\H_2$ in $G$. Suppose now that $x''$ was adjacent to $x_2$ but not $x_1$ in $G$ and we may also assume that $x'$ is adjacent to $x_1$ but not $x_2$. Now $z'_1,x',z'_2$, $x',x_1,x_2$, $z_2,x_2,z_4$ are paths that form a $\H_2$ in $G$, unless $\{z_2,z_4\}\cap \{z'_1,z'_2\}\neq \emptyset$. Without loss of generality, suppose $z_2=z'_1$. Note that $z_2\neq s_2$ (recall that $S$ does not contain any terminal). Let $p$ be the next vertex on the path from $t_2$ to $s_2$ after $z_2$. Then $p,z_2,x_2$, $z_2,x',x_1$, $z_1,x_1,z_3$ is an $\H_2$ in $G$ (note that $\{z_1,z_3\} \cap \{x',z_2,p\}=\emptyset$), a contradiction.

\medskip
\noindent
Finally, it remains to show that $\dagger$ is preserved. Note that $x_1$ and $x_2$ cannot be $z$ or $a_z$ for some terminal $z$, as these vertices have degree~$1$ and~$2$ respectively, while $x_1,x_2$ have degree at least~$3$. Moreover, $\{x_1,x_2\} \not= \{b_z,b_z'\}$ for some terminals $z,z'$ by our preprocessing. Hence, $\dagger$ is preserved.

\section{The Proof of Theorem~\ref{t-khard}}\label{a-khard}

In this section, we prove that for all $k\geq 2$, {\sc $k$-Induced Disjoint Paths} is \NP-complete for $(\H_4,\ldots,\H_\ell)$-subgraph-free graphs for all $\ell\geq 4$.
L{\'{e}}v{\^{e}}que et al.~\cite{LLMT09} considered the problem  $2$-{\sc Induced Cycle}. This problem has as input a graph $G$ with two specified vertices $x$ and $y$ that are not adjacent to each other and have degree~$2$.
The question is whether $G$ has an induced cycle containing $x$ and $y$. This problem was shown to be \NP-complete by Bienstock~\cite{Bi91}.
L{\'{e}}v{\^{e}}que et al.~\cite{LLMT09} proved that {\sc $2$-Induced Cycle} is \NP-complete even for subcubic graphs (under various restrictions).

We first prove that {\sc $2$-Induced Cycle} is \NP-complete for subcubic $(\H_4,\ldots,\H_\ell)$-subgraph-free graphs for all $\ell\geq 4$; afterwards, we use a lemma from~\cite{MPSL23} to prove Theorem~\ref{t-khard}.
As mentioned, we follow the argument from Section 2.4 in \cite{LLMT09} by subdividing certain edges a sufficient number of times. 
Indeed, our gadgets are precisely those from \cite{LLMT09} with some edges subdivided $\ell-1$ times. These edges are drawn in dashed lines in our gadgets in Figures~\ref{fig:2IDP-literal},~\ref{fig:2IDP-clause} and~\ref{fig:2IDP-variable}. Thus, the dashed edges represent $\ell$-paths.

\begin{theorem}\label{t-iii} 
{\sc $2$-Induced Cycle} is \NP-complete for subcubic $(\H_4,\ldots,\H_\ell)$-subgraph-free graphs for all $\ell\geq 4$.
\end{theorem}

\begin{proof}
The problem is ready seen to be in \NP. We will now prove \NP-hardness.

Let $\ell\geq 1$ be an integer. Let $\phi$ be an instance of {\sc 3-Satisfiability}, consisting of $m$ clauses $C_1,\ldots,C_m$ on $n$ variables $z_1,\ldots,z_n$. For each clause $C_j$ ($j=1,\ldots,m$), with $C_j = y_{3j-2} \vee y_{3j-1} \vee y_{3j}$, then $y_i$ ($i = 1,\ldots,3m$) is a literal from $\{z_1,\ldots,z_n, \overline{z}_1,\ldots,\overline{z}_n\}$.

Let us build a graph $G^\ell_\phi$ with two specified vertices $x$ and $y$ of degree $2$. There will be a hole containing $x$ and $y$ in $G_\phi$ if and only if there exists a truth assignment satisfying $\phi$.

For each literal $y_j$ ($j = 1,\ldots,3m$), prepare a graph $G(y_j)$ on 20 named vertices \[\alpha,\alpha',\alpha^{1+},\ldots,\alpha^{4+},\alpha^{1-},\ldots,\alpha^{4-},\beta,\beta',\beta^{1+},\ldots,\beta^{4+},\beta^{1-},\ldots,\beta^{4-},\] together with certain paths in between using unnamed vertices, as drawn in Figure~\ref{fig:2IDP-literal}. (We drop the subscript $j$ in the labels of the vertices for clarity.)

For $i = 1, 2, 3$ add paths of length $\ell$ between $\alpha^{i+}$ and $\alpha^{(i+1)+}$; $\alpha^{i-}$ and $\alpha^{(i+1)-}$; $\beta^{i+}$ and $\beta^{(i+1)+}$; and $\beta^{i-}$ and $\beta^{(i+1)-}$. Also add the edges $\alpha^{1+}\beta^{1-}$, $\alpha^{1-}\beta^{1+}$, $\alpha^{4+}\beta^{4-}$, $\alpha^{4-}\beta^{4+}$, $\alpha \alpha^{1+}$,  $\alpha \alpha^{1-}$, $\alpha^{4+} \alpha'$, $\alpha^{4-} \alpha'$, $\beta \beta^{1+}$,  $\beta \beta^{1-}$, $\beta^{4+} \beta'$, $\beta^{4-} \beta'$.

\begin{figure}[t]
\vspace*{-1cm}
\[
\xymatrix{
& \alpha^{1+} \ar@/_1.5pc/@{-}[ddddd] \ar@{--}[r] & \alpha^{3+} \ar@{--}[r]  & \alpha^{3+} \ar@{--}[r] & \alpha^{4+} \ar@/^1.5pc/@{-}[ddddd] \ar@{-}[dr] \\
\alpha \ar@{-}[ur] \ar@{-}[dr] & & & & & \alpha' \\
& \alpha^{1-} \ar@{-}[d] \ar@{--}[r] &  \alpha^{2-} \ar@{--}[r] & \alpha^{3-} \ar@{--}[r]  & \alpha^{4-} \ar@{-}[d] \ar@{-}[ur] \\
& \beta^{1+} \ar@{--}[r] & \beta^{2+} \ar@{--}[r] & \beta^{3+} \ar@{--}[r]  & \beta^{4+} \ar@{-}[dr] \\
\beta \ar@{-}[ur] \ar@{-}[dr] & & & & &  \beta' \\
& \beta^{1-} \ar@{--}[r] &  \beta^{2-} \ar@{--}[r] & \beta^{3-} \ar@{--}[r]  & \beta^{4-} \ar@{-}[ur] \\
}
\]
\vspace*{-0.3cm}
\caption{The literal gadget 
(dashed lines indicate paths of length $\ell$).}
\label{fig:2IDP-literal}
\vspace*{-0.0cm}
\end{figure}
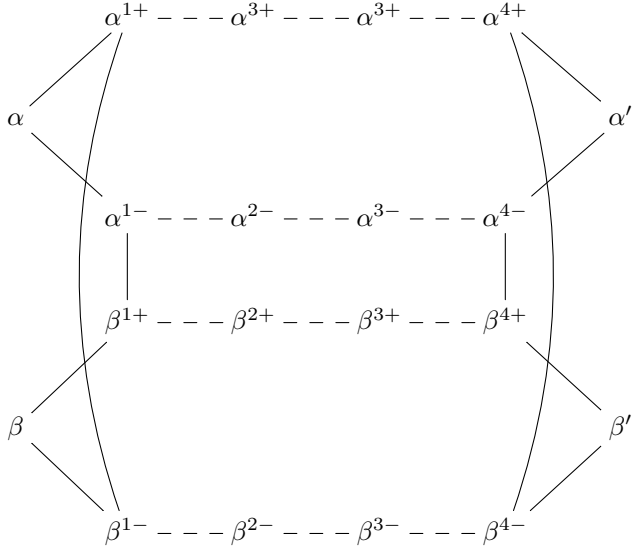
For each clause $C_j$ ($j = 1,\ldots,m$), prepare a graph $G(C_j)$ with 10 named vertices \[c^{1+}, c^{2+}, c^{3+}, c^{1-}, c^{2-}, c^{3-}, c^{0+}, c^{12+}, c^{0-}, c^{12-},\] together with certain paths in between using unnamed vertices, as drawn in Figure~\ref{fig:2IDP-clause}. (We drop the subscript j in the labels of the vertices for clarity.)
Add paths of length $\ell$ between the following pairs of vertices: $c^{12+}$ and $c^{1+}$; $c^{12+}$ and $c^{2+}$; $c^{12-}$ and $c^{1-}$; $c^{12-}$ and $c^{2-}$; $c^{0+}$ and $c^{12+}$; $c^{0+}$ and $c^{3+}$; $c^{0-}$ and $c^{12-}$; $c^{0-}$ and $c^{3-}$.
\begin{figure}
\[
\xymatrix{
& \alpha^{1-} \ar@{--}[r]  &  \alpha^{2-} \ar@{--}[r] & \alpha^{3-} \ar@{--}[r]  &  \alpha^{4-}  \\
& \beta^{1-}  \ar@{--}[r]  & \beta^{2-} \ar@{--}[r] & \beta^{3-} \ar@{--}[r]   & \beta^{4-}  \\
 & & c^{1+} \ar@{-}[u] \ar@/^1pc/@{-}[uu] & c^{1-} \ar@{-}[u]  \ar@/_1pc/@{-}[uu] &  & \\
}
\]

\[
\xymatrix{
& & c^{1+} & c^{1-} \ar@{--}[dr] & & \\
& c^{12+} \ar@{--}[dr] \ar@{--}[ur] & & & c^{12-} \\
c^{0+}  \ar@{--}[drr] \ar@{--}[ur] & &  c^{2+} & c^{2-} \ar@{--}[ur]  & & c^{0-} \ar@{--}[dll] \ar@{--}[ul]   \\
& & c^{3+} &  c^{3-} & & \\
}
\]
\vspace*{-0.5cm}
\caption{Clause gadget and above, its interface with the literal gadget. Dashed lines are paths of length $\ell$.}
\label{fig:2IDP-clause}
\end{figure}
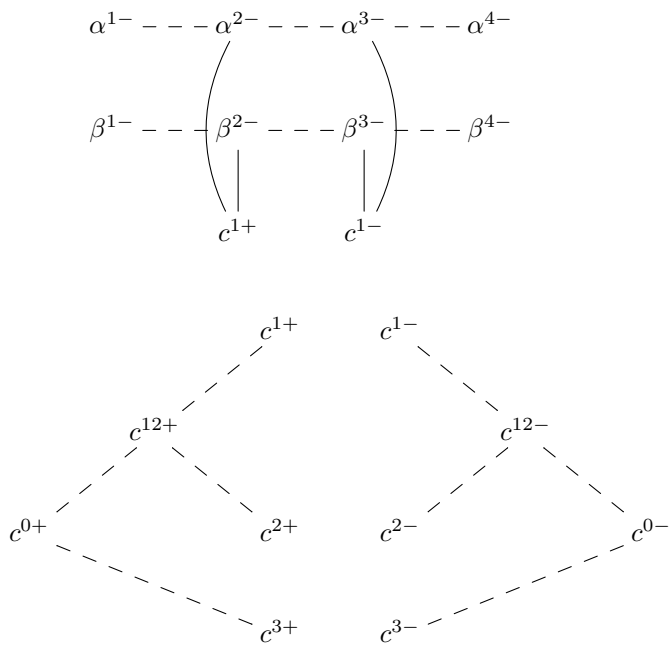
\begin{figure}
\vspace*{-0.7cm}
\[
\xymatrix{
& p^+_{i,1} \ar@{--}[dl] \ar@{--}[r] & p^+_{i,2} \ar@{--}[r] & \bullet \ar@{--}[r] & \bullet \ar@{.}[r]^{P^+} & \bullet \ar@{--}[r] & \bullet  \ar@{--}[r] & p^+_{i,2z^+_i-1} \ar@{--}[r] & p^+_{i,2z^+_i} \ar@{--}[dr] & \\
d^{+} & & & & & & &  & & d^{-} \\
& p^-_{i,1} \ar@{--}[ul] \ar@{--}[r] &  p^-_{i,2} \ar@{--}[r] & \bullet \ar@{--}[r] & \bullet \ar@{.}[r]_{P^-} & \bullet \ar@{--}[r] & \bullet  \ar@{--}[r] & p^-_{i,2z^-_i-1} \ar@{--}[r] & p^-_{i,2z^-_i} \ar@{--}[ur] & \\
}
\]
\vspace*{-0.5cm}
\caption{The variable gadget.
 Dashed lines are paths of length $\ell$. Dotted lines are  a continuation of the gadget.}
\label{fig:2IDP-variable}
\end{figure}
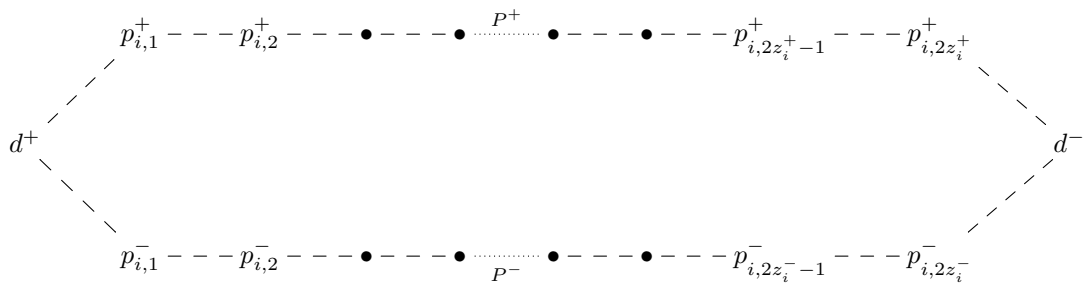

For each variable $z_i$ .($i = 1,\ldots,n$), prepare a graph $G(z_i)$ with $2z_i^- + 2z^+_
i$ vertices, where $z^-_i$ is the number of times $\overline{z}_i$ appears in clauses $C_1, \ldots,C_m$ and $z^+_i$ is the number of times $z_i$ appears in clauses $C_1, \ldots,C_m$.

Let $G(z_i)$ consist of two internally disjoint paths $P^+_i$ and $P^-_i$ with common endpoints $d^+_i$ and $d^-_i$ and lengths $1+ (2\ell)z^-_i$ and $1+ (2\ell)z^+_i$, respectively. Label the vertices of $P^+_i$ and $P^-_i$ as in Figure~\ref{fig:2IDP-variable}.

The final graph $G^\ell_\phi$ will be constructed from the disjoint union of all the graphs $G(y_j)$, $G(C_i)$, and $G(z_i)$ with the following modifications:
\begin{itemize}
\item For $j = 1,\ldots, 3m-1$, add paths of length $\ell$ between the pairs: $\alpha'_j$ and $\alpha_{j+1}$; $\beta'_j$ and $\beta_{j+1}$.
\item For $j = 1,\ldots, m-1$, add a path of length $\ell$ between $c^{0-}_j$ and $c^{0+}_{j+1}$.
\item For $j = 1,\ldots, n-1$, add a path of length $\ell$ between $d^{-}_i$ and $d^{+}_{i+1}$.
\item For $i = 1,\ldots,n-1$, let $y_{n_1},\ldots,y_{n_{z^-_i}}$ be the occurrences of $\overline{z}_i$ over all literals. For $j = 1,\ldots,z^-_i$, delete the path between $p^+_{i,2j-1}$ and $p^+_{i,2j}$ and add the four edges  $p^+_{i,2j-1}\alpha^{2+}_{n_j}$, $p^+_{i,2j-1}\beta^{2+}_{n_j}$, $p^+_{i,2j}\alpha^{3+}_{n_j}$, $p^+_{i,2j}\beta^{3+}_{n_j}$.
\item For $i = 1,\ldots,n-1$, let $y_{n_1},\ldots,y_{n_{z^+_i}}$ be the occurrences of $z_i$ over all literals. For $j = 1,\ldots,z^+_i$, delete the path between $p^-_{i,2j-1}$ and $p^-_{i,2j}$ and add the four edges  $p^-_{i,2j-1}\alpha^{2+}_{n_j}$, $p^-_{i,2j-1}\beta^{2+}_{n_j}$, $p^-_{i,2j}\alpha^{3+}_{n_j}$, $p^-_{i,2j}\beta^{3+}_{n_j}$.
\item For $i = 1,\ldots,m$ and $j=1,2,3$, add the edges $\alpha^{2-}_{3(i-1)+j}c^{j+}_i$, $\alpha^{3-}_{3(i-1)+j}c^{j-}_i$, $\beta^{2-}_{3(i-1)+j}c^{j+}_i$, $\beta^{3-}_{3(i-1)+j}c^{j-}_i$.
\item Add a path of length $\ell$ between the pairs of vertices: $\alpha'_{3m}d^+_1$ and $d^+_1$; $\beta'_{3m}d^+_1$ and $c^{0+}_1$.
\item Add the vertex $x$ and add paths of length $\ell$ between the pairs of vertices: $x$ and $\alpha_1$; $x$ and $\beta_1$.
\item Add the vertex $y$ and add paths of length $\ell$ between the pairs of vertices: $y$ and $c^{0-}_m$; $y$ and $d^{-}_n$.
\end{itemize}

It is easy to verify that the maximum degree of $G^\ell_\phi$ is $3$. Moreover, the size of $G^\ell_\phi$  is polynomial (actually linear) in the size $n+m$ of $\phi$, and $x$ and $y$ are non-adjacent, and both have degree~$2$. 

We also claim that for $\ell\geq 4$, $G^\ell_\phi$ is $\H_4,\ldots,\H_{\ell}$-subgraph-free, which can be seen as follows (note that $G^\ell_\phi$ contains $\H_1$, $\H_2$ and $\H_3$ as subgraphs).
Owing to the length of the $\ell$-paths that populate our construction, we need only verify the omission of these graphs on the connected components of the graph $G^\ell_\phi$ after the removal of these paths (except a pendant edge from the corresponding connected component at the extremities of an instance of these paths). In this fashion, we only need to check for omission of the given graphs in the non-trivial cases drawn in Figure~\ref{fig:2IDP-omission-cases}.
Indeed, the two cases are isomorphic. Let $i=4,5$. Any two vertices of degree at least three that are separated by a path of length $i$ must be in the subgraph $C_6$ at distance $6-i$ from one another. If $i=4$ then these vertices have a common neighbour so the $\H_i$ can't be completed. If $i=5$ then these two vertices are adjacent. For $6\leq i \leq \ell-i$ it is not possible to find two vertices of degree at least three that are separated by a path of length $\ell$.

\begin{figure}[tbp]
\vspace*{-0.2cm}
$
\xymatrix{
& & \alpha^{1+} \ar@/_1.5pc/@{-}[ddddd]  \ar@{-}[r] & \bullet \\
\bullet \ar@{-}[r]  & \alpha \ar@{-}[ur] \ar@{-}[dr] & & & \\
& & \alpha^{1-} \ar@{-}[d] \ar@{-}[r]  &  \bullet \\
& & \beta^{1+} \ar@{-}[r] &\bullet \\
\bullet \ar@{-}[r] & \beta \ar@{-}[ur] \ar@{-}[dr] & & &  \\
& & \beta^{1-} \ar@{-}[r] &  \bullet \\
}
$
\ \ \
$
\xymatrix{
\bullet  \ar@{-}[r] &  \alpha^{2-} \ar@{-}[r] & \alpha^{3-} \ar@{-}[r] & \bullet  \\
\bullet \ar@{-}[r] & \beta^{2-} \ar@{-}[r] & \beta^{3-} \ar@{-}[r]  & \bullet  \\
& c^{1+} \ar@{-}[u]  \ar@/^1pc/@{-}[uu] & c^{1-} \ar@{-}[u]  \ar@/_1pc/@{-}[uu] &  \\
\bullet \ar@{-}[ur] & & & \bullet \ar@{-}[ul] \\
}
$
\vspace*{-0.2cm}
\caption{Cases that need to be checked for omission of graphs $\H_\ell$.}
\label{fig:2IDP-omission-cases}
\vspace*{-0.0cm}
\end{figure}
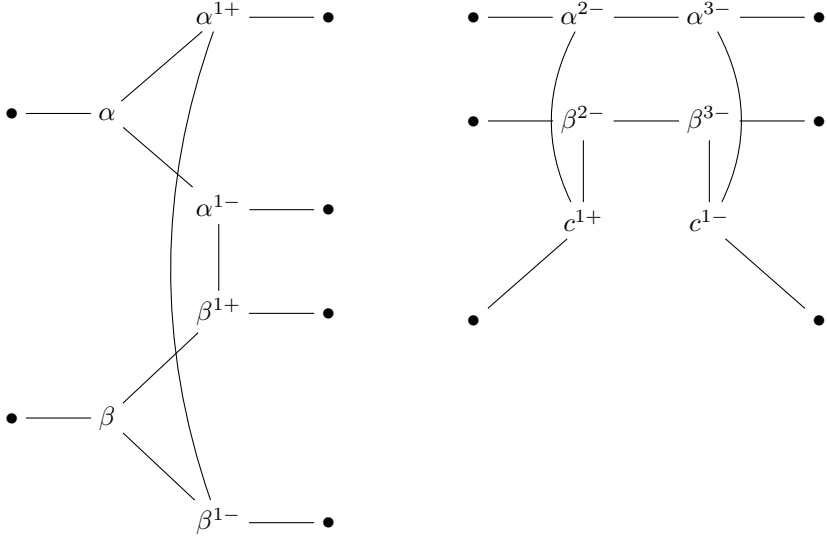

We now show that  $\phi$ is satisfied by a truth assignment if and only if $G^\ell_\phi$ contains a hole passing through $x$ and $y$.

\medskip
\noindent
$``\Rightarrow''$
First assume that $\phi$ is satisfied by a truth assignment $\xi \in \{0,1\}^n$. We will pick a set of vertices that induce a hole containing $x$ and $y$.
\begin{enumerate}
\item Pick vertices $x$ and $y$.
\item For $i = 1,\ldots,3m$, pick the vertices $\alpha_i,\alpha'_i,\beta_i,\beta'_i$.
\item For $i = 1,\ldots,3m$, if $y_i$ is satisfied by $\xi$ , then pick the vertices $\alpha^{1+}_i,\alpha^{2+}_i,\alpha^{3+}_i,\alpha^{4+}_i,\beta^{1+}_i,\beta^{2+}_i$, $\beta^{3+}_i,\beta^{4+}_i$. Otherwise, pick the vertices $\alpha^{1-}_i,\alpha^{2-}_i,\alpha^{3-}_i,\alpha^{4-}_i,\beta^{1-}_i,\beta^{2-}_i,\beta^{3-}_i,\beta^{4-}_i$.
\item For $i = 1,\ldots,n$, if $\xi(i) = 1$, then pick all the vertices of the path $P^+_i$ and all the neighbours of the vertices in $P^+_i$ of the form $\alpha^{2+}_k$ or $\alpha^{3+}_k$ for any $k$.
\item For $i = 1,\ldots,n$, if $\xi(i) = 0$, then pick all the vertices of the path $P^-_i$ and all the neighbours of the vertices in $P^-_i$ of the form $\alpha^{2+}_k$ or $\alpha^{3+}_k$ for any $k$.
\item For $i = 1,\dots,m$, pick the vertices $c^{0+}_i$ and $c^{0-}_i$. Choose any $j \in \{3i-2, 3i-1, 3i\}$ such that $\xi$ satisfies $y_j$. Pick vertices $\alpha^{2-}_j$ and $\alpha^{3-}_j$. If $j = 3i - 2$, then pick the vertices $c^{12+}_j,c^{1+}_j,c^{12-}_j,c^{1-}_j$.  If $j = 3i - 1$, then pick the vertices $c^{12+}_j,c^{2+}_j,c^{12-}_j,c^{2-}_j$.   If $j = 3i$, then pick the vertices $c^{3+}_j,c^{3-}_j$.
\end{enumerate}
The given vertices do not yet induce a connected component, because we need to add the vertices of $\ell$-paths in between. Thus, if $p$ and $q$ are vertices which we selected that have an $\ell$-path between them (drawn as a dashed edge in the associated gadget), then we need to add the interior vertices of this path also.

It suffices to show that the chosen vertices induce a hole containing $x$ and $y$. The only potential problem is that for some $k$, one of the vertices $\alpha^{2+}_k,\alpha^{3+}_k,\alpha^{2-}_k,\alpha^{3-}_k$ was chosen more than once. If $\alpha^{2+}_k$ and $\alpha^{3+}_k$ were picked in Step 3, then $y_k$ is satisfied by $\xi$. Therefore, $\alpha^{2+}_k$ and $\alpha^{3+}_k$ were not chosen in Step 4 or Step 5. Similarly, if $\alpha^{2-}_k$ and $\alpha^{3-}_k$ were picked in Step 6, then $y_k$ is satisfied by $\xi$ and $\alpha^{2-}_k$ and $\alpha^{3-}_k$ were not picked in Step 3. Thus, the chosen vertices induce a hole in $G^\ell_\phi$ containing vertices $x$ and $y$.

\medskip
\noindent
$``\Leftarrow''$
To show the reverse implication, assume $G^\ell_\phi$ contains a hole $H$ passing through $x$ and $y$.
The hole $H$ must contain $\alpha_1$ and $\beta_1$, and the paths leading to them, since they are the only two path neighbours of $x$. 
Next, either both $\alpha^{1+}_1$ and $\beta^{1+}_1$ are in $H$ or both $\alpha^{1-}_1$ and $\beta^{1-}_1$ are in $H$.

Without loss of generality, let $\alpha^{1+}_1$ and $\beta^{1+}$ be in $H$ (the same reasoning that follows will hold true for the other case). Since $\alpha^{1-}_1$ and $\beta^{1-}$ are both neighbours of two members of $H$, they cannot be in $H$. Thus, $\alpha^{2+}_1$ and $\beta^{2+}_1$, and the paths to them, must be in $H$. 
Since $\alpha^{2+}_1$ and $\beta^{2+}_1$ have the same neighbours outside $G(y_1)$, it follows that $H$ must contain $\alpha^{3+}_1$ and $\beta^{3+}_1$, and the paths that lead to them.  Also, $H$ must contain $\alpha^{4+}_1$ and $\beta^{4+}_1$, and the paths that lead to them. Suppose that $\alpha^{4-}_1$ and $\beta^{4-}_1$ are in $H$. Because $\alpha^{i-}_1$ has the same neighbour as $\beta^{i-}_1$ outside $G(y_1)$ for $i = 2, 3$, it follows that H must contain $\alpha^{3-}_1$, $\alpha^{2-}_1$, $\alpha^{1-}_1$. But then $H$ is not a hole containing $x$, a contradiction. Therefore, $\alpha^{4-}_1$ and $\beta^{4-}_1$ cannot both be in $H$, so $H$ must contain $\alpha'_1$, $\beta'_1$, $\alpha_2$, $\beta_2$, and the paths to them.

By induction, we see for $i = 1, 2, \ldots,3m$ that $H$ must contain $\alpha_i,\alpha'_i,\beta_i,\beta'_i$. Also, for each $i$, either $H$ contains $\alpha^{1+}_i,\alpha^{2+}_i,\alpha^{3+}_i,\alpha^{4+}_i,\beta^{1+}_i,\beta^{2+}_i,\beta^{3+}_i,\beta^{4+}_i$ or $H$ contains $\alpha^{1-}_i,\alpha^{2-}_i,\alpha^{3-}_i,\alpha^{4-}_i$, $\beta^{1-}_i,\beta^{2-}_i,\beta^{3-}_i,\beta^{4-}_i$.

As a result, $H^\ell_\phi$ must also contain $d^+_1$ and $c^{0+}_1$ and the paths to them. By symmetry, we may assume $H^\ell_\phi$ contains $p^+_{1,1}$ and $\alpha^{2+}_k$, for some $k$. Since $\alpha^{1+}_k$ is adjacent to two vertices in $H$, $H$ must contain  $\alpha^{3+}_k$ and the path of length $\ell$ toward it. Similarly, $H$ cannot contain $\alpha^{4+}_k$, so $H$ contains $p^+_{1,2}$ and $p^+_{1,3}$, as well as the paths through these. By induction, we see that $H$ contains $p^+_{1,i}$ for $i = 1, 2, \ldots, z^+_i$ and $d^-_1$ and the $\ell$-paths in between. If $H$ contains $p^-_{1,z^-_i}$, then $H$ must contain $p^-_{1,i}$ for $i = z^-_i, \ldots, 1$, a contradiction. Thus, $H$ must contain $d^+_2$ and the $\ell$-path to it. By induction, for $i = 1, 2, \ldots,n$, we see that $H$ contains all the vertices of the path $P^+_i$ or $P^-_i$ and by symmetry, we may assume $H$ contains all the neighbours of the vertices in $P^+_i$ or $P^-_i$ of the form $\alpha^{2+}_k$ or $\alpha^{3+}_k$, for any k.

Similarly, for $i = 1, 2, \ldots,m$, it follows that $H$ must contain $c^{0+}_i$ and $c^{0-}_i$. Also, $H$ contains one of the following:
\begin{itemize}
\item $c^{12+}_i,c^{1+}_i,c^{12-}_i,c^{1-}_i$ and either $\alpha^{2-}_j$ and $\alpha^{3-}_j$ or $\beta^{2-}_j$ and $\beta^{3-}_j$ (where $\alpha^{2-}_j$ is adjacent to $c^{1+}_i$). 
\item $c^{12+}_i,c^{2+}_i,c^{12-}_i,c^{2-}_i$ and either $\alpha^{2-}_j$ and $\alpha^{3-}_j$ or $\beta^{2-}_j$ and $\beta^{3-}_j$ (where $\alpha^{2-}_j$ is adjacent to $c^{2+}_i$). 
\item $c^{3+}_i,c^{3-}_i$ and either $\alpha^{2-}_j$ and $\alpha^{3-}_j$ or $\beta^{2-}_j$ and $\beta^{3-}_j$ (where $\alpha^{2-}_j$ is adjacent to $c^{3+}_i$). 
\end{itemize}We can recover the satisfying assignment $\xi$ as follows. For $i = 1, 2, \ldots,n$, set $\xi(i) = 1$ if the vertices of $P^+_i$ are in $H$ and set $\xi(i) = 0$ if the vertices of $P^-_i$ are in $H$. By construction, it is easy to verify that at least one literal in every clause is satisfied, so $\xi$ is indeed a satisfying assignment. 
\end{proof}

\noindent
We note that neither the gadget of L{\'{e}}v{\^{e}}que et al.~\cite{LLMT09} nor our gadget in Theorem~\ref{t-iii} has high girth.

As mentioned, we also need the following known lemma from~\cite{MPSL23}, which we rephrase more explicitly (in order to ensure $(\H_4,\ldots,\H_\ell)$-subgraph-freeness) and therefore we added its proof from~\cite{MPSL23} as well.

	\begin{figure}
\resizebox{10cm}{!}{
	\begin{tikzpicture}
	[dot/.style={circle,draw=black, fill,inner sep=1pt},]	
	\node[dot] at (0,12){$s_1$ };
	\node[dot] at (2,12){$p_1$};
	\node[dot] at (4,12){$q_1$};
	\node[dot] at (6,12){$x_1$};
	\node[dot] at (4,10) {$r_1$};
	\node[dot] at (4,8) {$s_2$};
	\node[dot] at (6,8){$r_2$};
	\node[dot] at (8,8){$q_2$};
	\node[dot] at (10,8){$x_2$};
	\node[dot] at (0,6){$p_2$};
	\draw(4,8)--(4.75,8);
	\draw(5.25,8)--(6,8);
	\draw(6,8)--(6.75,8);
	\draw(7.25,8)--(8,8);
	\draw(8,8)--(8.75,8);
	\draw(9.25,8)--(10,8);
	\node[circle, draw=black, inner sep=5pt]() at (5,8){};
	\node[circle, draw=black, inner sep=5pt]() at (7,8){};
	\node[circle, draw=black, inner sep=5pt]() at (9,8){};
	\draw(4,12)--(4,11.25);
	\draw(4,10.75)--(4,10);
	\draw(4,10)--(4,9.25);
	\draw(4,8.75)--(4,8);
	\node[circle, draw=black, inner sep=5pt] at (0,9){};
	\draw(0,6)--(0,8.75);
	\draw(0,9.25)--(0,12);
	\node at (0,12.5){$s_1$};
	\node at (2,12.5){$p_1$};
	\node at (4,12.5){$q_1$};
	\node at (6,12.5){$x_1$};
	\node at (4.5,10){$r_1$};
	\node at (4,7.5) {$s_2$};
	\node at (6,8.5) {$r_2$};
	\node at (8,8.5) {$q_2$};
	\node at (10,8.5){$x_2$};
	\node at (0,5.5) {$p_2$};
	\draw(6,12)--(7,13);
	\draw(6,12)--(7,11);
	\draw(10,8)--(11,9);
	\draw(10,8)--(11,7);
	\node[circle, draw=black, inner sep=5pt](d1) at (1,12){};
	\node[circle, draw=black, inner sep=5pt](d2) at (3,12){};
	\node[circle, draw=black, inner sep=5pt](d3) at (5,12){};
	\draw(0,12)--(0.75,12);
	\draw(1.25,12)--(2,12);
	\draw(2,12)--(2.75,12);
	\draw(3.25,12)--(4,12);
	\draw(4,12)--(4.75,12);
	\draw(5.25,12)--(6,12);
	\node[circle, draw=black, inner sep=5pt](d3) at (4,11){};
	\node[circle, draw=black, inner sep=5pt](d3) at (4,9){};
	\draw(0,6)--(3.75,7);
	\draw(4.25,7.1)--(8,8);
	\node(d)[circle, draw=black, inner sep=5pt] at (4,7){};
\end{tikzpicture}
}
\caption{The part of the graph $G^1$ that corresponds to a vertex $x$ in a graph $G$ that has exactly two neighbours~$x_1$ and $x_2$; figure taken from~\cite{MPSL23}.}\label{f-nnew}
\end{figure}
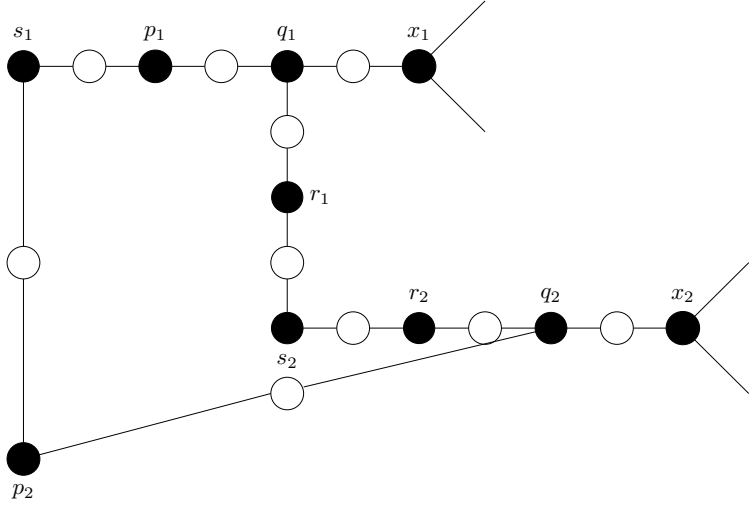

\begin{lemma}[\cite{MPSL23}]\label{l-i}
Let $\ell \geq 4$ and $G$ be a subcubic $(\H_4,\ldots,\H_\ell)$-subgraph-free graph with two non-adjacent vertices $x$ and $y$ of degree~$2$. 
It is possible to construct in polynomial time a subcubic $(\H_4,\ldots,\H_\ell)$-subgraph-free graph $G'$ that contains vertices $s_1,t_1,s_2,t_2$ such that $(G,x,y)$ is a yes-instance of $2$-{\sc Induced Cycle} if and only if  
$(G',\{(s_1,t_1),(s_2,t_2)\})$ is a yes-instance of {\sc $2$-Induced Disjoint Paths}.
\end{lemma}

\begin{proof}
Let $x$ and $y$ have neighbours $x_1$, $x_2$ and $y_1$, $y_2$ respectively. We replace $x$ and its incident edges by the following. Create vertices $p_1$, $q_1$, $r_1$, $p_2$, $q_2$, $r_2$, $s_1$, $s_2$. Add edges $s_1p_1, p_1q_1,q_1x_1, q_1r_1, r_1s_2$ and $s_2r_2, r_2q_2, q_2x_2,
p_2s_1, p_2q_2$.
In the same way, we replace $y$ and its incident edges by vertices $a_1,b_1,c_1,a_2,b_2,c_2,t_1,t_2$ and edges $a_1t_1, a_1b_1, b_1y_1, b_1c_1, c_1t_2$ and $a_2t_1, a_2b_2, b_2y_2, b_2c_2, c_2t_2$.
Note that any vertex that is introduced has degree at most~$3$. For an integer $d\geq 0$, we now subdivide every incident edge of every newly introduced vertex $d$ times. We denote the new graph by $G^d$ and say that $G^d$ is the {\it $d$-replacement} for $G$. Note that $G^d$ has maximum degree~$3$ as well. See also Fig.~\ref{f-nnew}, where we display the replacement of $x$ for $d=1$.
It is readily seen that for every integer~$d\geq 0$, $G^d$ has maximum degree ~$3$.

To prove the remainder of the lemma, first assume that $d=0$. We observe that the paths $s_1,p_1,q_1,x_1$ and $s_2,r_2,q_2,x_2$ are mutually induced. The paths $s_1,p_2,q_2,x_2$ and $s_2,r_1,q_1,x_1$ are also mutually induced. Moreover, these are the only two options that can co-exist, in the following sense.
 A path from $s_1$ to~$x_1$ that uses only edges of this gadget and that does not have $s_2$ as an internal vertex has to pass through $q_1$.
 Similarly, a path from $s_2$ to $x_2$ that uses only edges of this gadget and that does not have $s_1$ as an internal vertex has to pass through $q_2$. 
The same observations hold with respect to the gadget replacing vertex~$y$.
Hence, $G$ has a hole containing $x$ and $y$ if and only if $G^d$ has mutually induced paths between $s_1$ and $t_1$ and between $s_2$ and $t_2$.

Note that any incident edges of every vertex of $G^d$ that does not belong to $G$ can be subdivided an arbitrary number of times without affecting the correctness of the reduction ($q_1,q_2,b_1,b_2$ remain bottlenecks). So, the claim also holds for $d\geq 1$, 
and by taking $d$ sufficiently large, we obtain the desired subcubic $(\H_4,\ldots,\H_\ell)$-subgraph-free graph $G'$.
\end{proof}

\noindent
We are now ready to prove Theorem~\ref{t-khard}, which we restate below.

\medskip
\noindent
{\bf Theorem~\ref{t-khard} (restated).}
{\it For all $k\geq 2$, {\sc $k$-Induced Disjoint Paths} is \NP-complete for subcubic $(\H_4,\ldots,\H_\ell)$-subgraph-free graphs for all $\ell\geq 4$.}

\begin{proof}
The proof follows from combining Theorem~\ref{t-iii} with Lemma~\ref{l-i}.
\end{proof}

\section{The Proof of Theorem~\ref{thm:C5-char}}\label{a-c5c}

In this section we prove Theorem~\ref{thm:C5-char}, which we restate below.

\medskip
\noindent
{\bf Theorem~\ref{thm:C5-char} (restate).}
{\it The only $\H_3$-subgraph-free $C_5$-critical graphs are $K_3$, 
    odd flowers $F_n$ for $n \geq 3$, and the exceptional graphs $E_1, E_2$ and $E_3$. Equivalently,  the following three statements are true:
  \begin{enumerate}
    \item All $\H_3$-subgraph-free graphs of girth at least $6$ are $C_5$-colourable. 
    \item The only $\H_3$-subgraph-free $C_5$-critical graphs 
    of girth $5$ are $E_1$, $E_2$ and odd $C_5$-flowers $F_n$.
    \item The only $\H_3$-subgraph-free $C_5$-critical graph of
    girth $4$ is $E_3$. 
  \end{enumerate}}

\medskip
\noindent
For doing this, we will need the following notions for a graph $G$.   
 \begin{itemize}  
    \item A {\em $k$-vertex} is a vertex of degree $k$ and a $k^+$-vertex is a vertex of degree at least $k$. 
    \item A {\em $k$-thread} (respectively, $k^+$-thread) is a path consisting of $k$ (respectively, at least $k$) distinct $2$-vertices. (We often refer to a $k$-thread between endpoints not in the thread.)
    \item Two non-adjacent vertices are {\em clones} if they have the same neighbourboods. A vertex $u$ is a {\em subclone} of a nonadjacent vertex $v$ if its neighbourhood is contained in that of $v$. 
 \end{itemize}

 \subsection{$C_5$-critical graphs}
 
 We list some easy observations about $C_5$-critical graphs.  
\begin{fact}\label{fact:critical}
  The following hold for all $C_5$-critical graphs $G$.
  \begin{enumerate}
            \item $G$ cannot properly contain a $C_5$-critical graph. In particular, it cannot contain a $K_3$. 
            \item $G$ is $2$-connected. 
            \item $G$ has no {\em removable threads}:  $3^+$-threads. 
            \item $G$ has no {\em redundant threads}: $k$-threads between end vertices that are the end vertices of a 
                  $k$-path that is disjoint from the $k$-thread.   
            \item $G$ has no {\em redundant vertices}: clones or subclones of another vertex.  
            \item Any vertex in a $C_4$ is a $3^+$-vertex. 
  \end{enumerate}
\end{fact}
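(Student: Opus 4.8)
The plan is to verify the six items essentially independently, each by the standard criticality paradigm: delete a small piece of $G$ to obtain a proper subgraph, invoke $C_5$-criticality to get a $C_5$-colouring of that subgraph, and then argue that this colouring extends to all of $G$, contradicting the non-colourability of $G$. Two facts about the target do most of the work: $C_5$ is vertex- (indeed arc-) transitive, and by Lemma~\ref{lem:C5-simple} for every $N \geq 4$ and every ordered pair $x,y$ of its vertices there is a walk of length $N$ from $x$ to $y$.

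For (1) I would argue directly from the definition: if a proper subgraph $H$ of $G$ were $C_5$-critical, then $H$ is not $C_5$-colourable, yet as a proper subgraph of the critical graph $G$ it must be; contradiction. Since the preceding Fact records that $K_3$ is itself $C_5$-critical, this shows $G$ contains no $K_3$ as a \emph{proper} subgraph, i.e.\ apart from $K_3$ itself every $C_5$-critical graph is triangle-free. For (2) I first rule out disconnection (a component that fails to map to $C_5$ would be a proper subgraph that must map) and then rule out a cut vertex $v$: write $G = G_1 \cup G_2$ with $G_1 \cap G_2 = \{v\}$, colour each $G_i$ (proper subgraphs), and use vertex-transitivity of $C_5$ to rotate one colouring so the two agree on $v$; their union then colours $G$.

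For the two thread items I use the walk structure of $C_5$. For (3), given a $3^+$-thread with internal (degree-$2$) vertices $w_1,\dots,w_k$ ($k\geq 3$) between endpoints $u,v$, I delete the $w_i$, colour the remaining proper subgraph, and re-extend along the thread; this is exactly a walk of length $k+1\geq 4$ in $C_5$ between the colours of $u$ and $v$, which exists by Lemma~\ref{lem:C5-simple}, so $G$ would be colourable. For (4) the neat point is that the lemma is not even needed: if a $k$-thread $T_1$ and a disjoint $u$-$v$ path $T_2$ of the same length both join $u$ and $v$, I delete the internal vertices of $T_1$ (all of degree $2$, so no other constraints are lost), colour what remains (which still contains $T_2$), and simply copy the colour sequence read off along $T_2$ onto $T_1$. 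Since $T_2$ is properly coloured, that sequence is a genuine walk of the correct length between the colours of $u$ and $v$, so it legally colours $T_1$.

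For (5), given a subclone $u$ of $v$ (so $u,v$ non-adjacent with $N(u)\subseteq N(v)$, clones being the case of equality), I delete $u$, colour $G-u$, and set the colour of $u$ equal to that of $v$; every neighbour of $u$ lies in $N(v)$ and so is coloured adjacently to $v$'s colour, making $u$ properly coloured with no $u$-$v$ conflict. Finally (6) is a corollary of (5): a degree-$2$ vertex $a$ on a $4$-cycle $abcda$ has $N(a)=\{b,d\}\subseteq N(c)$ with $a,c$ non-adjacent, so $a$ is a subclone of $c$, which (5) forbids; hence every vertex of the $C_4$ has degree at least $3$. I expect no step to be a real obstacle, since all reduce to one-line extension arguments; the only points needing care are bookkeeping: in (4) the hypothesis that the two $u$-$v$ paths have equal length is exactly what guarantees the transplanted walk lands on $v$'s colour, and in (2) one must check that $G_1,G_2$ are genuinely proper subgraphs so that criticality applies.
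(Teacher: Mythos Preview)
Your proposal is correct and follows essentially the same approach as the paper: delete the offending piece, $C_5$-colour the remainder by criticality, and extend (via the walk property of $C_5$ for item~(3), by copying the parallel path for item~(4), by assigning the super-clone's colour for item~(5)), with item~(6) derived as a corollary of~(5). Your write-up is in fact more explicit than the paper's, which dispatches items~(1) and~(2) as ``clear'' and only sketches the others.
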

\begin{proof}
   That the first two items hold are clear.  To see that $C_5$-critical $G$ cannot have a removable thread, remove the thread and $C_5$-colour the remaining graph $G'$. As the thread has length at least $k$, we can extend the $C_5$-colouring of $G'$ to $G$; contradicting the fact that $G$ was $C_5$-critical.  To see that $G$ cannot have a redundant thread, remove it, and get a $C_5$-colouring of the remaining graph $G'$.  We can extend this to a $C_5$-colouring of $G$ be letting the colouring on the redundant thread be the same as on the path of $G'$ of the same length between the endpoints of the thread. That redundant vertices cannot occur in a critical graph is clear-- they can always get the same colour as the vertex they are a subclone of, so removing them does not change the possible colourings. That any vertex in a $C_4$ is a $3^+$ vertex is because a $2$-vertex in a $C_4$ is redundant.
   That 
\end{proof}
  
  As $G$ is $2$-connected, we will often apply the following well known version of the classical Ear Decomposition result of Whitney. 
  Recall that an ear of $G'$ in $G$ is a  path in $G$, possibly a single edge, whose intersection with $G'$ is its endpoints. 

  \begin{lemma}\label{lem:Ear}
    Any $2$-connected graph $G$ that is not just a cycle can be constructed from any $2$-connected subgraph by adding successive 
    ears. 
  \end{lemma}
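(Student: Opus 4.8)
The plan is to prove this by a standard maximality argument, since the statement is precisely the open ear decomposition relative to a prescribed $2$-connected subgraph. Fix the given $2$-connected subgraph and call it $H$. I would consider the collection of all subgraphs $H'$ of $G$ with $H \subseteq H' \subseteq G$ that can be obtained from $H$ by successively adding ears (where, as defined just before the lemma, an ear is a path, possibly a single edge, whose endpoints are distinct vertices of the current graph and whose internal vertices, if any, lie outside it). This collection is nonempty, as $H$ itself qualifies, and $G$ is finite, so I would pick an $H'$ in it that is maximal, say with respect to $|V(H')| + |E(H')|$. The goal is then to show $H' = G$; granting this, the sequence of ears witnessing that $H'$ is reachable from $H$ is exactly the desired decomposition.

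Suppose for contradiction that $H' \neq G$. First I would dispose of the case $V(H') = V(G)$: then there is an edge $e \in E(G) \setminus E(H')$, and since both of its endpoints already lie in $V(H')$, the single edge $e$ is itself an ear of $H'$ in $G$. Adding it produces a strictly larger member of the collection, contradicting maximality.

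The remaining case is $V(H') \subsetneq V(G)$. Since $G$ is connected there is an edge $uw \in E(G)$ with $u \in V(H')$ and $w \notin V(H')$. Here is where I would use the full strength of $2$-connectivity of $G$: the graph $G - u$ is connected, and it contains $w$ as well as at least one vertex of $V(H') \setminus \{u\}$ (such a vertex exists since $H$ is $2$-connected). Hence there is a path in $G - u$ from $w$ to $V(H')$; choosing such a path $P$ of minimum length guarantees that $P$ meets $V(H')$ only in its terminal vertex, call it $u'$, and that $u' \neq u$ (as $u \notin V(G-u)$). Concatenating the edge $uw$ with $P$ yields a path from $u$ to $u'$ whose two endpoints are distinct vertices of $H'$ and all of whose internal vertices, namely $w$ and the interior of $P$, lie outside $V(H')$. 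This is an ear of $H'$ in $G$, and adding it again contradicts the maximality of $H'$. Therefore $H' = G$, as claimed.

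I do not expect a serious obstacle here, as this is the classical result of Whitney; the only point requiring care is the case $V(H') \subsetneq V(G)$, where one must produce an ear whose interior consists entirely of new vertices and whose endpoints are distinct. Both requirements are secured by passing to $G - u$, which is connected precisely because $G$ is $2$-connected, and taking a shortest path back to $H'$, so that the returning path touches $H'$ only at its far end. I would also note in passing that adding an ear preserves $2$-connectivity, so every intermediate graph in the decomposition is genuinely $2$-connected, consistent with the statement; but this fact is not actually needed for the maximality argument, which invokes only the $2$-connectivity of $G$ itself.
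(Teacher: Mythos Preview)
Your argument is correct and is exactly the standard maximality proof of the open ear decomposition (as in Diestel), which is precisely what the paper invokes: the paper does not give its own proof but simply notes that the standard argument from \cite{GraphTheory} adapts to start from an arbitrary $2$-connected subgraph rather than a cycle. Your write-up supplies that adaptation in full, so there is nothing to correct or compare.
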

   
   We cannot find a direct reference to this version of the Ear Decomposition theorem. The standard version says that $G$ can be constructed from a cycle by adding successive ears. The standard proof though, such as is found in \cite{GraphTheory}, can be used to get the version we have stated.  

   \subsection{$\H_3$-subgraph-free graphs} 
   
   Many of our arguments will require finding a copy of $\H_3$ in a graph that we have constructed.  We will denote a copy of $\H_3$ in a graph as 
         \[ \{t,t'\},u,v,w,x,\{y,y'\} \]
   where the central path of the $\H_3$ is the path $u,v,w,x$, the leaves adjacent to $u$ are $t$ and $t'$ and those adjacent to $x$ are $y$ and $y'$.   

   The following basic fact can be readily seen.
   
  \begin{fact}\label{fact:H3}
   In an $\H_3$-subgraph-free graph, two $3^+$-vertices with a $3$-path between them are at distance at most $2$. They are either adjacent in a $C_4$ or at distance $2$ in a $C_5$.
   \end{fact}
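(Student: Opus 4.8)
The plan is to use the two $3^+$-vertices, say $a$ and $b$, together with the length-$3$ path $P = a\,c\,d\,b$ joining them, as the spine of a potential copy of $\H_3$, and to argue that a copy must appear unless $a$ and $b$ are close. First I would record the degree budget: since $\deg(a)\ge 3$, the vertex $a$ has at least two neighbours other than $c$, and likewise $b$ has at least two neighbours other than $d$. If one could select neighbours $t,t'$ of $a$ (both $\ne c$) and $y,y'$ of $b$ (both $\ne d$) so that the eight vertices $t,t',a,c,d,b,y,y'$ are pairwise distinct, then the seven edges $at, at', ac, cd, db, by, by'$ would form a subgraph isomorphic to $\H_3$ with central path $a,c,d,b$ and leaf-pairs $\{t,t'\}$ at $a$ and $\{y,y'\}$ at $b$, contradicting $\H_3$-subgraph-freeness. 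Hence the entire content of the statement is that such a selection is blocked, and I would show that this forces $\dist(a,b)\le 2$.

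The main step is the contrapositive. Assume $\dist(a,b)\ge 3$; since $P$ already certifies $\dist(a,b)\le 3$, in fact $\dist(a,b)=3$. Distance $3$ immediately gives $a\not\sim b$ and $N(a)\cap N(b)=\emptyset$; in particular, because $d\in N(b)$ and $c\in N(a)$, we get $d\notin N(a)$ and $c\notin N(b)$. Consequently $N(a)\setminus\{c\}$ and $N(b)\setminus\{d\}$ are disjoint sets, each of size at least two, both avoiding $\{a,b,c,d\}$. Choosing any two elements of each as $t,t'$ and $y,y'$ then yields eight distinct vertices and hence the forbidden $\H_3$, a contradiction. Therefore $\dist(a,b)\le 2$, which is the first assertion.

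It then remains to pin down the witnessing short cycle. If $\dist(a,b)=1$ then $a\sim b$, and the edge $ab$ closes $P$ into the cycle $a\,c\,d\,b\,a$, a $C_4$ containing the adjacent pair $a,b$. If $\dist(a,b)=2$, let $z$ be a common neighbour of $a$ and $b$; since the interior vertices $c,d$ of the thread are $2$-vertices we have $z\notin\{c,d\}$, so $z$ lies off the interior of $P$ and the path $a\,z\,b$ closes $P$ into the cycle $a\,c\,d\,b\,z\,a$, a $C_5$ in which $a$ and $b$ sit at distance $2$. The one delicate point, and the step I expect to be the main obstacle, is precisely this last case: one must exclude the possibility that the only short connection between $a$ and $b$ runs through an interior vertex of $P$ (for instance $b\sim c$), which would close a triangle rather than a $C_5$. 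This is exactly where the hypothesis that the $3$-path is a thread is used, as the $2$-vertex condition on $c$ and $d$ forbids $a\sim d$ and $b\sim c$ and thereby keeps the common neighbour off the spine.
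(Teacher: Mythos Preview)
Your argument for the first sentence---that $\dist(a,b)\le 2$---is correct and is exactly the intended one: if $a$ and $b$ were at distance $3$ then $N(a)$ and $N(b)$ would be disjoint, and choosing two neighbours of each off the spine produces an $\H_3$ on $a,c,d,b$.

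For the second sentence, however, you invoke ``the hypothesis that the $3$-path is a thread'', i.e.\ that the interior vertices $c,d$ are $2$-vertices of $G$. That hypothesis is \emph{not} part of the statement: the fact assumes only a $3$-path, not a $2$-thread (these are different notions in the paper). And without some such side condition the second sentence in fact fails for arbitrary $\H_3$-subgraph-free graphs. A seven-vertex witness: take the path $a,c,d,b$, add the chord $bc$, and attach pendants $t,t'$ to $a$ and a single pendant $y$ to $b$. Then $a$ and $b$ are $3$-vertices joined by the $3$-path $a,c,d,b$, the graph is trivially $\H_3$-subgraph-free (only seven vertices), and $\dist(a,b)=2$ via $c$; yet the graph contains no $C_4$ and no $C_5$ at all.

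What actually rescues the second sentence in every application the paper makes is not a thread condition but $K_3$-freeness: the fact is only used inside $C_5$-critical graphs other than $K_3$, which are triangle-free. If $G$ is $K_3$-free and $z$ is a common neighbour of $a$ and $b$, then $z\ne c$ (else $b,c,d$ is a triangle) and $z\ne d$ (else $a,c,d$ is a triangle), so $z$ lies off the spine and $a,c,d,b,z$ is the required $C_5$. Your diagnosis of the delicate point is therefore exactly right; the fix to reach for is the ambient $K_3$-freeness rather than an unwarranted thread assumption.
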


   \subsection{$\H_3$-subgraph-free $C_5$-critical graphs}

We first prove the following lemma. 

  \begin{lemma}\label{lem:c4}
    The following hold for any $\H_3$-subgraph-free $C_5$-critical graph $G$. 
   \begin{enumerate}
      \item Any $3^+$-vertex at distance $1$ in a $C_4$ to a $4^+$-vertex is a $3$-vertex.  
     \item Any $4^+$-vertex at distance $2$ in a $C_5$ to $3^+$-vertex is a $4$-vertex.   
  \end{enumerate}
  \end{lemma}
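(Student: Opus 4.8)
The plan is to prove both items by contradiction, in each case assuming that the vertex $a$ in question has degree one larger than the claimed bound and then exhibiting a forbidden copy of $\H_3$ along the length-$3$ path that the cycle already supplies. Throughout I would use two consequences of $C_5$-criticality recorded in Fact~\ref{fact:critical}: $G$ is triangle-free, so neither a $C_4$ nor a $C_5$ of $G$ has a chord; and, to realise an $\H_3$ as a \emph{subgraph}, it suffices to find a length-$3$ path $u,v,w,x$ whose endpoints $u,x$ each carry two further neighbours, with all eight vertices distinct. This is exactly the template behind Fact~\ref{fact:H3}: such a path almost forces an $\H_3$, the only obstruction being collisions among the four candidate leaves, and the extra unit of degree I assume is precisely what defeats those collisions.

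For item $(1)$, I would take the $C_4$ to have vertices $a,b,c,d$ and edges $ab,bc,cd,da$, with $a$ a $3^+$-vertex, $b$ a $4^+$-vertex, and suppose for contradiction that $a$ is a $4^+$-vertex as well. The $C_4$ supplies a length-$3$ path $a,d,c,b$ between the adjacent vertices $a$ and $b$. Since $G$ is triangle-free the chords $ac$ and $bd$ are absent; as $a,b$ both have degree $\ge 4$ and lie on the $C_4$, each therefore has at least two neighbours outside $\{a,b,c,d\}$, so I set $A=N(a)\setminus\{b,c,d\}$ and $B=N(b)\setminus\{a,c,d\}$ with $|A|,|B|\ge 2$. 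Crucially $a$ and $b$ have no common neighbour, since a common neighbour together with the edge $ab$ would form a triangle; hence $A\cap B=\emptyset$. Picking $t,t'\in A$ and $y,y'\in B$ then gives four distinct leaves avoiding the path, so $\{t,t'\},a,d,c,b,\{y,y'\}$ is a copy of $\H_3$ --- a contradiction, forcing $a$ to be a $3$-vertex.

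For item $(2)$, I would label the $C_5$ with vertices $v_0,\dots,v_4$ and edges $v_iv_{i+1}$ (indices modulo $5$), putting $a=v_0$ (the $4^+$-vertex) and $b=v_2$ (the $3^+$-vertex), so that $v_1$ is their common neighbour, and suppose for contradiction that $a$ is a $5^+$-vertex. Now the $C_5$ supplies the length-$3$ path $a,v_4,v_3,b$. Triangle-freeness kills the chords $v_0v_2$, $v_0v_3$ and $v_2v_4$, giving $a\not\sim b$, $a\not\sim v_3$ and $b\not\sim v_4$; hence $A=N(a)\setminus\{v_1,v_4\}$ and $B=N(b)\setminus\{v_1,v_3\}$ are both disjoint from the cycle, with $|A|\ge 3$ (as $\deg a\ge5$) and $|B|\ge 1$. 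The only leaf the two ends could compete for is the common neighbour $v_1$. The key step is to hand $v_1$ to the low-degree end: take $v_1$ together with some $w\in B$ as the two leaves at $b$, and then choose two vertices of $A\setminus\{w\}$ as the leaves at $a$, which is possible exactly because $|A|\ge 3$. These four leaves are distinct and avoid the path, so $\{t,t'\},a,v_4,v_3,b,\{v_1,w\}$ is a copy of $\H_3$ --- a contradiction, forcing $a$ to be a $4$-vertex.

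The step I expect to be the crux in both parts is ruling out collisions among the candidate leaves at the two endpoints. In item $(1)$ this is clean: a collision is exactly a common neighbour of the \emph{adjacent} pair $a,b$, which triangle-freeness forbids, so disjointness of $A$ and $B$ comes for free. In item $(2)$ the endpoints are non-adjacent and may genuinely share neighbours (at least $v_1$), so the construction survives only because of the one spare unit of degree at $a$: donating the shared vertex $v_1$ to $b$ and invoking $|A|\ge 3$ still leaves two private leaves at $a$. This also explains why the bound is degree $4$ rather than $3$ here --- at degree exactly $4$ one has $|A|=2$, and a single collision $w\in A$ would already block the construction, so both the argument and the statement are tight.
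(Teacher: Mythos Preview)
Your proposal is correct and follows essentially the same approach as the paper: in both items you take the length-$3$ path around the $C_4$ (resp.\ $C_5$) between the two high-degree vertices and use triangle-freeness plus the assumed extra unit of degree to find four distinct leaves, yielding an $\H_3$. The only cosmetic difference is that in item~(2) the paper phrases the collision analysis as ``$d$ has two neighbours $y,y'$ not in $\{a,b,c,x,x'\}$'' (a direct count), whereas you explicitly hand the shared neighbour $v_1$ to the $3^+$-vertex and invoke $|A|\ge 3$; these are the same bookkeeping.
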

  \begin{proof}
    Let $G$ be $\H_3$-subgraph-free and $C_5$-critical.  If it is a $K_3$ the lemma is vacuously true, so we may assume by criticality that it is $K_3$-free.  

    For item (1), assume that $u$ is a $3^+$ vertex in a $C_4$ with its $4^+$-neighbour $x$, in $G$.   As $G$ is $K_3$-free, we may assume that the $C_4$ is $u,v,w,x$, and is induced.  As $x$ is a $4^+$ vertex it has neighbours $y$ and $y'$ not in the $C_4$. If $u$ is a $4^+$-vertex then it has neighbours $t$ and $t'$ not in the $C_4$, and as $G$ is $K_3$-free, they are distinct from $y$ and $y'$. But then $\{t,t'\},u,v,w,x,\{y,y'\}$ is an $\H_3$, which is impossible.  

    For item (2), assume that $C$ is the $5$-cycle $a,b,c,d,e,a$, that $a$ is a $3^+$ vertex, and that $d$ is a $4^+$ vertex.  As $G$ is $K_3$-free, $C$ is induced, and so $a$ has neighbours $x,x'$  not in $\{b,c,d\}$. If $d$ has degree $5$, then it has two neighbours $y,y'$ not in $\{a,b,c,x,x'\}$ and so 
    $\{x,x'\},a,b,c,d,\{y,y'\}$ is an $\H_3$.
  \end{proof}

The notion of saturation will be used extensively in our proofs.
A vertex $v$ of a graph $G$ is {\em saturated in a subgraph $G' \leq G$} if all of its edges in $G$ are in $G'$. A proper subgraph $G'$ of a graph $G$ is {\em saturated} if it has at least two vertices, and all but at most one of its vertices are saturated.   

\begin{fact}\label{fact:saturated}
  A $C_5$-critical graph $G$ contains no saturated proper subgraph.
\end{fact}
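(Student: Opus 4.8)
The plan is to argue by contradiction: assume $G$ is $C_5$-critical yet contains a saturated proper subgraph $G'$, and then construct a $C_5$-colouring of all of $G$, contradicting the fact that $G$ is not $C_5$-colourable. The first step is to pin down the structure forced by saturation. I would first observe that $V(G')\neq V(G)$: otherwise $G'$, being a proper subgraph, omits some edge $uv$ of $G$, and then both $u$ and $v$ have a $G$-edge missing from $G'$, so both are non-saturated, contradicting that at most one vertex of $G'$ is non-saturated. Hence $W:=V(G)\setminus V(G')$ is nonempty. Next, if every vertex of $G'$ were saturated there would be no edge between $W$ and $V(G')$ (such an edge would make its $G'$-endpoint non-saturated), so $G$ would be disconnected, contradicting that $C_5$-critical graphs are connected (Fact~\ref{fact:critical}(2)). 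Thus there is exactly one non-saturated vertex $z\in V(G')$.

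The key structural point I would then establish is that $z$ is the only gateway between $W$ and $G'$. Any edge of $G$ joining $w\in W$ to $v\in V(G')$ forces $v$ to have a $G$-edge leaving $G'$, hence $v$ is non-saturated and so $v=z$. Moreover, every $G$-edge with both endpoints in $V(G')$ actually lies in $G'$: if neither endpoint is $z$ both are saturated, and if one endpoint is $z$ its far endpoint lies in $V(G')\setminus\{z\}$ and is saturated, forcing the edge into $G'$. Consequently every edge of $G$ lies either in $G'$ or in the induced subgraph $H:=G[W\cup\{z\}]$. Both $G'$ and $H$ are proper subgraphs of $G$ (the latter because $V(G')\setminus\{z\}\neq\emptyset$), so each is $C_5$-colourable by criticality; let $c'$ and $c''$ be such colourings. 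Since the automorphism group of $C_5$ acts transitively on $V(C_5)$, I would pick an automorphism $\sigma$ of $C_5$ with $\sigma(c''(z))=c'(z)$, so that $\sigma\circ c''$ is a $C_5$-colouring of $H$ agreeing with $c'$ at $z$. Gluing $c'$ on $V(G')$ to $\sigma\circ c''$ on $W$ then yields a well-defined map $c\colon V(G)\to V(C_5)$ that respects every edge of $G$, the desired contradiction.

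The main obstacle will be the bookkeeping in the gluing step: verifying that the two partial colourings are genuinely compatible. One must confirm both that they can clash only at $z$ (so that a single automorphism reconciles them) and that no edge of $G$ escapes the cover $G'\cup H$. The delicate ingredient is that, although $z$ is permitted to be non-saturated, every $G$-edge from $z$ into $V(G')$ is nonetheless present in $G'$ because its far endpoint is saturated; only $z$'s edges into $W$ may be absent from $G'$. Once this is in place, vertex-transitivity of $C_5$ handles the alignment at $z$, and the resulting $C_5$-colouring of $G$ contradicts $C_5$-criticality, completing the proof.
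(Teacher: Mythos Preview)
Your argument is correct, but it follows a different route from the paper's. The paper gives a one-line proof via $2$-connectivity: once you know (as you also establish) that every edge leaving $V(G')$ must be incident with the single non-saturated vertex $z$, deleting $z$ separates $V(G')\setminus\{z\}$ from $W$, contradicting Fact~\ref{fact:critical}(2). That is the whole proof; no colouring is ever constructed.

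Your approach instead builds an explicit $C_5$-colouring of $G$ by gluing colourings of $G'$ and of $H=G[W\cup\{z\}]$ along $z$, using vertex-transitivity of $C_5$ to reconcile them. This is longer but has its own merit: it only needs connectedness (not $2$-connectedness) of $G$, and it makes transparent \emph{why} a saturated proper subgraph is incompatible with criticality---you can always extend a colouring across the single interface vertex. The paper's route is quicker because it cashes in on a structural fact already on the shelf; yours is more self-contained and would generalise to $H$-critical graphs for any vertex-transitive target $H$, whereas the paper's proof works verbatim for critical graphs with respect to any target. Both are valid; the paper's is simply the shorter of the two once $2$-connectivity is available.
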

\begin{proof}
    If $C_5$-critical $G$ contained a proper subgraph $G'$ that was saturated, then that subgraph is disconnected from $G$ by the removal of its one unsaturated vertex, but this contradicts the fact that $G$, being $C_5$-critical, is $2$-connected by Fact \ref{fact:critical}.
\end{proof}

\subsection{Step 1}

We are now ready to prove statement 1 of Theorem~\ref{thm:C5-char}.

\begin{proposition}\label{prop:C5-char-1}
  All $\H_3$-subgraph-free graphs of girth at least $6$ are $C_5$-colourable. 	
\end{proposition}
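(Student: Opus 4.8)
The plan is to argue by contradiction through criticality. Suppose some $\H_3$-subgraph-free graph $G$ of girth at least $6$ is not $C_5$-colourable. Then $G$ contains a $C_5$-critical subgraph $G'$, and since passing to a subgraph preserves both $\H_3$-subgraph-freeness and the lower bound on girth, $G'$ is itself $\H_3$-subgraph-free of girth at least $6$. By Fact~\ref{fact:critical}, $G'$ is $2$-connected, contains no $K_3$, and has no $3^+$-thread. A cycle of girth at least $6$ is $C_5$-colourable (even cycles are bipartite and map to an edge of $C_5$, odd cycles of length at least $5$ map to $C_5$), so $G'$ is not a cycle; being $2$-connected with minimum degree $2$, it therefore has a $3^+$-vertex. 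The whole proof reduces to deriving a contradiction from the existence of such a $G'$.

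The structural engine is Fact~\ref{fact:H3}: two $3^+$-vertices joined by a path of length $3$ would be at distance at most $2$, and together with that length-$3$ path this produces a cycle of length at most $5$, contradicting girth at least $6$. Hence \emph{no two $3^+$-vertices of $G'$ are joined by a path of length exactly $3$}. Combining this with the absence of $3^+$-threads, I would show that every maximal thread of $G'$ has at most one internal $2$-vertex: a thread with two internal $2$-vertices joins its two branch endpoints by a path of length $3$, which is excluded. Consequently any two branch vertices consecutive along a thread lie at distance $1$ or $2$. A further application of the same principle gives a dichotomy at each branch vertex $u$: if $u$ had one branch-neighbour at distance $1$ and another at distance $2$, these two branch vertices would be joined by a path of length $3$; so all branch-neighbours of $u$ lie at distance $1$, or all lie at distance $2$.

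It then remains to rule out both types. If some branch vertex $u$ has all its branch-neighbours at distance $1$, then (a $2$-vertex neighbour of $u$ would force a branch-neighbour at distance $2$, since threads have at most one internal vertex) every neighbour of $u$ is a branch vertex; the same argument applied to a neighbour $x$ of $u$ makes every neighbour of $x$ a branch vertex, so a neighbour $w \neq u$ of $x$ and a neighbour $x' \neq x$ of $u$ are both $3^+$-vertices joined by the length-$3$ path $w,x,u,x'$ (one checks these four vertices are distinct using $K_3$-freeness), a contradiction. Otherwise every branch vertex has all its branch-neighbours at distance $2$, so $G'$ consists exactly of its branch vertices together with a single $2$-vertex subdividing each connecting edge; that is, $G'$ is the $1$-subdivision of a simple graph. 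But every $1$-subdivision is bipartite and hence maps to a single edge of $C_5$, so $G'$ is $C_5$-colourable, contradicting criticality. In either case we reach a contradiction, so no $C_5$-critical subgraph exists and $G$ is $C_5$-colourable.

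I expect the main obstacle to be the organisation of the thread/branch analysis: passing from the two local facts (no $3^+$-thread; no length-$3$ path between $3^+$-vertices) to the clean global dichotomy, and then recognising that the distance-$2$ case is precisely a $1$-subdivision, whose bipartiteness instantly finishes the argument. The points demanding the most care are verifying the distinctness of the four vertices on the length-$3$ path in the distance-$1$ case, and confirming, via connectivity, that the dichotomy is consistent across all branch vertices.
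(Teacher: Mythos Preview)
Your argument is correct and follows a genuinely different route from the paper. The paper works locally on a single shortest odd cycle $C$ (of length $2k+1\ge 7$) in the hypothetical critical graph: because $C$ is induced and no vertex outside $C$ sees two far-apart vertices of $C$, $\H_3$-freeness forbids two $3^+$-vertices of $C$ from sitting at distance $3$ along $C$, while the absence of $3$-threads forces at least one of any three consecutive cycle vertices to be a $3^+$-vertex; a short cyclic/parity argument then shows the $3^+$-vertices would have to occupy every second position of the odd cycle, which is impossible. Your approach is global: from the same two local constraints (no $3^+$-thread; no length-$3$ path between $3^+$-vertices) you pin down the entire thread structure of $G'$, derive the per-vertex distance dichotomy, eliminate the distance-$1$ type directly, and then recognise $G'$ as a $1$-subdivision and hence bipartite. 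The paper's proof is a quick self-contained parity trick on one cycle; yours requires a bit more bookkeeping in setting up and propagating the dichotomy, but it yields the stronger structural conclusion that any such $G'$ would in fact be bipartite.
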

\begin{proof}
   Towards contradiction, let $G$ be a $\H_3$-subgraph-free graph of girth at least $6$ that is $C_5$-critical. As $G$ has no $C_5$-colouring it must contain an odd cycle, and so by assumption, it has odd girth $2k+1 \geq 7$.  Let $C$ be a $C_{2k+1}$ in $G$ with vertex set $\{-k,-(k-1), \dots, 0, \dots, k\}$ for which two vertices are adjacent if they differ by $1$ modulo $2k+1$.  Let $X$ be the set of $3^+$-vertices of $C$.

   As $C$ is a shortest odd cycle in $G$ it is induced, and there is no vertex $x \not\in C$ adjacent to two vertices of $C$ at distance greater than $2$. So because $G$ is $\H_3$-subgraph-free, $X$ contains no two vertices that are distance $3$ apart.  On the other hand, as $G$ is $C_5$-critical and so contains no $3$-threads $X$ contains at least one of $i-1, i,$ and $i+1$ for every vertex $i$ of $C$.  It follows that if $X$ contains $i$ and $i+1$, then it does not contain $i + 3$ and $i+4$, and so it must contain $i+2$ and $i+5$. But these are distance $3$ apart; so we conclude that $X$ contains no two consecutive vertices. 

   We may assume, w.l.o.g, that $X$ contains the vertex $0$. So it contains neither $-1$ or $1$, or $-3$ or $3$.  
   It must therefore contain $-2$ and $2$. Shifting this argument cyclically by $2$, we see that $X$ must contain every second vertex of $C$.  But this is impossible as $C$ is an odd cycle. 
\end{proof}

\subsection{Step 2}

We now prove statement 2 of Theorem~\ref{thm:C5-char}.

\begin{proposition}\label{prop:C5-char-2}The only $\H_3$-subgraph-free graphs of girth $5$ are $E_1, E_2$ and $F_n$ for odd $n \geq 3$. 
\end{proposition}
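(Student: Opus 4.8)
The plan is to classify an $\H_3$-subgraph-free $C_5$-critical graph $G$ of girth exactly $5$ by growing it from a single induced pentagon and controlling how further structure can attach. Since the girth is $5$, $G$ has no $C_3$ and no $C_4$, so the $C_4$-alternative of Fact~\ref{fact:H3} is vacuous and we obtain a \emph{gluing principle}: any two $3^+$-vertices joined by a path of length $3$ must in fact lie at distance $2$ in a common induced $C_5$, i.e.\ they have a common neighbour. By Fact~\ref{fact:critical}, $G$ is $2$-connected, $K_3$-free, and has no $3^+$-thread and no redundant thread or vertex; in particular every maximal run of consecutive $2$-vertices has length at most $2$, so any two $3^+$-vertices joined by a thread are at path-distance at most $3$. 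Finally, $G$ is not a single cycle, since every cycle of girth at least $5$ is $C_5$-colourable and hence not $C_5$-critical; so $G$ has a $3^+$-vertex and, being $2$-connected, contains an induced $C_5$ that by Lemma~\ref{lem:Ear} can be grown to $G$ by successively adding ears.

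First I would record the local structure. Because $G$ has no $3^+$-thread, in any induced $C_5$ the $2$-vertices occur singly or as one adjacent pair. The gluing principle then forces every ear we add to close into a new induced pentagon sharing a vertex or an edge with the pentagons already present: whenever a newly created thread would put two $3^+$-vertices at path-distance $3$, it forces a common neighbour and so pins the new pentagon to the existing structure rather than letting it float free. This is the mechanism that makes pentagons accumulate around shared vertices.

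The proof then splits on the maximum degree $\Delta(G)$. If $\Delta(G)\geq 4$, fix a vertex $c$ of maximum degree with neighbours $a_1,\dots,a_d$; by girth $5$ the $a_i$ are pairwise non-adjacent and $c$ is their only common neighbour. Using the gluing principle I would show each $a_i$ is a $3^+$-vertex, that consecutive $a_i$ are joined by threads of length $3$ forming pentagons $c,a_i,\cdot,\cdot,a_{i+1}$, and that these petals close into a single cycle about the unique centre $c$; here Lemma~\ref{lem:c4}(2) (applied to the distance-$2$ pairs $a_i,a_{i+1}$ inside each petal) together with Fact~\ref{fact:saturated} and the absence of redundant vertices pins down the structure, forcing the $a_i$ to degree exactly $3$ and forbidding extra chords or a second centre. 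The result is a flower $F_d$, and since an even flower maps to $C_5$ (colour the attachments alternately around the rim and fill in the petal paths), criticality forces $d$ odd, giving $F_n$ with $n$ odd. If instead $\Delta(G)=3$, then $G$ is subcubic and the gluing principle permits only boundedly many pentagons, each sharing a vertex or an edge with the others; a finite case analysis of these sharings, discarding any configuration containing an $\H_3$, a redundant thread or vertex, or a saturated proper subgraph, or that is already $C_5$-colourable, leaves exactly $F_3$ and the two exceptional graphs $E_1$ and $E_2$.

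The hard part will be twofold: in the subcubic case, the bookkeeping needed to enumerate the admissible gluings and eliminate all but three; and in the flower case, proving that the petals must share a \emph{single} centre and close into one rim with no extra chords or branching. The latter is exactly where $\H_3$-subgraph-freeness (through the gluing principle) must be combined with the no-saturated-subgraph and no-redundant-thread conditions, since each constraint alone admits configurations that only their conjunction excludes.
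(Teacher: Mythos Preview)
Your degree-splitting plan is a genuinely different organisation from the paper's, but both branches have real gaps. In the $\Delta(G)\geq 4$ branch, the crux is your claim that each neighbour $a_i$ of the centre $c$ is a $3^+$-vertex. The gluing principle alone does not give this: if some $a_1$ has degree $2$ with second neighbour $b$, you need $b$ and $c$ to lie in a common $C_5$ before Lemma~\ref{lem:c5}(1) bites, and that in turn requires $b$ to have a neighbour adjacent to some other $a_j$ --- which must itself be argued. (You also cite Lemma~\ref{lem:c4}(2), which constrains $4^+$-vertices and says nothing about the $a_i$; the correct tool here is Lemma~\ref{lem:c5}(1).) Your phrase ``consecutive $a_i$'' presupposes a cyclic order on $N(c)$ that only exists once the flower is already built, so the argument risks circularity. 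In the subcubic branch, you assert that ``the gluing principle permits only boundedly many pentagons'' and defer to a finite check, but you give no reason for the bound; this is the entire content of that case, not a bookkeeping detail.

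The paper avoids both difficulties by organising around two small subgraphs rather than degree. It sets $B_1$ to be two pentagons sharing an edge and $B_2$ two pentagons sharing a $P_3$, proves $G$ cannot contain $B_2$, and proves that if $G$ contains $B_1$ then $G$ is an odd flower or one of $E_1,E_2$; here the flower is grown petal by petal from $B_1$, with the centre \emph{discovered} as one endpoint of the shared edge rather than posited in advance. A short endgame then shows that any $C_5$ in $G$ forces either a $B_1$, a $B_2$, or a removable thread. This unifies your two cases --- $F_3$, $E_1$, $E_2$ and the large flowers all come out of the single $B_1$ analysis --- so no separate subcubic enumeration or boundedness argument is needed.
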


 Before we get started with the proof, we observe, among other things, some strengthenings we can get of Lemma \ref{lem:c4} now that we have girth $5$.

 \begin{lemma} \label{lem:c5}   
   The following hold for all $\H_3$-subgraph-free $C_5$-critical graphs $G$ of girth at least $5$. 
  \begin{enumerate}
      \item Any $3^+$-vertices at distance $2$ in a $C_5$ are $3$-vertices. 
      \item Any $3^+$-vertices with a $3$-path between them are $3$-vertices at distance $2$ in a $C_5$.  
      \item Any $5$-cycle in $G$ has at least three $3^+$-vertices; some pair of them are adjacent. 
  \end{enumerate}
  \end{lemma}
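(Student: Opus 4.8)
The plan is to prove the three parts in the order stated, since part (2) will follow immediately from part (1) together with Fact~\ref{fact:H3}, and the argument for part (3) will again invoke part (1). Throughout I work in a fixed $\H_3$-subgraph-free $C_5$-critical graph $G$ of girth at least $5$, so I may freely use that $G$ is $K_3$-free and $C_4$-free, that it has no $3^+$-threads and no redundant vertices or threads (Fact~\ref{fact:critical}), and that every short cycle I encounter is induced.

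For part (1) I would take a $C_5$ named $v_0 v_1 v_2 v_3 v_4$ with $v_0$ and $v_2$ the two $3^+$-vertices at distance $2$ (common neighbour $v_1$), and argue that neither can have degree $\ge 4$. Suppose for contradiction, and by the evident symmetry $v_0 \leftrightarrow v_2$, that $v_2$ has degree $\ge 4$. The key object is the \emph{long} arc $v_0, v_4, v_3, v_2$, a path of length $3$, which I want to complete to an $\H_3$ with this arc as its central path: then $v_0$ supplies the two leaves $v_1$ and one further neighbour $p$, while $v_2$ (being $4^+$) supplies two further neighbours $y,y'$ as its leaves. The only real work is checking that the six vertices involved are genuinely distinct and that the four leaves lie off the central path, and here girth $\ge 5$ does all of it: any coincidence such as $p=y$, or a leaf landing on a cycle vertex, would at once create a $C_3$ or $C_4$. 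This produces the forbidden $\{v_1,p\}, v_0, v_4, v_3, v_2, \{y,y'\}$, a contradiction, so both vertices are $3$-vertices.

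Part (2) should be short. Given two $3^+$-vertices joined by a path of length $3$, Fact~\ref{fact:H3} places them at distance at most $2$ and in one of two configurations: adjacent inside a $C_4$, or at distance $2$ inside a $C_5$. The first is impossible here, since an edge together with an internally disjoint length-$3$ path between its ends is a $C_4$, which girth $\ge 5$ forbids; so they lie at distance $2$ in a $C_5$, and part (1) then forces them to be $3$-vertices.

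Part (3) is where the main difficulty lies. Let $C$ be a $5$-cycle and let $t$ be the number of its $3^+$-vertices. If $t=0$ then $G=C_5$, which is $C_5$-colourable and hence not critical; if $t=1$ then $C$ carries a $4$-thread; and if $t=2$ with the two $3^+$-vertices \emph{adjacent}, the remaining three vertices form a $3$-thread. Each of these contradicts criticality (Fact~\ref{fact:critical}). The stubborn case, and the real obstacle, is $t=2$ with the two $3^+$-vertices \emph{non-adjacent}: by part (1) they are $3$-vertices $v_0,v_2$, with $v_1$ the lone $2$-vertex between them, $v_3,v_4$ a $2$-thread on the other arc, and single private extra neighbours $p$ of $v_0$ and $q$ of $v_2$ off $C$ (with $p\ne q$, else a $C_4$). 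I would rule this out by a colouring/extension argument: delete $\{v_1,v_3,v_4\}$ to get a proper subgraph $H$, which is $C_5$-colourable by criticality, and note that in $H$ both $v_0$ and $v_2$ have become pendant vertices, attached respectively to $p$ and $q$. Hence I may recolour $v_0$ within the two $C_5$-neighbours of $c(p)$ and $v_2$ within the two $C_5$-neighbours of $c(q)$. Writing $\delta=c(p)-c(q)$ in $\mathbb{Z}_5$, the attainable differences $c(v_0)-c(v_2)$ are exactly $\{\delta-2,\delta,\delta+2\}$, three distinct residues, which must meet the two-element set $\{2,3\}$ of distance-$2$ differences since $3+2>5$. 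Choosing $c(v_0),c(v_2)$ at $C_5$-distance $2$ then lets me colour $v_1$ as their common neighbour and colour the length-$3$ arc through $v_3,v_4$ (a distance-$2$ pair admits a length-$3$ walk), extending $c$ to a $C_5$-colouring of all of $G$ and contradicting criticality. Thus $t\ge 3$; and since the independence number of $C_5$ is $2$, any three vertices of $C$ contain an adjacent pair, which gives the final clause.
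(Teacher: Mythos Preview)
Your proof is correct and follows essentially the same approach as the paper's. For part~(3) the paper removes the whole $5$-cycle and colours $G\setminus C$, simply asserting that any colours on the two external neighbours extend over $C$; you instead remove only the three $2$-vertices and give the explicit $\mathbb{Z}_5$ pigeonhole argument showing the extension always succeeds, which is a welcome elaboration of the step the paper leaves to the reader.
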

  \begin{proof}
   Item $(1)$ is straightforward from the fact that $G$ has no induced $\H_3$ or $C_4$, and $(2)$ comes from Fact \ref{fact:H3} and item $(1)$.
    
   We show item (3). Any $5$-cycle $C$ in $G$ is induced and has no $3^+$-threads, so (by an easy version of the argument in the proof of Proposition \ref{prop:C5-char-1}) must contain two $3^+$-vertices, call them $1$ and $4$, at distance $2$ in $C$. By $(1)$, these are, in fact, $3$-vertices. Let $a$ and $b$, be their respective neighbours not in $C$. 
   
 If there is no other $3^+$-vertex in $C$ not, then remove $C$, and $C_5$-colour $G' = G \setminus C$. The only edges from $C$ to $G'$ are the edges $1a$ and $4b$, and whatever colours $a$ and $b$ get, we can extend this colouring to a $C_5$-colouring of $G$, contradicting the fact that $G$ was $C_5$-critical. 
  Thus $C$ has at least three $3^+$-vertices, as needed. 
  \end{proof} 

    \begin{figure}
    \begin{center}
    \begin{tikzpicture}[every node/.style={vert}]
     
      \begin{scope}[xshift = 2cm,yshift = 1cm]
      \begin{scope}
       \Verts[:]{v1/36/1, v2/108/1, v3/180/1, v4/-108/1, v5/-36/1}
       \Edges{v1/v2,v2/v3,v3/v4,v4/v5,v5/v1} 
       \draw (1,-1.5) node[empty]{$S_1$};
       
      \end{scope}
      \begin{scope}[xshift = 1.65cm]
        \Verts[:]{u3/0/1,u2/72/1,u4/-72/1, m/-90/0} 
       \end{scope} 
       \Edges{v1/u2, u2/u3, u3/u4, u4/v5, v3/m, m/u3}
       \end{scope}
       
    \end{tikzpicture}
    \caption{Saturated $\H_3$-free (but $C_5$-colourable) graph $S_1$}\label{fig:h3counters2} 
    \end{center}
    \end{figure}
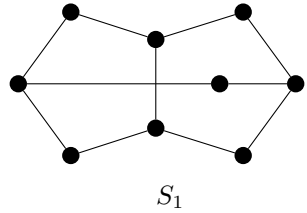
 We highlight some saturated graphs. 
   \begin{fact}\label{fact:saturated-2}
        As subgraphs of an $\H_3$-subgraph-free graph $G$ of girth $5$,  graph $S_1$ from Figure \ref{fig:h3counters2}, and all even $C_5$-flowers, $F_n$ for $n \geq 4$, are saturated and $C_5$-colourable.  So they cannot be subgraphs of $G$ if it is $C_5$-critical.  
   \end{fact}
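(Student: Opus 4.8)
The plan is to assume, for contradiction, that one of these graphs occurs as a subgraph of a graph $G$ that is $\H_3$-subgraph-free, has girth $5$, and is $C_5$-critical, and to derive a contradiction through Fact~\ref{fact:saturated}. For each candidate $H$ (the graph $S_1$ of Figure~\ref{fig:h3counters2}, or an even flower $F_n$ with $n \geq 4$) I would establish two things: (a) $H$ admits a homomorphism to $C_5$, and (b) inside $G$ all but at most one vertex of $H$ is saturated. Granting (a) and (b): if $H = G$ then (a) makes $G$ itself $C_5$-colourable, contradicting criticality; otherwise $H$ is a \emph{proper} saturated subgraph of $G$, which Fact~\ref{fact:saturated} forbids. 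So in either case $H \not\subseteq G$.

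For the even flowers, (b) follows directly from Lemma~\ref{lem:c5}(1). Writing a petal as the $5$-cycle $m, s_i, w, w', s_{i+1}$ through the centre $m$ and two consecutive spokes, note that $m$ is a $4^+$-vertex because $n \geq 4$. Each intermediate vertex $w$ (resp.\ $w'$) lies at distance $2$ from $m$ on this $C_5$; were it a $3^+$-vertex, Lemma~\ref{lem:c5}(1) would force $m$ to have degree $3$, a contradiction, so every intermediate vertex is a $2$-vertex and hence saturated. The spokes $s_i, s_{i+1}$ also lie at distance $2$ on the petal, so Lemma~\ref{lem:c5}(1) makes each of them a $3$-vertex, again saturated. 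Thus only the centre can fail to be saturated, giving (b). For (a), with $n$ even I would send $m \mapsto 0$, colour the spokes alternately $1$ and $4$ around the cycle (consistent precisely because $n$ is even), and fill each spoke-to-spoke path of length $3$ by the walk $1,2,3,4$ or $4,3,2,1$; this is a homomorphism to $C_5$.

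For $S_1$ I would first fix its $3^+$-vertices, which are exactly $v_1, v_3, v_5, u_3$. Applying Lemma~\ref{lem:c5}(1) to the distance-$2$ pairs $(v_1,v_3)$ and $(v_3,v_5)$ of the $5$-cycle $v_1v_2v_3v_4v_5$, and to the pair $(v_1,u_3)$ of the $5$-cycle $v_1u_2u_3u_4v_5$, shows each of these four vertices is a $3$-vertex, hence saturated with its neighbourhood exactly as drawn. It then remains to forbid extra edges at the five degree-$2$ vertices, each of which sits between two of these fixed $3$-vertices. If, say, $v_2$ had a further neighbour $q$, then $q$ lies off $S_1$ (by girth $5$), and the length-$3$ path $v_2, v_1, u_2, u_3$ joins the two $3^+$-vertices $v_2$ and $u_3$, so choosing the leaves $\{v_3,q\}$ at $v_2$ and $\{u_4,m\}$ at $u_3$ exhibits a copy of $\H_3$ on the eight distinct vertices $v_2,v_3,q,v_1,u_2,u_3,u_4,m$, a contradiction. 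The automorphisms of $S_1$ (the reflection fixing $v_3,u_3,m$ and swapping $v_1\leftrightarrow v_5$, $v_2\leftrightarrow v_4$, $u_2\leftrightarrow u_4$, together with the arm-swap $v_i \leftrightarrow u_i$) carry this to $v_4, u_2, u_4$, and the analogous path $m, v_3, v_2, v_1$ (ending at the $3^+$-vertex $v_1$, with leaves $\{u_3,q\}$ at $m$ and $\{v_5,u_2\}$ at $v_1$) handles $m$. Hence every vertex of $S_1$ is saturated, giving (b); and (a) is witnessed by $v_1,v_2,v_3,v_4,v_5 \mapsto 0,1,2,3,4$, $u_2,u_3,u_4 \mapsto 1,2,3$, and $m \mapsto 1$.

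The two explicit $C_5$-colourings and the cycle bookkeeping are routine. I expect the genuine difficulty to be part (b) for $S_1$: since $S_1$ has no $4^+$-vertex to drive Lemma~\ref{lem:c5}(1), ruling out extra edges at its degree-$2$ vertices cannot be done through that lemma alone and instead needs an explicit $\H_3$ to be produced, where the care lies in checking that the eight witness vertices are pairwise distinct and that girth $5$ indeed pushes the new neighbour $q$ off $S_1$.
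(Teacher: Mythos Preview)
Your proof is correct and follows the same overall structure as the paper's: establish $C_5$-colourability and saturation, then invoke Fact~\ref{fact:saturated}. The saturation arguments differ slightly in the tools used. For the even flowers you exploit Lemma~\ref{lem:c5}(1) together with the centre having degree $n\geq 4$ (so that any $3^+$ intermediate vertex would force the centre down to degree $3$); the paper instead argues via the ``$3$-path to a $3$-vertex'' criterion of Fact~\ref{fact:H3}/Lemma~\ref{lem:c5}(2). For $S_1$ you build explicit copies of $\H_3$ and in fact show that \emph{every} vertex, including $m$, is saturated, whereas the paper's shortcut leaves one $2$-vertex potentially unsaturated (which still suffices for the definition of a saturated subgraph). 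Your route is a little more laborious but yields a marginally stronger statement for $S_1$; the paper's is terser but relies on checking that the forced common neighbour must already lie among the (now saturated) neighbours of the target $3$-vertex.
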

   \begin{proof}
  To see that $S_1$ is saturated, note that all $3$-vertices have $3$-paths to other $3$-vertices, so are saturated.
  For all $2$-vertices but the one in the middle of the right cycle, there is a $3$-path to a $3$-vertex not in a $C_5$ with it. Adding an edge to such a $2$-vertex, the edge would then have to go to a neighbour of the $3$-vertex, but checking each case, this would make a $C_3$ or $C_4$.  The middle $2$-vertex is not saturated, but one non-saturated vertex is okay. 

  In an even $C_5$-flower, the $3$-vertices all have $3$-paths to other $3$-vertices, so are saturated. The $2$-vertices all have $3$-paths to the middle $3^+$-vertex, so if they have another edge, it must be to one of its neighbours. If  this neighbour is a new vertex, we have a $C_4$, and if it is a neighbour in $F_n$, then it is a $3$-vertex, which is already saturated. So the $2$-vertices are saturated. Only the middle vertex is not.  But again, one unsaturated vertex is allowed.  
   \end{proof}

  With these tools, we are now ready to prove the Proposition.

  \begin{proof}[Proof of Proposition \ref{prop:C5-char-2}] 
  Let $G$ be an $\H_3$-subgraph-free $C_5$-critical graph of girth $5$. We show that $G$ is either $E_1$, $E_2$ or an odd $C_5$-flower $F_n$.

  We start with two claims that deal with the more difficult cases.
  Let $B_1$ and $B_2$ be the graphs made from two copies of $C_5$ shown in Figure \ref{fig:BB}. 

  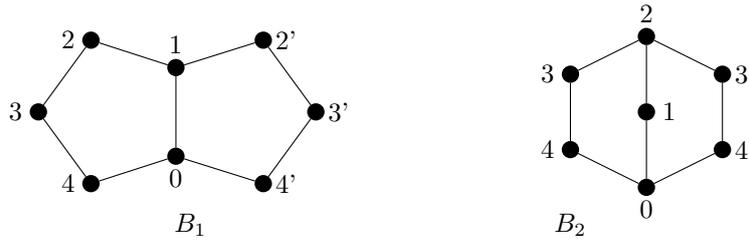
\begin{figure}
   \begin{center}
   \begin{tikzpicture}[every node/.style={vert}]
      \begin{scope}
      \begin{scope}
       \Verts[:]{v1/36/1, v2/108/1, v3/180/1, v4/-108/1, v5/-36/1}
       \Edges{v1/v2,v2/v3,v3/v4,v4/v5,v5/v1} 
       \draw (1,-1.5) node[empty]{$B_1$};
      \end{scope}
      \begin{scope}[xshift = 1.65cm]
        \Verts[:]{u3/0/1,u2/72/1,u4/-72/1} 
      \end{scope} 
      \Edges{v1/u2, u2/u3, u3/u4, u4/v5}
      \Vlabels{v1/above/$1$,v2/left/$2$,v3/left/$3$,v4/left/$4$,v5/below/$0$,u2/right/$2'$,u3/right/$3'$,u4/right/$4'$}
      \end{scope}
      
     \begin{scope}[xshift = 7cm]
       \Verts{v0/0/-1, v1/0/0, v2/0/1, v3/-1/.5, v4/-1/-.5, u3/1/.5, u4/1/-.5}
       \Edges{v0/v1,v1/v2,v2/v3,v3/v4,v4/v0,v2/u3, u3/u4, u4/v0} 
\Vlabels{v0/below/$0$, v1/right/$1$, v2/above/$2$, v3/left/$3$, v4/left/$4$, u3/right/$3'$, u4/right/$4'$}
       \draw (-1,-1.5) node[empty]{$B_2$};
      \end{scope}
   
   \end{tikzpicture}
   \caption{The graphs $B_1$ and $B_2$}\label{fig:BB} 
   \end{center}
   \end{figure}

  \begin{claim}
      If $G$ contains $B_1$ as a subgraph, then $G$ is an odd $C_5$-flower or one of the exceptional graphs $E_i$. 
  \end{claim}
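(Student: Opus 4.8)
The plan is to interpret $B_1$ as two petals of a $C_5$-flower: it is two induced $5$-cycles sharing the edge $\{0,1\}$, so if $G$ is a flower then one of $0,1$ is its centre and the other is the rim vertex shared by the two petals. Note that $B_1$ carries an automorphism exchanging $0\leftrightarrow 1$ (together with $2\leftrightarrow 4$ and $2'\leftrightarrow 4'$), so the two ends of the shared edge play symmetric roles a priori. The argument is driven throughout by the degree restrictions of Lemma~\ref{lem:c5} and the $\H_3$-obstruction of Fact~\ref{fact:H3}, together with $2$-connectivity and the absence of redundant threads and vertices from Fact~\ref{fact:critical}.

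First I would locate the forced $3^+$-vertices. As $0,1$ are adjacent $3^+$-vertices, Lemma~\ref{lem:c5}(3) forces a third $3^+$-vertex in each of the two $5$-cycles. Testing the distance-$2$ pairs of the cycle $0,1,2,3,4$ against Lemma~\ref{lem:c5}(1) (two $3^+$-vertices at distance $2$ in a $C_5$ are both $3$-vertices) shows that a third $3^+$-vertex at the centre-adjacent position ($4$, resp.\ $4'$) only drives $1$ and that vertex to degree $3$, whereas a third $3^+$-vertex at a middle position ($2$ or $3$, resp.\ $2'$ or $3'$) additionally drives $0$ (and possibly $1$) to degree $3$. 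This splits the analysis into a flower regime and an exceptional regime.

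In the flower regime both petals take their third $3^+$-vertex at the centre-adjacent position, so $4,4'$ are $3$-vertices incident to $0$ and $0$ is the centre. I would then propagate along the rim: the third edge at $4$ leaves the petal, and using Lemma~\ref{lem:c5}(2), Fact~\ref{fact:H3} and the exclusion of $C_3$, $C_4$ and of removable/redundant threads (Fact~\ref{fact:critical}), this edge is forced to run along a length-$3$ path of $2$-vertices to the centre-adjacent vertex of the next petal, with no chords and no further incidences. Continuing around $0$ until we return to $4'$ closes the rim into a single cycle $C_{3n}$ and displays $G$ as a flower $F_n$ (with $F_3$ arising already when $\deg(0)=3$). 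Since even flowers are saturated and $C_5$-colourable (Fact~\ref{fact:saturated-2}) while $G$ is $C_5$-critical, $n$ must be odd.

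In the exceptional regime some petal has its third $3^+$-vertex at a middle or near-rim position; by Lemma~\ref{lem:c5}(1) this collapses $0$ (and possibly $1$) to degree $3$ and fixes the neighbourhoods of $0,1$ to exactly those in $B_1$. The single extra edge at such a vertex must, by $2$-connectivity and girth $5$, reconnect into the opposite petal, and a finite check — excluding embeddings of the saturated graph $S_1$ and of even flowers (Facts~\ref{fact:saturated}, \ref{fact:saturated-2}) and any $C_3$, $C_4$ or $\H_3$ — leaves precisely the closures $E_1$, $E_2$ and $E_3$. I expect the main obstacle to be the propagation step of the flower regime: proving that, once the centre-adjacent rim vertices are fixed, the rim is forced to consist of disjoint length-$3$ paths of $2$-vertices with no extra adjacency (no edge jumping between non-consecutive petals and no early chord) and that these assemble into a single cycle $C_{3n}$ rather than several, so that the parity argument via Fact~\ref{fact:saturated-2} applies; the exceptional regime is then a routine but exhaustive bounded enumeration.
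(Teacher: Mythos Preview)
Your overall strategy matches the paper's: locate the third $3^+$-vertex in each petal via Lemma~\ref{lem:c5}, split into cases, and build up a flower in the $(4,4')$ case. Two concrete issues, though.

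First, $E_3$ has girth $4$ and cannot appear here; we are inside Proposition~\ref{prop:C5-char-2}, the girth-$5$ step, so only $E_1$ and $E_2$ arise.

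Second, your ``exceptional regime'' is not a purely finite check. The paper sorts by \emph{which} vertex of $\{2,3,4\}$ and of $\{2',3',4'\}$ is the extra $3^+$-vertex. When the two chosen vertices are joined by a $3$-path in $B_1$, Lemma~\ref{lem:c5}(2) forces a shared new neighbour, yielding $E_1$ directly or, in case $(2,4')$, a bounded picture giving $E_2$. But in the remaining case $(3,3')$ there is no $3$-path between $3$ and $3'$; their extra edges go to \emph{new} vertices outside $B_1$, and $2$-connectivity only gives an ear from $3$ to $3'$ of a priori unbounded length. Short ears give $E_2$, $S_1$, or an $\H_3$; for ears of length at least $4$ the paper replaces all of $B_1$ by a single vertex $b$ adjacent to the ear's first interior vertices, observes this preserves $C_5$-colourings on $G\setminus B_1$ (since $3$ and $3'$ receive the same colour in any $C_5$-colouring of $B_1$), and finds a removable $3$-thread $a,b,a'$ in the reduced graph. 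Your ``single extra edge \dots\ reconnects into the opposite petal'' and ``routine finite enumeration'' do not cover this reduction step.

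For the flower regime, the paper takes an extremal rather than constructive route: let $G'$ be a \emph{maximal} partial $C_5$-flower containing $B_1$ with centre $0$, then analyse any ear of $G'$. This sidesteps your stated worry about proving the rim assembles into a single $C_{3n}$, since any ear that fails to extend the flower is forced (via $3$-paths to the centre and to already-saturated $3$-vertices) to become a removable thread.
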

  \begin{proof}
    Assume that $G$ contains a copy $B$ of $B_1$ labelled with labels $0,1,2,3,4$ and $2',3',4'$ as in Figure \ref{fig:BB} so that ignoring the primes we get a $C_5$-colouring. 
    As $G$ has no $3$-threads, at least one of each of $\{2,3,4\}$ and $\{2',3',4'\}$ must be a $3^+$-vertex. Whichever is, it is distance $2$ to one of the $3$-vertices $0$ or $1$, so is also a $3$-vertex by Lemma \ref{lem:c5}. If these two $3$-vertices have a $3$-path between them, then, again by Lemma \ref{lem:c5}(2), they must share a new neighbour not in the graph yet, so if one is $3$ and the other is $2'$, the new vertex makes a copy of $E_1$ and we are done. 
    With symmetric arguments, we are done if $3$ and $2'$ or $4'$ are three vertices, or if $3'$ and $2$ or $4$ are.
    Up to symmetry, we have three cases left to consider: those where the pairs $(2,4')$ or $(4,4')$ or $(3,3')$ are pairs of $3$-vertices.  \\
 
    {\bf Case: $3$ and $3'$ are $3$-vertices.}  In this case, we may assume that the $2$-vertices $2,2',4$ and $4'$ are all saturated in $B$, as if any one of them is a $3$-vertex, we appeal to an earlier case. Further, the vertices $0$ and $1$, $3$, and $3'$ are all $3$-vertices by Lemma \ref{lem:c5} because they are $3^+$-vertices with $3$-paths to other $3$-vertices.  So only $3$ and $3'$ are unsaturated in $B$, each having an edge out of $B$.  As $G$ is $2$-connected, there is an ear from $3$ to $3'$. Let $P$ be the shortest such ear.   

If the ear $P$ has length $1,2,$ or $3$ we have $E_2, S_1,$ or an induced $\H_3$.  In the first case we are done, and in the other two would yield contradictions. So 
we may assume that $P$ has length at least $4$. Let $a$ and $a'$ be the vertices in $P$ ear adjacent to $3$ and $3'$ respectively.  They cannot be $3^+$ vertices, as they both have $3$-paths to the $3$-vertex $1$, all of whose neighbours are saturated. So $a$ and $a'$ are $2$-vertices. Now, $B$ has no edges to $G \setminus B$ except $3a$ and $3'a'$, and the vertices $3$ and $3'$ get the same colouring under any $C_5$-colouring, so the graph $G'$ we get from $G$ by replacing all of $B$ with one new vertex $b$ adjacent just to $a$ and $a'$ has exactly the same $C_5$-colourings as $G$ does when restricted to $G \setminus B$.  In particular, if $G$ is $C_5$-critical then $G'$ is.  But $G'$ is not critical, as the $3$-thread $a,b,a'$ is removable.  So $G$ is not $C_5$-critical, a contradiction. \\

    {\bf Case: $2$ and $4'$ are $3$-vertices.}  The vertices $2$ and $4'$ have a $3$-path between them so by Lemma \ref{lem:c5} share a new neighbour $3''$, and with this become saturated. The $2$-vertices $3$ and $3'$ are saturated or we can apply a previous case.  The vertices $0,1,2$ and $4'$ are $3$-vertices with $3$-paths to $3$-vertices, so are saturated. 

 The only three unsaturated vertices are those in $X:= \{2',3'',4\}$. They are pairwise distance $3$ apart, so by girth, cannot have edges between them, and for two any of them that are $3$-vertices, the third edges go to a common new third neighbour $x$. If all vertices of $X$ are $3^+$-vertices, then we get a copy of $E_2$ with $5$-cycles $x,3'',4',0,4$ and $x, 3'', 2, 1, 2'$. If only one vertex in $X$ is a $3$-vertex, then $G$ is not $2$-connected, a contradiction. So we may assume that exactly two of them are $3$-vertices. They are saturated in the graph $G'$ made so far on $B \cup \{3'',x\}$; as is the other vertex of $X$, because it is assumed not to be a $3^+$-vertex.  So the only possibly unsaturated vertex of $G'$ is $x$, making $G'$ saturated. But $G'$ is $C_5$-colourable, so is a proper subgraph of $G$, contradicting Fact \ref{fact:saturated}.   \\

    {\bf Case: $4$ and $4'$ are $3$-vertices.} 
  The neighbours $a$ of $4$, and $a'$ of $4'$ are new and distinct by girth considerations. The $3$-vertices $1,4$ and $4'$ have $3$-paths to other $3$-vertices so are saturated in the graph build on $B \cup \{a,a'\}$ by adding these edges.  
    
     The vertex $2'$ has a $3$-path to $4$ so if it has a third edge, this edge is to one of the neighbours $0,3$ or $a$ of $4$.  In the first two cases we violate girth, and in the last case $2$ and $4'$ are $3$-vertices, which is dealt with above. So the $2$-vertices $2$ and $2'$ are saturated $2$-vertices.

       The vertex $3'$ has a $3$-path to $4$ so if it has a third edge, this is  to one of $0,3$ or $a$. The first case violates girth, the second yields an $E_2$, and the third case is already dealt with above. So $3'$ and $3$ are saturated $2$-vertices. 

     The only vertex of $B$ that may have another edge is $0$.  So $a$ and $a'$ have no other neighbours in $B$.

    We view the $B$ as the start of a $C_5$-flower with the vertex $0$ as its centre, in which only $0$ is not saturated.  If we extend this to a partial $C_5$-flower -- a $C_5$-flower from which we have removed the two adjacent $2$-vertices of some petal -- the argument for $C_5$-flowers in Fact \ref{fact:saturated-2} works to argue that the only unsaturated vertices are $0$ and the corner vertices $c$ and $c'$: those two $3$-vertices who have been made $2$-vertices by the removal of the two vertices. 

    Let $G'$ be the largest (partial) $C_5$-flower with centre $0$ that contains  our original $B$.  If $G'$ is a full even $C_5$-flower, we have a saturated subgraph of $G$; this is a contradiction. If $G'$ is a full odd $C_5$-flower, then it is critical, so is $G$, as needed.

    So we may assume that $G'$ is a partial $C_5$-flower. As $G'$ is $C_5$-colourable, there is an ear of $G'$ in $G$; and as the only three non-saturated vertices of $G'$ are $0,c$ and $c'$, we may assume the ear begins at $c$. It must have length at least $3$ by girth, so assume it starts with the path $c,x,y,z$, where $x$ and $y$ are necessarily new vertices. If $z = 0$ then we have a $C_4$, and if $z = c'$ then this is a bigger $C_5$-flower, contradicting the choice of $G'$, so $z$ is also a vertex not in $G'$.   The vertex $x$ has a $3$-path to the $3$-vertex $1$, and the only non-saturated neighbour of $1$ is $0$, but $x$ is not adjacent to $0$ by girth, so $x$ is saturated.  As $y$ has a $3$-path to $0$ it must be a $2$-vertex, or it would have to share a neighbour, which we may assume is $z$, with $0$. But this makes a larger partial $C_5$-flower, contradicting the choice of $G'$.  Thus $y$ is a saturated $2$-vertex, and $z$ is not adjacent to $0$.
 Having a $3$-path to the $3$-vertex $c$, $z$ is a $2$-vertex, and so we have a removable thread, contradicting the assumption that $G$ is 
critical. 
  \end{proof}

  Recall that $B_2$, shown in Figure \ref{fig:BB}, consists two $C_5$s whose intersection is a path of length two.  

   \begin{claim}\label{cl:B2}
       $G$ contains no copy of $B_2$. 
   \end{claim}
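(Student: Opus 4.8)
The plan is to assume for contradiction that $G$ contains a copy of $B_2$, labelled $0,1,2,3,4,3',4'$ as in Figure~\ref{fig:BB}, and to show that the forced local structure must produce either an $\H_3$, a cycle shorter than $5$, or a saturated $C_5$-colourable subgraph, each of which is impossible (the last contradicting Fact~\ref{fact:saturated}). First I would pin down the two hub vertices: since $0$ and $2$ are $3^+$-vertices at distance $2$ in the $5$-cycle $0,1,2,3,4$, Lemma~\ref{lem:c5}(1) forces both to be $3$-vertices, hence \emph{saturated} (their only neighbours are those drawn in $B_2$). Girth $5$ then makes the three $0$--$2$ paths (of lengths $2,3,3$) induced and rules out every chord among the named vertices, since $0\sim 2$, $3\sim 3'$, or $3\sim 4'$ would each create a triangle or a $C_4$.

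A clean preliminary reduction comes from the fact that $C_5$ is a core: every homomorphism $C_5\to C_5$ is an automorphism, so in any $C_5$-colouring the two $5$-cycles $0,1,2,3,4$ and $0,1,2,3',4'$ are coloured by the automorphism determined by the common images of $0,1,2$. Hence $c(3)=c(3')$ and $c(4)=c(4')$ in every colouring, so $3,3'$ (and $4,4'$) may be identified without changing $C_5$-colourability; I expect this to be the engine of the final reduction step. Next I would settle the degrees of the outer vertices. If $3,4,3',4'$ are all $2$-vertices, then $B_2$ is a proper saturated subgraph (only $1$ possibly unsaturated) that is $C_5$-colourable, contradicting Fact~\ref{fact:saturated} (or, if $B_2=G$, the criticality of $G$). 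Otherwise, applying Lemma~\ref{lem:c5}(3) to each $5$-cycle and using that $0,2$ are non-adjacent forces an adjacent pair of $3^+$-vertices in each; thus at least one of $\{3,4\}$ and at least one of $\{3',4'\}$ is a $3^+$-vertex, in fact a $3$-vertex by Lemma~\ref{lem:c5}(1), whose extra neighbour lies outside $B_2$ by girth. Up to the symmetries of $B_2$ (swapping the two petals, and swapping $0\leftrightarrow 2$), only the representatives $\{3,4'\}$ and $\{3,3'\}$ being $3^+$ remain.

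The crucial configuration is a pair of $3^+$-vertices joined by a path of length $3$, such as $3$ and $4'$ along $3,4,0,4'$. By Fact~\ref{fact:H3} two such vertices lie at distance at most $2$, so they share a neighbour; girth and the saturation of $0,2$ force this common neighbour to be a \emph{new} vertex $s$, whereupon $s,3,4,0,4'$ and $s,3,2,3',4'$ are two $5$-cycles meeting in the length-$2$ path $3,s,4'$---that is, a fresh copy of $B_2$ centred at $s$. I expect this \emph{regeneration} to be the main obstacle: the analysis does not close off locally but propagates the forbidden configuration. I would resolve it by an extremal device. Either I take $G$ to be a smallest $\H_3$-subgraph-free $C_5$-critical counterexample and apply the identification $3\!\equiv\!3'$, $4\!\equiv\!4'$, which collapses $B_2$ to the single $5$-cycle $0,1,2,\bar 3,\bar 4$ and strictly shrinks $G$ while preserving non-colourability, contradicting minimality; or, in the spirit of the preceding claim on $B_1$ and of Fact~\ref{fact:saturated-2}, I show the propagation is trapped: it must terminate either in an even flower $F_n$ ($n\ge 4$) or in the graph $S_1$ (both $C_5$-colourable and saturated, hence impossible by Facts~\ref{fact:saturated} and~\ref{fact:saturated-2}), or in a full odd flower or exceptional graph $E_i$, none of which contains a copy of $B_2$ (consecutive petals of a flower meet only in a single edge, giving $B_1$-type overlaps, never the $B_2$-type overlap).

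The remaining work is bookkeeping I would carry out case by case: checking that the identification step introduces no new $\H_3$ and lowers neither girth nor $2$-connectivity, and that each propagation step respects girth $5$ and the saturation of the hubs. These verifications are routine given Lemma~\ref{lem:c5} and Fact~\ref{fact:H3}, but they are where the bulk of the case analysis lives, and ensuring that the extremal/minimality choice genuinely terminates the regeneration is the step I expect to require the most care.
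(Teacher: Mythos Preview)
Your reduction to the two representative cases is sound, but the handling of both is where the proposal diverges from a working proof.

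In your case $\{3,4'\}$, you observe that the forced common neighbour $s$ produces a new $B_2$ on the two $5$-cycles $s,3,4,0,4'$ and $s,3,2,3',4'$, and you worry about regeneration. But look again: the new $5$-cycle $s,3,2,3',4'$ and the \emph{original} $5$-cycle $0,1,2,3,4$ meet exactly in the single edge $23$. That is a copy of $B_1$, so the previous claim applies and this case closes immediately. The paper does exactly this (in its labelling the pair is $\{2,3'\}$), and the regeneration phenomenon never arises.

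For the remaining representative $\{3,3'\}$ (equivalently $\{4,4'\}$), your identification and minimality devices are not needed and would be delicate to justify: identifying $3\equiv 3'$, $4\equiv 4'$ can drop the girth or create an $\H_3$ once the outside neighbours of $3,3'$ (or $4,4'$) interact, and the quotient need not remain $C_5$-critical. The paper instead argues locally. Taking $4,4'$ as the $3$-vertices with new neighbours $a,a'$: if $a\sim a'$ one again finds a $B_1$; otherwise $a$ and $a'$ are forced to be $2$-vertices (each has a $3$-path back to the saturated hub $2$, and sharing a neighbour with $2$ would violate girth or hit a saturated vertex), and likewise their next neighbours $b,b'$ are $2$-vertices (via $3$-paths to the other hub $0$, with the same obstruction). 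This yields a removable $3$-thread, contradicting criticality. The whole case is a short thread-chasing argument, with every potential escape route producing a $B_1$ and hence falling to the earlier claim.

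One small bookkeeping point: Lemma~\ref{lem:c5}(3) alone does not yield ``at least one of $\{3,4\}$ and at least one of $\{3',4'\}$ is a $3^+$-vertex'', since the adjacent $3^+$-pair in a $5$-cycle could be $\{0,1\}$ or $\{1,2\}$. The correct reason is Fact~\ref{fact:critical}(4): if, say, both $3'$ and $4'$ were $2$-vertices, then the $2$-thread $3',4'$ between the hubs $2$ and $0$ would be redundant with the $2$-path $2,1,0$.
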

   \begin{proof}
      Towards contradiction, assume that $G$ contains a copy $B$ of $B_2$. Label the vertices of $B$, as in the figure, so that the two $5$-cycles are $4,0,1,2,3$ and $4,0,1,2',3'$ The $3$-vertices $4$ and $1$ are saturated.  At least one vertex in each of $\{2,3\}$ and $\{2',3'\}$ must also be a $3^+$-vertex, or we have a redundant path. 
       
    If $2$ and $3'$ are $3^+$-vertices, then being distance $3$ apart, they share a new common neighbour, and this makes a copy of $B$, so we are done by the previous lemma. The same holds if $2'$ and $3$ are $3^+$-vertices.  
    We may thus assume that $3$ and $3'$ are $3^+$-vertices, and that the
    $2$-vertices $2$ and $2'$ are saturated.  As $3$ and $3'$ are $3^+$ vertices with $3$-paths to $1$, they are in fact $3$-vertices. 
 The non-saturated vertices of $B$ are $0,3$ and $3'$. The third neighbours  of $3$ and $3'$ can neither be $0$ or a common new vertex by girth, so are distinct new vertices $a$ and $a'$ respectively. Adding the edges from $3$ and $3'$ to $a$ and $a'$ to our subgraph of $G$, the vertices $3$ and $3'$ become saturated. 
 
 If $a$ and $a'$ are adjacent we have a copy of $B_1$, and they cannot have edges to $0$ by girth. As all other vertices are saturated, they have respective neighbours $b$ and $b'$ not in $B$, which may be the same vertex.
 Adding these to our subgraph, the $2$-vertices $a$ and $a'$ are saturated, as they have $3$-paths to the $3$-vertex $1$ whose only unsaturated neighbour is $0$, and they cannot have edges to $0$, as it would contradict the girth.  
 
If $b$ has an edge to $0$, then we have a $B_1$ with $5$-cycles $0,4,3,a,b$ and $0,4,3',2',1$, so we may assume that $b$ has no edge to $0$. Similarly we assume that $b'$ has no edge to $0$. 
 As $b$ and $b'$ are distance $3$ from $4$, and the only unsaturated neighbour of $4$ is $0$, we get that $b$ and $b'$ are $2$-vertices.

 If $b = b'$ or $bb'$ is an edge we have a removable thread, so we may assume that $b$ has new neighbour $c$. But $c$ has a $3$-path to the $3$-vertex $3$ all of whose neighbours are saturated. So $c$ is also a $2$-vertex and we have a removable thread. 
   \end{proof}

  Now, we have assumed that $G$ has girth $5$. Let $C$ be a $5$-cycle in $G$ with vertices $0,1,2,3,4$. By part (4) of Fact \ref{fact:saturated}, there are at least three $3^+$-vertices in $C$, so we may assume that $1$ and $4$ are $3^+$-vertices. By girth, they have distinct neighbours, $a$ and $z$ respectively, that are not in $C$.  Further, $a$ and $z$ are non-adjacent as an edge between them would make a $B_2$, which contradicts Claim 
  \ref{cl:B2}.
  
  By girth, neither of $a$ or $z$ have a second edge to $C$, and they both have $3$-paths to $3$-vertices in $C$, so they are both $2$-vertices.  They have new neighbours $b$ and $y$ respectively, not necessarily distinct.

  As $C$ has at least one more $3$-vertex, we may assume that $b$ has a $3$-path to this vertex $i$. If $b$ is a $3$-vertex, then it must have an edge to a neighbour of $i$, other than $1$ and $4$. Whatever this edge is, would make a $B_1$ or a $B_2$, so we would be done by one of the claims.  Thus $b$ is a $2$-vertex. 

  If $b=y$ we have a removable thread, so let $c$ be the new neighbour of $b$. 
       As $c$ has a $3$-path to $1$, if it is a $3$-vertex then it has an edge to $0$ or $2$, making a $B_1$. 
       So we may assume that $c$ is a $2$ vertex, and so we have a removable thread.  Thus $G$ is not critical, which is a contradiction. 
\end{proof}

 \subsection{Step 3} 

Finally, we prove statement 3 of Theorem~\ref{thm:C5-char}.
  
 \begin{proposition}\label{prop:C5-char-3}
    If $G$ is a $C_5$-critical $\H_3$-subgraph-free graph of girth $4$ then it is $E_3$ from Figure \ref{fig:h3counters}.   
  \end{proposition}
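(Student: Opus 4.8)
The plan is to proceed as in Step~2: start from a shortest relevant cycle and grow the graph outward, at every stage invoking that $G$ has girth $4$ (hence no triangle), no removable or redundant thread, no clone or subclone, no saturated proper subgraph, and no $\H_3$ subgraph. First I would fix an induced $4$-cycle $C=abcd$ with $a\sim b\sim c\sim d\sim a$, which exists since the girth is $4$. By Fact~\ref{fact:critical}(6) all four of its vertices are $3^+$-vertices, and by Fact~\ref{fact:critical}(1) (or the girth) $G$ is $K_3$-free, so $C$ is induced and each of $a,b,c,d$ has a neighbour off $C$. The target is to show that $G$ is exactly $E_3$, most usefully described as the cycle $C=abcd$ together with a \emph{$2$-thread} (a path with two internal $2$-vertices, i.e.\ a path of length $3$) joining each of the two diagonal pairs $\{a,c\}$ and $\{b,d\}$; equivalently, $E_3$ consists of four $3$-vertices forming a $C_4$, with its remaining four $2$-vertices forming the interiors of these two length-$3$ diagonals.

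The first real step is to prove that every vertex of $C$ is a $3$-vertex. By Lemma~\ref{lem:c4}(1), a $4^+$-vertex of $C$ forces both of its $C$-neighbours to be $3$-vertices, and two adjacent $4^+$-vertices are impossible; so the only configurations to exclude are a single $4^+$-vertex and two opposite $4^+$-vertices. I would rule these out by producing an $\H_3$ or a forbidden reduction: if, say, $a$ were a $4^+$-vertex, then $b,d$ are $3$-vertices with external neighbours $b',d'$, and analysing the length-$3$ paths $b'\,b\,a\,d$ and $d'\,d\,a\,b$ through the high-degree vertex $a$ (using Fact~\ref{fact:H3} together with girth $4$ to forbid the short-circuiting chords) pins $b',d'$ down either to a clone situation (excluded by Fact~\ref{fact:critical}(5)) or to a $C_5$-colourable subgraph that is saturated apart from one vertex, contradicting Fact~\ref{fact:saturated}. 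Once all four vertices of $C$ are $3$-vertices, each has a \emph{unique} external neighbour $a',b',c',d'$, and girth $4$ ensures no external neighbour of a $C$-vertex is adjacent to a $C$-neighbour of that vertex (else a triangle), and in particular $a'\neq b,d$ and $a'\not\sim b,d$.

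Next I would trace the threads leaving $C$ and show that they realise precisely the two diagonals. The crucial tool is that a length-$3$ path between two $3^+$-vertices forces distance at most $2$ (Fact~\ref{fact:H3}): applying this to $a'\,a\,b\,c$ and $a'\,a\,d\,c$, and using that $G$ is $K_3$-free, that $a,c$ are $3$-vertices (so they cannot be clones, by Fact~\ref{fact:critical}(5)), and that $3^+$-threads are removable while equal-length parallel threads are redundant (Fact~\ref{fact:critical}(3)--(4)), forces the thread out of $a$ to reach $c$ through exactly one further $2$-vertex, which must be $c'$; hence $a'\sim c'$ and $a\,a'\,c'\,c$ is a $2$-thread on the diagonal $\{a,c\}$. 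The symmetric argument gives the $2$-thread $b\,b'\,d'\,d$ on $\{b,d\}$. The graph built so far on these eight vertices is exactly $E_3$, and I would close by noting that it is saturated apart from at most one vertex: any further vertex or edge of $G$ would create a triangle (girth $4$), a clone or subclone, a removable or redundant thread, or a saturated $C_5$-colourable proper subgraph, each contradicting Fact~\ref{fact:critical} or Fact~\ref{fact:saturated}; hence $G=E_3$. I expect the main obstacle to be exactly the branching in this thread-tracing: a priori the thread out of $a$ could close onto $b$, onto $d$, or onto a brand-new $3^+$-vertex rather than onto $c$, and the external neighbours could coincide or link in a non-diagonal pattern. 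The real work is to show that every such branch either violates girth, produces a redundant or removable thread or a clone, or yields a $C_5$-colourable saturated subgraph, so that only the two disjoint diagonal $2$-threads survive and $G$ is forced to be $E_3$.
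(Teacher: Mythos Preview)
Your outline has a recurring gap: you repeatedly invoke Fact~\ref{fact:H3} on paths whose endpoints are not known to be $3^+$-vertices. In your ``first step'' you look at the paths $b'\,b\,a\,d$ and $d'\,d\,a\,b$, and in the thread-tracing step at $a'\,a\,b\,c$ and $a'\,a\,d\,c$; but nothing you have established says that $b',d',a'$ are $3^+$-vertices, and indeed in $E_3$ the external neighbours $a',b',c',d'$ all have degree~$2$. So Fact~\ref{fact:H3} simply does not apply to these paths, and your conclusion that ``the thread out of $a$ must reach $c$'' is unjustified. A priori the $2$-thread leaving $a$ could end at a brand-new $3^+$-vertex not on $C$, and you have no mechanism to rule that out.

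The paper closes exactly this gap with a lemma you do not use: every $3^+$-vertex of $G$ is adjacent to \emph{every} $C_4$ in $G$ (Lemma~\ref{lem:c4b}). This is what forces any $3^+$-vertex reached along a thread out of $C$ to already lie in $\bigcup_i N_i$, and it is proved by a nontrivial case analysis on distance to $C$. A second ingredient you are missing is the exclusion of $K_{2,3}$: your argument does not address the possibility $a'=c'$ (opposite vertices of $C$ sharing a third common neighbour), which is consistent with girth~$4$ and requires its own claim in the paper. Finally, rather than arguing saturation directly at the end, the paper packages the endgame into a reusable ear-addition lemma (Lemma~\ref{lem:BB'}) that reduces everything to adding edges inside a small set $B'$ of pendant $2$-vertices. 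Your proposal gestures at saturation but never sets up the $2$-connected core $B$ and boundary $B'$ needed to make that argument go through.
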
  

  Before we begin the actual proof, we prove two lemma. 

  \begin{lemma}\label{lem:c4b}
       In an $\H_{3}$-subgraph-free $C_5$-critical graph $G$, every $3^+$-vertex $v$ has an edge to every $C_4$ in $G$.   Moreover,  $v$ is in a $P_2$ or $P_3$ between non-adjacent vertices of the $C_4$.  
  \end{lemma}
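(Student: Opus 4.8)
The plan is to prove the adjacency statement first and then obtain the ``moreover'' clause almost for free. Throughout I fix a $C_4$ $Q=abcd$ (edges $ab,bc,cd,da$, diagonals $\{a,c\}$ and $\{b,d\}$); by Fact~\ref{fact:critical}(6) all four corners are $3^+$-vertices, and I will use that $G$ is $K_3$-free, $2$-connected, has no $3^+$-thread (Fact~\ref{fact:critical}), and obeys Fact~\ref{fact:H3}: two $3^+$-vertices joined by a $3$-path are at distance at most $2$.

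\emph{Reduction of the moreover-clause to adjacency.} Suppose $v\notin Q$ is a $3^+$-vertex adjacent to a corner, say $v\sim a$, and let $c$ be the diagonal partner of $a$. Then $v,a,b,c$ is a $3$-path between the $3^+$-vertices $v$ and $c$, so Fact~\ref{fact:H3} gives $\dist(v,c)\le 2$. If $\dist(v,c)=1$ then $a,v,c$ is a $P_2$ between the non-adjacent pair $a,c$. If $\dist(v,c)=2$, take a common neighbour $m$ of $v$ and $c$: since $a\not\sim c$ we have $m\neq a$, while $K_3$-freeness excludes $m\in\{b,d\}$ (either would create a triangle $v,a,b$ or $v,a,d$), so $m\notin Q$ and $a,v,m,c$ is a $P_3$ between $a,c$. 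As $v\in Q$ is trivial, it remains only to show that every $3^+$-vertex is adjacent to $Q$.

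\emph{Adjacency.} I would argue by contradiction, using one clean consequence of Fact~\ref{fact:H3}: no two $3^+$-vertices of $G$ are at distance exactly $3$, since a shortest path between them would be a $3$-path. Assume some $3^+$-vertex lies at distance $\ge 2$ from $Q$, and pick such a $v$ minimising $t=\dist(v,Q)$, with shortest path $v=u_0,u_1,\dots,u_t\in Q$. If $t=3$, then $v$ and the corner $u_3$ are $3^+$-vertices at distance $3$, impossible. If $t\ge 4$, then $\dist(v,u_3)=3$ forces $u_3$ to be a $2$-vertex, so the absence of $3^+$-threads makes one of $u_1,u_2$ a $3^+$-vertex; it lies at distance $t-1$ or $t-2$ from $Q$ (both $\ge 2$ and $<t$), contradicting minimality. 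Hence $t=2$. Applying Fact~\ref{fact:H3} to the $3$-paths $v,u_1,a,b$ and $v,u_1,a,d$, and then (using $\dist(v,b)=2$) to a $3$-path $v,m_b,b,c$, shows that $v$ is at distance exactly $2$ from \emph{all four} corners; let $m_a,m_b,m_c,m_d$ be common neighbours of $v$ with $a,b,c,d$, each outside $Q$. By $K_3$-freeness no middle vertex is adjacent to two adjacent corners, so the $m$'s can coincide only within a diagonal pair. If they are all distinct, then $v,m_a,a,b$ is the spine of a copy of $\H_3$ with leaves $m_c,m_d$ at $v$ and leaves $c,m_b$ at $b$ (all eight vertices distinct, all seven edges present), contradicting $\H_3$-subgraph-freeness. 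Otherwise a diagonal pair coincides, say $m_b=m_d=:m$; then $m$ is a fresh common neighbour of $b,d$, so $b,a,d,m$ is a new $C_4$ adjacent to $v$, and the finite local configuration $Q\cup\{m,m_a,m_c,v\}$ turns out to be $C_5$-colourable with at most one unsaturated vertex once Lemma~\ref{lem:c4} is used to force the relevant corners to be $3$-vertices, contradicting Fact~\ref{fact:saturated}.

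The generic distance-$2$ case and the whole moreover-reduction are routine; the hard part is the coincidence case, where one must pin down exactly which vertices of the local configuration may send edges out of it, in order to certify that the subgraph is saturated (or already $C_5$-colourable and hence cannot sit inside a $C_5$-critical graph). This bookkeeping, driven by Lemma~\ref{lem:c4} and $K_3$-freeness across the several $C_4$'s that now appear, is where the argument is delicate, whereas every other step reduces to a single application of Fact~\ref{fact:H3}.
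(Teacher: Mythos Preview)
Your reduction of the ``moreover'' clause and your handling of distances $t\ge 3$ are correct, and in fact slightly cleaner than the paper's (you use a minimality argument where the paper splits off $t=4$ and $t\ge 5$ separately). The all-distinct subcase at distance $2$ is also correct; your explicit $\H_3$ on $\{m_c,m_d\},v,m_a,a,b,\{c,m_b\}$ is exactly what the paper intends.

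The genuine gap is in the coincidence subcase. Your proposed saturation argument cannot work as stated: if $m_b=m_d=:m$ (and $m_a\neq m_c$), then inside the eight-vertex configuration $Q\cup\{m,m_a,m_c,v\}$ the vertices $b$ and $d$ have the \emph{identical} neighbourhood $\{a,c,m\}$. Since $G$ is $C_5$-critical it has no clones or subclones (Fact~\ref{fact:critical}(5)), so each of $b,d$ must have a private extra neighbour $b'',d''$ outside the configuration; hence $b$ and $d$ are both $4^+$ in $G$ and both unsaturated in your subgraph, so Fact~\ref{fact:saturated} does not apply. Lemma~\ref{lem:c4}(1) goes the wrong way here: it forces $a,c,m$ down to degree $3$ once a $4^+$ neighbour is present, but it cannot cap the degrees of $b$ and $d$ themselves. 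In the double-coincidence case $m_a=m_c$ and $m_b=m_d$ the situation is worse: $a,c$ and $b,d$ are clone pairs and $v$ has degree only $2$ in the subgraph, so at least five vertices are unsaturated.

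The paper closes this case precisely via the clone observation, not via saturation. Having obtained the new neighbour $b''$, one exhibits an explicit $\H_3$, for instance $\{a,b''\},b,c,m_c,v,\{m,m_a\}$. In the double-coincidence case, all four corners of $Q$ being $4^+$ immediately contradicts Lemma~\ref{lem:c4}(1) (two adjacent $4^+$-vertices in a $C_4$). So the missing ingredient is Fact~\ref{fact:critical}(5); once you invoke it, the contradiction is again an $\H_3$, not a saturated subgraph.
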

   \begin{proof}
    
     Fix a $4$-cycle $C$ in $G$ on the vertices $0,1,2,3$. 
     There can be no $3^+$ vertex at distance exactly $3$ from $C$ as it would make an $H_3$ with the vertex in $C$ closest to it.

     \begin{claim*}
         There can be no $3^+$-vertex at distance exactly $2$ from $C$. 
     \end{claim*} 
     \begin{proof}
Assume that $v$ is a $3^+$ vertex at distance $2$ from $C$. We may assume that there is a path $v \sim 0' \sim 0$ in $G$. 
As there is a $3$-path from $v$ to the $3^+$-vertex $1$, we have by Fact \ref{fact:H3} that $v$ an edge to, or shares a neighbour with, $1$. By the assumption that $v$ has distance $2$ to $C$, it must share a neighbour with $1$, and this must be a vertex not in $C$ or we have a $K_3$. The same argument works to show that $v$ shares neighbours not in $C$ with $2$ and $3$. Thus $G$ contains the following graph where all vertices are distinct except that $0' = 2'$ and/or $1' = 3'$.  (The graph has a $\H_3$ so we must make at least one of the identifications.)

 \begin{center}
  \begin{tikzpicture}[every node/.style={vert}]
      \begin{scope}[xshift = -5cm]
       \foreach \i/\x/\y in {W/-2/0, v/0/0, E/2/0, N/0/2, S/0/-2}{\draw (\x,\y) node (\i){};}
        \foreach \i/\x/\y/\col in {w/-1/0/red, e/1/0/red, n/0/1/blue, s/0/-1/blue}{\draw (\x,\y) node[fill=\col] (\i){};}
        \Edges{W/N, N/E, E/S, S/W, W/w, w/v, N/n, n/v, E/e, e/v, S/s, s/v}
        \Vlabels{v/below right/v, N/right/0, E/right/1, S/right/2, W/left/3, n/right/0', e/below/1', s/right/2', w/below/3'}  
      \end{scope}     
  \end{tikzpicture} 
  \end{center}
  
 Notice that the possibly identified vertices get the same colour under the essentially unique $C_5$-colouring of this graph, so whether we make one or both of these identifications, we get a $C_5$-colourable graph.  Making one of the identifications, say $0' = 2'$, the vertices $0$ and $2$ become clones, so must each have a new edge to distinct vertices $0''$ and $2''$. These must be new vertices, or we make a $C_3$.  We thus have an $\H_3$ on $\{3,0''\},0,1,1',v,\{2',3'\}$.  

Making both identifications the vertices $0$ and $2$ are clones, as are the vertices $1$ and $3$. Arguing as above they each have new distinct neighbours, so we have a $C_4$ of $4^+$-vertices, giving an $\H_3$.  
\end{proof}

 Now, there are no $3^+$-vertices at distance $2$ or $3$ from $C$. Assuming there are $3^+$-vertices at distance greater than $3$ from $C$, and let $v$ be a closest such vertex to $C$.  If $v$ has distance $5$ or more from $C$, then since the closest $3^+$-vertex on its shortest path to $C$ is at distance $1$ from $C$, it contains a removable $3^+$-thread. So $v$ must have distance $4$ from $C$. The vertex $1'$ at distance $1$ from $C$ on a shortest path to $C$ must then be a $2$-vertex, or we have a $H_3$-- $v$ and $1'$ would be $3^+$ vertices at distance $3$ and from the fact that $v$ has distance $4$ to $C$ they cannot be adjacent or share a neighbour.  So there is a removable $3$-thread between $v$ and $C$, a contradiction.   

 For the 'Moreover' statement, assume that a $3^+$-vertex $v$ with an edge to $C$, has an edge to $0$. As there is a $3$-path to $2$, then $v$ and $2$ are either adjacent or share a neighbour, and as $G$ is $C_3$-free the shared neighbour would have to be a new vertex. 
 \end{proof}
       
 There is one more argument that we use several times, so we set it up as a lemma.

  \begin{lemma}\label{lem:BB'}
    Let $G$ be a $C_5$-critical graph. Let $G'$ be a $C_5$-colourable subgraph on the vertex set $B \cup B'$ where:
     \begin{enumerate}
          \item  $G'|_B$ is $2$-connected and contains a $C_4$, 
          \item  any $b \in B$ is saturated in $G'$, and
          \item  vertices in $B'$ are $2$-vertices of $G$ that have degree $1$ in $G'$.
     \end{enumerate}
     We can build $G$ from $G'$ by adding edges between vertices of $B'$.  
  \end{lemma}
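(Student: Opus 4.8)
The plan is to show that the only freedom left in passing from $G'$ to $G$ is to join vertices of $B'$ to one another; equivalently, to prove that $V(G)=B\cup B'$ and that every edge of $G$ not already in $G'$ has both endpoints in $B'$. First I would reduce the whole problem to a statement about the vertex set. Since each $b\in B$ is saturated in $G'$, every edge of $G$ incident with a vertex of $B$ already lies in $G'$. Hence any edge of $E(G)\setminus E(G')$ has both of its endpoints in $V(G)\setminus B$. Consequently, once I establish that $V(G)=B\cup B'$, every ``new'' edge automatically has both endpoints in $B'$, and the lemma follows. The entire task is therefore to rule out the existence of a vertex $w\in R:=V(G)\setminus(B\cup B')$.

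Next I would pin down the degree of a putative external vertex $w\in R$, and this is the step I expect to be the crux. No neighbour of $w$ can lie in $B$: if $wb\in E(G)$ with $b\in B$, then saturation of $b$ forces $wb\in E(G')$ and hence $w\in V(G')=B\cup B'$, contradicting $w\in R$. Now suppose, for contradiction, that $w$ were a $3^+$-vertex. Because $G'|_B$ contains a $C_4$ by hypothesis~(1), and this is a $C_4$ of $G$, Lemma~\ref{lem:c4b} tells us that $w$ must have an edge to it, i.e.\ $w$ is adjacent to some vertex $c\in B$; but we have just seen that $w$ has no neighbour in $B$. Hence $w$ cannot be a $3^+$-vertex, and since $G$ is $2$-connected (Fact~\ref{fact:critical}) it has minimum degree at least $2$, so $w$ is exactly a $2$-vertex. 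It is precisely here that the $C_4$ hypothesis in~(1) does its work, in combination with saturation, to forbid a high-degree vertex outside the saturated core.

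Finally I would derive the contradiction. The two neighbours of $w$ lie in $V(G)\setminus B=B'\cup R$, and every vertex of $B'\cup R$ is a $2$-vertex: those in $B'$ by hypothesis~(3), and those in $R$ by the previous paragraph. Thus $w$ together with its two neighbours is a path of three distinct $2$-vertices, that is, a $3$-thread (and in particular a $3^+$-thread). But a $C_5$-critical graph has no removable threads by Fact~\ref{fact:critical}, a contradiction. Therefore $R=\emptyset$, so $V(G)=B\cup B'$, and by the reduction of the first paragraph $G$ is obtained from $G'$ by adding edges between vertices of $B'$. A little routine bookkeeping then confirms that each $B'$-vertex, being a $2$-vertex of degree $1$ in $G'$, receives exactly one such edge, but no real difficulty remains once $R=\emptyset$ has been proved.
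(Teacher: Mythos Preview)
Your proof is correct and takes a somewhat different route from the paper's. The paper argues via ear decomposition: since $G$ and $G'|_B$ are both $2$-connected, $G$ can be built from $B$ by adding ears (Lemma~\ref{lem:Ear}); taking a \emph{shortest} ear $P$ of $B$, its endpoints lie in $B$, so by saturation the penultimate vertices of $P$ lie in $B'$; any $3^+$-vertex in the interior of $P$ would (via Lemma~\ref{lem:c4b}) have an edge to the $C_4$ in $B$, contradicting the choice of $P$ as shortest; and any further $2$-vertex on $P$ would give a $3$-thread. Hence $P$ is a single edge between two $B'$-vertices, and one iterates. Your argument bypasses the ear machinery entirely: you show directly that $V(G)=B\cup B'$ by observing that any $w\notin B\cup B'$ must be a $2$-vertex (combining saturation with Lemma~\ref{lem:c4b}) whose neighbours are also $2$-vertices in $B'\cup R$, producing a forbidden $3$-thread in one stroke. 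This is arguably cleaner, since it replaces the ``shortest ear plus iteration'' structure with a single contradiction. Both proofs lean on Lemma~\ref{lem:c4b}, which is stated only for $\H_3$-subgraph-free $C_5$-critical graphs, so both implicitly use that ambient hypothesis even though the lemma statement itself omits it.
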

  
  \begin{proof}
    Assume the setup of the lemma. As $G$ is $C_5$-critical it is $2$-connected; as the subgraph $B$ is $2$-connected, Lemma \ref{lem:Ear} says that we can construct $G$ from $B$ by adding successive ears.  Let $P$ be a shortest ear of $B$ in $G$;  say it is between vertices $u$ and $v$ of $B$.  As vertices in $B$ are saturated, $P$ must start and end with edges $u,u'$ and $v',v$ to $2$-vertices $u'$ and $v'$ in $B'$. As vertices of $B'$ have degree $1$ in $G'$, $u'$ and $v'$ are distinct.  If there are any $3^+$-vertices on $P$, then they must be adjacent to the $C_4$ in $B$; this contradicts the choice of $P$ as the shortest ear.   If there are any other $2$-vertices in $P$, then we have a $3$-thread, contradicting the criticality of $G$. So $P$ is a $3$-path through $B'$. Thus one gets $P$ by adding an edge between vertices of $B'$, as needed.    
 
 When we apply this lemma to a graph to extend it, it adds an edge between vertices of $B'$. If the graph remains $C_5$-colourable, then the resulting graph still satisfies the conditions of the lemma (with two vertices of $B'$ moved into $B$), and we can apply it again.  If the graph becomes non-$C_5$-colourable, then it is $G$.  
  \end{proof}

We are now ready to prove the Proposition.

\begin{proof}[Proof of Proposition \ref{prop:C5-char-3}]

 Let $G$ be a $\H_{3}$-subgraph-free $C_5$-critical graph of girth $4$.  
 We start with a pair of claims showing that $G$ can contain no biclique other than a $C_4$.   

 \begin{claim}\label{cl:k33}
   $G$ cannot contain a $K_{3,3}$. 
 \end{claim}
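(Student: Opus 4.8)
The plan is to argue by contradiction: suppose $G$ contains a copy of $K_{3,3}$ with parts $A=\{a_1,a_2,a_3\}$ and $B=\{b_1,b_2,b_3\}$, and force two vertices of one part to become \emph{clones}, contradicting Fact~\ref{fact:critical}(5). The whole argument should rest on just two ingredients: the prohibition on clones, and Lemma~\ref{lem:c4}(1) (a $3^+$-vertex at distance $1$ in a $C_4$ from a $4^+$-vertex is a $3$-vertex).

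First I would record two structural observations. Since $G$ is $C_5$-critical it is $K_3$-free (Fact~\ref{fact:critical}(1)), so there are no edges inside $A$ or inside $B$ — any such edge would close a triangle through a common neighbour in the opposite part. Hence every vertex of $K_{3,3}$ has its three $K_{3,3}$-neighbours as its only neighbours inside $A\cup B$, so a vertex of $K_{3,3}$ is a $3$-vertex exactly when it has no neighbour outside $A\cup B$, and then its neighbourhood is precisely the opposite part. The key step then combines this with the clone prohibition. If two vertices of $B$, say $b_1,b_2$, were both $3$-vertices they would have identical neighbourhood $\{a_1,a_2,a_3\}$ and be non-adjacent, i.e.\ clones, which is forbidden; so at least two vertices of $B$ are $4^+$-vertices, and I fix one, say $b_1$. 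For each $i$ the edge $b_1a_i$ lies in the $4$-cycle $b_1\,a_i\,b_2\,a_{i'}$ (with $i'\neq i$), so $a_i$ is a $3^+$-vertex at distance $1$ in a $C_4$ from the $4^+$-vertex $b_1$; Lemma~\ref{lem:c4}(1) forces every $a_i$ to be a $3$-vertex. But then $a_1,a_2,a_3$ all have neighbourhood $\{b_1,b_2,b_3\}$, so $a_1$ and $a_2$ are clones — the desired contradiction.

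The argument is short precisely because the two halves interlock: the no-clone condition rules out \emph{two} degree-$3$ vertices in a part, forcing a $4^+$-vertex there, while Lemma~\ref{lem:c4}(1) then collapses the \emph{opposite} part to all $3$-vertices, reinstating clones. The only point needing care is checking that the relevant $4$-cycles genuinely exist and pass through the prescribed edge, which is immediate because each part has three vertices; I do not expect a substantive obstacle here. As a sanity check there is also a cleaner-to-state but slightly weaker alternative via saturation: if no vertex of $K_{3,3}$ had degree greater than $3$, then $K_{3,3}$ would be a saturated proper subgraph (it is $C_5$-colourable, being bipartite, hence distinct from the non-colourable $G$), contradicting Fact~\ref{fact:saturated}; one would still need Lemma~\ref{lem:c4}(1) to exclude $4^+$-vertices. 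The clone route is preferable since it never invokes $C_5$-colourability and handles all degree distributions uniformly.
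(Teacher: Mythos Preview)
Your proof is correct and uses the same two ingredients as the paper---the no-clone condition (Fact~\ref{fact:critical}(5)) and Lemma~\ref{lem:c4}(1). The paper's version is marginally more direct: since any vertex of the $K_{3,3}$ with no outside neighbour would be a (sub)clone of the other vertices in its part, \emph{every} vertex of the $K_{3,3}$ is a $4^+$-vertex, and then any two adjacent ones lie together in a $C_4$, immediately contradicting Lemma~\ref{lem:c4}(1); your asymmetric route (one $4^+$-vertex forces the opposite part to all $3$-vertices, yielding clones there) reaches the same contradiction with one extra application.
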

 \begin{proof}
     Assume, towards contradiction, that $G$ contains a $K_{3,3}$ with partite sets $\{x,y,z\}$ and $\{1,2,3\}$.  As $G$, being critical,  contains no clones, each vertex has at least one more neighbour, and so we get two adjacent $4^+$-vertices in a $C_4$, which gives an $\H_3$ by Lemma \ref{lem:c4}.   
 \end{proof}

 \begin{claim}\label{cl:k23}
     $G$ cannot contain a $K_{2,3}$. 
 \end{claim}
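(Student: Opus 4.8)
The plan is to assume, for contradiction, that $G$ contains a $K_{2,3}$ with parts $\{x,y\}$ and $\{1,2,3\}$, and to extract a contradiction from the extra neighbours forced onto the degree-two side. First I would record the local structure. Since $G$ has girth $4$ it is triangle-free, so $x\not\sim y$ and $1,2,3$ are pairwise non-adjacent; consequently every $4$-cycle inside the biclique, such as $x,1,y,2$ or $x,2,y,3$, is induced. By Fact~\ref{fact:critical}(6) each of $1,2,3$ lies in a $C_4$ and is therefore a $3^+$-vertex, while $x,y$ are $3^+$-vertices by construction.

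Next I would pin down degrees. If both $x$ and $y$ had neighbourhood exactly $\{1,2,3\}$ they would be non-adjacent vertices with equal neighbourhood, i.e.\ clones, which Fact~\ref{fact:critical}(5) forbids; so I may assume $x$ is a $4^+$-vertex. Each of $1,2,3$ is adjacent to $x$ inside a $C_4$, so Lemma~\ref{lem:c4}(1) forces $1,2,3$ to be exactly $3$-vertices. Writing $i'$ for the unique third neighbour of $i$, triangle-freeness gives $i'\notin\{x,y,1,2,3\}$.

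The crucial step is to show each $i'$ is a $2$-vertex and that the $i'$ are distinct. If $1'$ were a $3^+$-vertex, then applying Lemma~\ref{lem:c4b} to $1'$ and the induced $C_4$ on $\{x,2,y,3\}$ (which contains neither $1$ nor $1'$) yields an edge from $1'$ into that cycle; the edges $1'x$ and $1'y$ are impossible, as each would close a triangle through $1$, so $1'\sim 2$ or $1'\sim 3$. But $1'\sim 2$ places $1'$ in $N(2)=\{x,y,2'\}$, whence $1'=2'$ and $N(1)=N(2)$, making $1$ and $2$ clones and contradicting Fact~\ref{fact:critical}(5); the case $1'\sim 3$ is symmetric. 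Thus $1'$ (and likewise $2',3'$) is a $2$-vertex, and moreover the $i'$ are pairwise distinct, since $i'=j'$ would again make $i$ and $j$ clones.

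Finally I would run a thread-chasing argument on the $2$-vertex $1'$. One attachment point of the maximal thread through $1'$ is the $3^+$-vertex $1$, and since $G$ has no $3^+$-thread (Fact~\ref{fact:critical}(3)) the other attachment point $z$ is a $3^+$-vertex reached from $1$ through at most two consecutive $2$-vertices. If the thread carries two $2$-vertices, then $1$ and $z$ are $3^+$-vertices joined by a $3$-path, which Fact~\ref{fact:H3} forbids unless that path has a chord; the only possible chord leads to $x$ or $y$ and then exhibits a $C_4$ containing the $2$-vertex $1'$, contradicting Fact~\ref{fact:critical}(6). Otherwise $z$ is the second neighbour of $1'$, so $1\sim 1'\sim z$ with $z$ a $3^+$-vertex at distance $2$ from $1$; then $x,1,1',z$ is a length-$3$ path between the $3^+$-vertices $x$ and $z$, and again either $x\sim z$ closes a $C_4$ on the $2$-vertex $1'$ (contradicting Fact~\ref{fact:critical}(6)) or the path is induced and contradicts Fact~\ref{fact:H3}. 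Either way we reach a contradiction, so $G$ contains no $K_{2,3}$. I expect the last paragraph to be the main obstacle: the care lies in checking that the candidate $3$-paths are genuinely induced and in systematically ruling out every way the thread can close up, always exploiting that any chord would drop a $2$-vertex into a $C_4$.
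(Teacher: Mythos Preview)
Your argument is sound through the paragraph establishing that each $i'$ is a $2$-vertex and that the $i'$ are pairwise distinct; that part is actually a bit cleaner than the paper's treatment. The trouble is the final thread-chasing paragraph, where you misread Fact~\ref{fact:H3}. That fact says two $3^+$-vertices joined by a $3$-path are at distance at most $2$: either adjacent in a $C_4$ \emph{or at distance $2$ in a $C_5$}. It does \emph{not} say the $3$-path must carry a chord. So the clauses ``Fact~\ref{fact:H3} forbids unless that path has a chord'' and ``the path is induced and contradicts Fact~\ref{fact:H3}'' are simply false: you have not excluded the possibility that the two $3^+$-vertices share a common neighbour lying off the $3$-path.

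Concretely, in your Case~B (thread consisting of just $1'$) let $z$ be the $3^+$-neighbour of $1'$. Even if $x\not\sim z$, Fact~\ref{fact:H3} is satisfied once $x$ and $z$ share some neighbour $w\neq 1$, yielding the $5$-cycle $x,1,1',z,w$; nothing you wrote rules this out. This particular case can be repaired by using Lemma~\ref{lem:c4b} instead (force $z$ to touch the $C_4$ on $\{x,1,y,2\}$ and check all four landing spots). But in Case~A the same repair does not close: if the endpoint $z$ of the thread $1,1',1'',z$ satisfies $z\sim x$ and $z\sim y$, then $\{x,y\}\cup\{1,2,3,z\}$ is a $K_{2,4}$ and your local argument just recurses without terminating. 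This is precisely why the paper starts from a \emph{maximal} biclique $\{x,y\}\cup[n]$, pins down the degree-$2$ boundary $B'=[n]'\cup X'\cup Y'$, and then switches to a global argument via Lemma~\ref{lem:BB'}: any ear of the saturated core can only add an edge inside $B'$, and every such edge either creates $E_3$ as a proper subgraph or leaves a saturated $C_5$-colourable subgraph, both impossible. Your purely local thread argument does not see that global obstruction.
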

\begin{proof}
    Assuming that $G$ contains a $K_{2,3}$ let $B = \{x,y\} \cup [n]$ be a maximum biclique in $G$, where $[n] = \{1,2, \dots, n\}$ for some $n \geq 3$. 
 Let $X'$ and $Y'$ be the sets of neighbours of $x$ any $y$ respectively that are not in $B$.  These sets are disjoint by the choice of $B$, and non-empty or $x$ or $y$ would be redundant. To be non-redundant, each vertex $i \in [n]$ must also have a neighbour $i'$ that is not in $B$; this makes it a $3^+$-vertex adjacent, in a $C_4$,  to the $4^+$-vertex $x$, so has degree exactly $3$ by Lemma \ref{lem:c4}. 
\begin{center}

    \begin{tikzpicture}[every node/.style={labvert}]
      \begin{scope}[xshift=0]
       \LabVerts{x/.5/2, y/2.5/2, 1/0/1, 2/1/1, 3/2/1, n/4/1}
       \LabVerts{1'/0/0, 2'/1/0, 3'/2/0, n'/4/0}
       \Edges{x/1, x/2, x/3, x/n, y/1, y/2, y/3, y/n,  1/1', 2/2', 3/3', n/n'}
       \draw (3,.5) node[empty]{$\dots$};
       \draw (x) + (0,1) node[ellipse,minimum width=40pt, minimum height=20pt] (X') {$X'$};
       \draw (X'.east) edge (x); \draw (X'.west) edge (x);
       \draw (y) + (0,1) node[ellipse,minimum width=40pt, minimum height=20pt] (Y') {$Y'$};
       \draw (Y'.east) edge (y); \draw (Y'.west) edge (y);
      \end{scope}
\end{tikzpicture}
\end{center}

 We have seen that all vertices in $X'$ and $Y'$ are distinct, the are distinct from vertices of $[n]'$ as $G$ has girth $4$. So all vertices of $B':= [n]' \cup X' \cup Y'$
 have degree $1$ to $B$. 

\begin{subclaim*}
 All of the vertices in  $B'$ are $2$-vertices of $G$. 
\end{subclaim*}
\begin{proof}
  All vertices of $B'$ must have degree at least $2$ as $G$ is $2$-connected.  If $x' \in X'$ has degree at least $3$,  then we get an $\H_3$ on $\{z,z'\},x',x,1,y,\{2,3\}$ with its two new neighbours $z$ and $z'$.  So vertices of $X' \cup Y'$ have degree exactly $2$.    The vertices in $[n]'$ have degree at least $2$, so we must show they are distinct and do not have degree $3$ or more.  

  We continue the proof in two cases. First assume that  $n = 3$.  If the vertices in $[3]'$ are all the same, then we have a $K_{3,3}$ contradicting Claim \ref{cl:k33}. If two are the same, say $2' = 3'$ is distinct from $1'$, then $2$ and $3$ are clones, so must each have another edge. This makes an  $\H_3$ with central path $2,y,1,x$. So the three vertices $1',2'$ and $3'$ are distinct.    
If $2'$, say, has degree $3$, then it has an edge to $3$ (or $1$) by Lemma \ref{lem:c4b}, and ignoring the vertex $3'$ we are back in the case that $2' = 3'$ is distinct from $1'$, and are done.  So all of $1',2'$ and $3'$ have degree exactly $2$.  

We may therefore assume that $n \geq 4$.   If $1' \in [n]'$ has degree $3$, then by Lemma \ref{lem:c4b} it is adjacent to something in each of the $C_4$s in $B$ that do not contain $1$.  It cannot be adjacent to $x$ or $y$ as this would make a $C_3$, and so it is adjacent to at least two of $\{2, \dots, n\}$.  But then we have a $K_{3,3}$. We conclude, by symmetry, that everything in $[n]'$ has degree exactly $2$.  If any two vertices in $[n]'$ were the same, then they would be in a $C_4$, contradicting the fact that they have degree $2$.  Thus they, and so all vertices of $B'$ have degree $2$.  
  \end{proof}

Now, as $G, B$ and $B'$ satisfy the setup of Lemma \ref{lem:BB'} we can get $G$ by adding edges between vertices of $B'$. If we add $x'y'$ the the graph remains $C_5$-colourable so we apply the Lemma to get an edge between vertices of $[n]'$--this makes a copy of $E_3$ as a proper subgraph of $G$, which contradicts the criticality of $G$. So $x'y'$ is not an edge.    

Adding any matching to $B'$ that does not include the edge $x'y'$ leaves a $2$-colourable graph. Adding a maximal such matching leaves at most one vertex that has degree $1$.  Thus we have a $C_5$-colourable saturated subgraph of $G$, which is impossible by Fact \ref{fact:saturated}. 
\end{proof}

 With Claim \ref{cl:k23} proved, we are ready to finish our proof of the proposition.  As $G$ has girth $4$ we may assume it contains a $4$-cycle $C$ on the vertices $0,1,2,3$. 
 Every vertex $i$ of $C$  has degree at least $3$, so has at least one more neighbour $i'$.
    Let $N_i$ be the set of neighbours of $i$ in $G \setminus C$ for vertex $i$ of $C$. No two vertices of $C$ share a common neighbour not in $C$ or we would have a $C_3$ or a $K_{2,3}$, so the sets $N_i$ are disjoint; and so there are no edges from $N_i$ to $j \in V(C)$ for $i \neq j$.   

Notice the vertices of $C$ are saturated in $C \cup \bigcup N_i$, as by definition of the $N_i$ they can have no other neighbours in $G$.     
 
\begin{claim}
  If all vertices in $N= \cup_{i \in V(C)} N_i$ have degree exactly $2$, then $G$ is $E_3$. \end{claim}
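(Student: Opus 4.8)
The plan is to pin down the global structure of $G$ completely via Lemma~\ref{lem:BB'}, and then rule out every configuration except $E_3$ using only the elementary criticality facts already established.

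First I would apply Lemma~\ref{lem:BB'} with $B = V(C)$, $B' = N$, and $G'$ the subgraph consisting of $C$ together with all the pendant edges $iw$ ($w \in N_i$). The hypotheses are exactly what has already been collected just before the claim: $G'|_B = C$ is $2$-connected and is itself a $C_4$; every vertex of $C$ is saturated in $G'$, since its only neighbours in $G$ are its two $C$-neighbours and the set $N_i$; and, by the standing assumption of the claim, every $w \in N$ is a $2$-vertex of $G$ that has degree $1$ in $G'$. As $C$ is bipartite, $G'$ is $C_5$-colourable, so the lemma gives that $G$ is obtained from $G'$ by successively adding edges between vertices of $N$. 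Because each vertex of $N$ has degree exactly $2$ and currently has degree $1$, these added edges form a perfect matching $M$ on $N$; in particular $V(G) = V(C) \cup N$, and there is nothing beyond $C$, the pendant edges, and $M$.

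Next I would constrain $M$. Call an edge of $M$ joining $w \in N_i$ to $w' \in N_j$ a \emph{connection} between $i$ and $j$; it creates the length-$3$ path $i,w,w',j$ whose two interior vertices are $2$-vertices. If $i = j$ we obtain a triangle $i,w,w'$, impossible by girth $4$. If $i$ and $j$ are adjacent on $C$, then $i,w,w',j$ together with the edge $ij$ is an induced $C_4$ containing the $2$-vertex $w$, contradicting Fact~\ref{fact:critical}(6). Hence every connection joins an \emph{opposite} pair, namely $(0,2)$ or $(1,3)$. Moreover, two connections between the same opposite pair, say both between $N_0$ and $N_2$, would give two internally disjoint paths of length $3$ between $0$ and $2$ whose interiors are $2$-vertices, so one would be a redundant thread, contradicting Fact~\ref{fact:critical}(4). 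Thus there is at most one $(0,2)$-connection and at most one $(1,3)$-connection.

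Finally I would read off the structure. Every vertex of $C$ has degree at least $3$, so each $N_i$ is non-empty; since $N_0$ can only be matched into $N_2$ and at most one such edge exists, $|N_0| = |N_2| = 1$, and symmetrically $|N_1| = |N_3| = 1$. Therefore $G$ is precisely $C$ carrying one $(0,2)$-connection and one $(1,3)$-connection, which is exactly the graph $E_3$ of Figure~\ref{fig:h3counters}. I expect the only delicate point to be the bookkeeping in the first paragraph: one must verify carefully that the hypotheses of Lemma~\ref{lem:BB'} genuinely hold and that the iterated application forces $V(G) = V(C) \cup N$ with the new edges forming a perfect matching. Once that global picture is fixed, the case analysis on $M$ is short, and the two criticality facts (no $2$-vertex in a $C_4$, no redundant thread) do all of the remaining work.
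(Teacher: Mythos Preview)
Your proof is correct and follows essentially the same route as the paper: invoke Lemma~\ref{lem:BB'} with $B=V(C)$ and $B'=N$, rule out edges between $N_i$ and $N_{i+1}$ because they would place a $2$-vertex in a $C_4$, and rule out a second edge between the same opposite pair via a redundant thread. Your organisation is slightly tidier than the paper's---you deduce up front that the added edges form a perfect matching on $N$ and hence $|N_i|=1$ for all $i$, whereas the paper argues edge by edge (add one $(1,3)$-edge, observe the graph is still $C_5$-colourable, add the next edge, and finish)---but the content is the same.
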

 \begin{proof}
   Assume they do.  By Lemma \ref{lem:BB'} with $B = V(C)$ and  $B' = N$, we can get $G$ by adding edges between vertices of $B'$.  There can be no edge from $N_i$ to $N_{i+1}$ or its endpoints are in a $C_4$  contradicting the fact they have degree $2$.  So we may assume we have an edge from $1' \in N_1$ to $3' \in N_3$.  The graph is still $C_5$-colourable, so there is another edge between vertices of $B'$ in $G$. 
   If it is between $N_1$ and $N_3$ then the path $1 \sim 1'' \sim 3'' \sim 3$ that it makes is redundant with $1 \sim 1' \sim 3' \sim 3$. If it is between $N_2$ and $N_0$ then $G$ is $E_3$, and we are done.  
  \end{proof}
  
We may therefore assume that $1' \in N_1$ has degree at least $3$. It must have an edge to some $3'$ in $N_3$ or we have an $\H_3$. For $i = 1,3$ let $N'_i = N(i') \setminus (V(C) \cup N)$. So our graph looks like this.

   \begin{center}
    \begin{tikzpicture}[every node/.style={labvert}]
      \begin{scope}[xshift=0]
       \LabVerts{0/0/0, 1/1/1, 2/2/0, 3/1/-1, 1'/1/2, 3'/1/-2, 0'/-1/0, 2'/3/0}
       \Edges{1/2, 2/3, 3/0, 0/1, 1/1', 2/2', 3'/3, 0'/0}
       \Edges[bend right=90]{1'/3'}
       
       \draw (1') node[ellipse,minimum width=40pt, minimum height=20pt] (N1) {};
       \draw (1') + (1,.1) node[empty] (){$N_1$};
       \draw (2') node[ellipse,minimum width=20pt, minimum height=40pt, gray, dashed] (N2) {};
       \draw[gray] (2') + (.6,.5) node[empty] (){$N_2$};
       \draw (3') node[ellipse,minimum width=40pt, minimum height=20pt, gray, dashed] (N3) {};
       \draw[gray] (3') + (1,.1) node[empty] (){$N_3$};
       \draw (0') node[ellipse,minimum width=20pt, minimum height=40pt, gray, dashed] (N4) {};
       \draw[gray] (0') + (-.6,.5) node[empty] (){$N_4$};
       \draw (1') + (0,1) node[ellipse,minimum width=40pt, minimum height=20pt] (N1') {$N'_1$};
       \draw (3') + (0,-1) node[ellipse,minimum width=40pt, minimum height=20pt] (N3') {$N'_3$};
       \draw[dashed,gray] (N4.north) edge (0.north); \draw[dashed,gray]  (N4.south) edge (0.south);
       \draw[dashed,gray]  (N2.north) edge (2.north); \draw[dashed,gray]  (N2.south) edge (2.south);
       \draw (N3'.east) edge (3'.east); \draw (N3'.west) edge (3'.west);
       \draw (N1'.east) edge (1'.east); \draw (N1'.west) edge (1'.west);
       \draw (N1.east) edge (1.east); \draw (N1.west) edge (1.west);
       \draw[dashed] (N3.east) edge[gray] (3.east); \draw[dashed] (N3.west) edge[gray] (3.west);
      \end{scope}
    \end{tikzpicture}  
   \end{center}

 Now, the following hold, reducing the figure to what you get by ignoring the dashed bits. 
\begin{enumerate}
  \item $N_1$ and $N_3$ are disjoint, as otherwise we would have a $K_{2,3}$.   If there is some $1^* \in N_1\setminus \{1'\}$ and some $3^* \in N_3\setminus \{3'\}$,
   then we have a $\H_3$  on $\{2,1^* \},1',3',3, \{0,3^* \}$. So we may assume that $N_3 = \{3'\}$. 
 \item If $N_2$ has two vertices, then we can find an $\H_3$ unless $N_1 = \{1'\}$ and $N'_1$ and $N'_3$ are empty. But if all these conditions hold, then taking $B = \{0,1,2,3,1',3'\}$ and $B' = \{0',2'\}$ in Lemma \ref{lem:BB'} we get that $0',2'$ is an edge in $G$. But this makes a $E_3$.   So we may assume that $N_2 = \{2'\}$ and similarly that $N_0 = \{0'\}$.    
 \item  By Lemma \ref{lem:c4b}, if $2'$ were a $3^+$ vertex, it would have an edge to $0$ or $0'$, making a $K_{2,3}$ or a copy of $E_3$ respectively. Thus $2'$, and similarly $0'$, is a $2$-vertex.    
 \item  So $0'$ and $2'$ are not in any $C_4$, so have no edges to  $ \{1',3'\} \cup N_1^*$, and so in particular, are not in $N'_1 \cup N'_3$, (meaning $N'_1 \cup N'_3$ has no edges to $\{0,2\}$). 
  \item No vertex in $N_1$ has degree greater than $3$.  Indeed, the only possible neighbours of a vertex $v$ of $N_1$ are in $N'_1 \cup N'_3 \cup \{3'\}$ or are vertices not shown in the figure.  So using the two of them that are not $1$ or $3'$ as leaves, we have an $\H_3$ with the vertices
  $v, 1,2,3,\{0,3'\}$. The same holds for $3'$-- it has degree at most $3$.   
  \item A vertex $1''$ in $N'_1$ can only have as neighbours $0',1'$ or $2'$, or vertices not shown in the figure. If $1''$ has degree at least $3$, then using the two neighbours that are not $1'$ as leaves, there is an $\H_3$ on $1',3',3,\{0,2\}$.  Thus vertices in $N'_1$ and $N'_3$ are $2$-vertices.    
  \item If there are distinct  $1^* \in N_1 \setminus \{1'\}$ and  $3'' \in N'_3$, then we have a $\H_3$  on the vertices 
 $\{1',3''\},3',3,2,1,\{0,1^*\}$.  So either  $N_3' =  N_1 \setminus \{1'\} = \{1^{*}\}$, or at least one of $N_3'$ and $N_1 \setminus \{1'\}$ are empty. Assume the former.  This vertex $1^*$ has the same two neighbours as $1'$ does, so they must both have distinct third neighbours $1^{**}$ and $1''$ respectively. These are $2$-vertices by (6), and the vertices $1^*$ and $1'$ are $3$-vertices by (5).  Applying Lemma \ref{lem:BB'} with $B = \{0,1,2,3,1',1^*,3'\}$ and $B' = \{0',2'\}$ gives that $2',0'$ is an edge of $G$, but this is not true.  Thus either $N'_3$ is empty, or $N_1 = \{1'\}$.   
\end{enumerate}

In either case we apply Lemma \ref{lem:BB'} with $B = V(C) \cup \{3'\} \cup N_1$ and $B' = N'_1 \cup \{0'\} \cup N'_3 \cup \{2'\}$. (By definition, all neighbours of elements of $B$ are in $B \cup B'$, so elements of $B$ are saturated; elements of $B'$ have been argued to be $2$-vertices.)  
 Edges from $0'$ to $2'$ give a proper subgraph $E_3$ which is impossible.  Edges from $N'_1$ to $N'_3$, or from $\{0',2'\}$ to $N'_1 \cup N'_3$ leave the graph $C_5$-colourable, so adding them until there is at most one unsaturated vertex left gives a saturated $C_5$-colourable subgraph, contradicting the criticality of $G$. 
    \end{proof}

\section{The Proofs of Theorems~\ref{t-bipartite} and~\ref{t-general}}\label{a-cer}

We first give a general lemma (see also~\cite{BGMOPS22}, which only proves the first part of the lemma explicitly).

\begin{lemma}\label{lem:s3c-c4-simple}
Suppose $G$ is star $3$-colourable and contains a $C_4$ as a subgraph. Then only two vertices of this $C_4$ can have degree greater than $2$, and if they exist they must be opposite vertices of the $C_4$. Moreover, in any star $3$-colouring of $G$, the vertices in the $C_4$ of degree $2$ must have the same colour.
\end{lemma}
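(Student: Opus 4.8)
The plan is to work throughout with the reformulation already noted in the paper: a proper colouring is a star colouring if and only if it admits no bichromatic $P_4$ (a four-vertex path using only two colours). Write the $C_4$ as $v_1v_2v_3v_4$ in cyclic order, with opposite pairs $\{v_1,v_3\}$ and $\{v_2,v_4\}$. I would fix an arbitrary star $3$-colouring $c$ of $G$ and first pin down how $c$ behaves on the four cycle vertices, and only then read off the degree information.

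First I would analyse the colouring of the $C_4$ in isolation. Since $c$ is proper and uses at most three colours, a short case check shows that a proper $3$-colouring of a $C_4$ either makes exactly one opposite pair monochromatic and gives the other pair the two remaining colours, or makes both opposite pairs monochromatic (the pattern $1,2,1,2$). The latter cannot occur here, because then the path $v_1v_2v_3v_4$ would be a bichromatic $P_4$. Hence exactly one opposite pair, say $\{v_1,v_3\}$, receives a single colour (say $1$), while $\{v_2,v_4\}$ receives the two distinct colours $2$ and $3$. This structural fact falls out immediately from the bichromatic-$P_4$ ban together with there being only three colours available.

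The heart of the argument will be to show that a vertex of the monochromatic pair cannot have a neighbour outside the cycle. Suppose for contradiction that $v_1$ has a neighbour $e\notin\{v_1,v_2,v_3,v_4\}$; properness forces $c(e)\neq 1$. I would then consider the two paths $e\,v_1\,v_2\,v_3$ and $e\,v_1\,v_4\,v_3$, which are genuine $P_4$'s since $e$ is external and $v_2,v_4\neq v_3$. Their colour patterns are $c(e),1,2,1$ and $c(e),1,3,1$, so the first is bichromatic exactly when $c(e)=2$ and the second exactly when $c(e)=3$. Thus $c(e)\notin\{1,2,3\}$, which is impossible, so $v_1$ (and by symmetry $v_3$) has no external neighbour and hence has degree exactly $2$.

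From here the lemma follows quickly: any $C_4$-vertex of degree greater than $2$ has an external neighbour, so by the previous step it must lie in the dichromatic pair $\{v_2,v_4\}$; therefore at most two such vertices exist and they are opposite. The complementary degree-$2$ vertices then contain the monochromatic pair $\{v_1,v_3\}$, which share colour $1$, so when exactly two (opposite) vertices have degree greater than $2$ the remaining two degree-$2$ vertices are forced to be the monochromatic pair and receive the same colour. I expect the main obstacle to be the bookkeeping in the contradiction step — ensuring both auxiliary paths are honest $P_4$'s and that all three colours really are excluded for $e$ — together with phrasing the colour conclusion in the regime in which it is genuinely forced (two opposite high-degree vertices), since a bare $C_4$ admits the colouring $1,2,1,3$ in which the degree-$2$ vertices are not all equally coloured, and the statement should be read with that scenario in mind.
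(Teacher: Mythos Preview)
Your argument is correct and, in fact, cleaner and more complete than the paper's. The paper proceeds structurally: it assumes two adjacent vertices $u,v$ of the $C_4$ have degree greater than~$2$ and then runs through four subcases (private external neighbours, common external neighbour, both extra neighbours inside the $C_4$, one inside and one outside), in each case exhibiting a concrete subgraph ($\mathbb{A}$, a $C_5$-containing graph, $K_4$, or a diamond with a pendant) and asserting that each is a no-instance for star $3$-colouring. The paper's proof does not address the ``Moreover'' clause about the colours of the degree-$2$ vertices at all.

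Your route is different: you analyse the colouring directly, showing that any star $3$-colouring of a $C_4$ forces exactly one opposite pair to be monochromatic, and that a vertex in that pair with an external neighbour would create a bichromatic $P_4$ for every possible colour of that neighbour. This gives both halves of the lemma at once, avoids the case split over where the extra neighbours live, and does not rely on separately verifying that $\mathbb{A}$, $K_4$, the diamond-with-pendant, and the $C_4$-plus-triangle are non-star-$3$-colourable. Your closing remark about the bare $C_4$ correctly flags that the ``Moreover'' clause is only literally sharp when exactly two (opposite) vertices have high degree, which is indeed how the lemma is used downstream.
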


\begin{proof}
Suppose we have a $C_4$ in which three vertices have degree greater than $2$. Two of these must be adjacent in the $C_4$, let us call them $u$ and $v$. If they both have private neighbours outside of the vertices of the $C_4$, then $G$ contains $\mathbb{A}$ as a subgraph, and is a no-instance. If they have a common neighbour outside of the vertices of the $C_4$, then $G$ contains a graph built from $C_4$ and $K_3$ by identifying an edge of the former with an edge of the latter as a subgraph (which contains a $C_5$), and is a no-instance. If they both have their third neighbour in the vertices of the $C_4$, then the vertices of the $C_4$ induce a $K_4$ in $G$, and this is a no-instance. Finally, is the possibility that one has a new neighbour outside of the $C_4$, while the other has a new neighbour inside of the $C_4$. Then $G$ contains as a subgraph a diamond with a pendant vertex attached to one of the vertices of degree $2$, and is a no-instance.
\end{proof}

\begin{figure}
	\centering
	\includegraphics[width=0.15\textwidth]{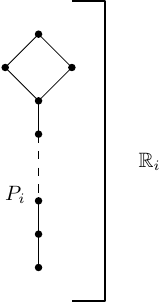}
	\caption{The graph $\mathbb{R}_i$.}
	\label{f-rack}
\end{figure}

Let $\mathbb{R}_i$ be the \emph{squash racket} graph built from a path on $i$ vertices by identifying one of the ends of the path with some vertex on a $C_4$; see also Figure~\ref{f-rack}. In $\mathbb{R}_i$, we call the vertex on the $C_4$ that is of degree $2$, and whose neighbours are also of degree $2$, \emph{distinguished}. In the following we often give a star $3$-colouring when we do not know the precise length of some path. It is sufficient to cover all residues modulo $3$ as we can repeat sequences $abc$ on a path, so long as $a,b,c \in \{1,2,3\}$ and $|\{a,b,c\}|=3$. We will not discuss this further.

\subsection{The Bipartite Case: The Proof of Theorem~\ref{t-bipartite}}

In this section we prove Theorem~\ref{t-bipartite}. We first prove the following lemma.

\begin{lemma}\label{l-not}
The graphs ${\mathbb A}$ and $\overline{2 P_4 + 3 P_6}$ are not star $3$-colourable.
\end{lemma}

\begin{proof}
By Lemma~\ref{lem:s3c-c4-simple} we find that ${\mathbb A}$ is not star $3$-colourable.

We now prove that $\overline{2 P_4 + 3 P_6}$ is not star $3$-colourable.
Assume otherwise. Let $x_1$ be the neighbour of $u$ on one copy of $P_4$ and $x_2$ the neighbour of $v$ on this path. Similarly, let $y_1$ be the neighbour of $u$ on the second copy of $P_4$ and $y_2$ the neighbour of $v$. Label the internal vertices of the first copy of $P_6$ $z_1 \dots z_4$ where $z_1$ is the neighbour of $u$ and $z_4$ is the neighbour of $v$. Let $w_1 \dots w_4$ be the internal vertices of the second copy of $P_6$ such that $w_1$ is adjacent to $u$ and $w_4$ is adjacent to $v$.

First consider a proper star $3$-colouring $c$ in which $u$ and $v$ are coloured differently, say $u$ is coloured $1$ and $v$ is coloured $2$. It must be that either $c(x_1)=2$ or $c(x_2)=1$. Without loss of generality, $c(x_1)=2$. Then $c(x_2)=3$. Every remaining neighbour $a$ of $v$ then has colour $1$, else we have a bichromatic $P_4$ with vertex set $\{a,v,x_2, x_1 \}$. To avoid a bichromatic $P_4$, $c(y_1)=3$. Each further neighbour $b$ of $u$ then has colour $2$, else we have a bichromatic $P_4$ with vertex set $\{b,u, y_1,y_2 \} $. We now consider the first copy of $P_6$. Since $c(z_1)=2$, $c(z_2)=3$, else there is a bichromatic $P_4$ with vertex set $\{x_1,u,z_1,z_2\}$. However, $c(z_3)=3$, else we have a bichromatic $P_4$ with vertex set $\{z_3, z_4, v, y_2\}$. This leaves two adjacent  vertices coloured $3$, a contradiction.

Therefore, $u$ and $v$ must be coloured alike in any proper star $3$-colouring $c$. Without loss of generality, they are both assigned colour $1$. Without loss of generality, $c(x_1)=2$ and $c(x_2)=3$. Now consider the first copy of $P_6$. First assume that $c(z_1)=2$. Then, to avoid a bichromatic $P_4$ with vertex set $\{z_2, z_1, u, x_1\}$, $c(z_2)=3$. Now, to avoid a bichromatic $P_4$ along this path, at least one of $z_3$ and $z_4$ must be coloured $2$. If $c(z_3)=2$ then $c(z_4)=3$ and we have a bichromatic $P_4$. Therefore, $c(z_4)=2$ and $c(z_3)=1$. Each remaining neighbour of $v$ then has colour $3$ to avoid a bichromatic $P_4$. Since $c(w_4)=3$, $c(w_3)=2$, else we have a bichromatic $P_4$. If $c(w_2)=3$, then $c(w_1)=2$ and we have a bichromatic $P_4$ on $\{w_1,w_2,w_3,w_4\}$. Thus, $c(w_2)=1$, whereupon $c(w_1)=3$ to avoid a bichromatic $P_4$ on $\{x_1,u,w_1,w_2\}$. The same arguments apply to the third copy of $P_6$ but now we have a bichromatic $P_4$ on $1313$ with vertex set $\{u, w_1, w_2 \}$ plus the first vertex on the third copy of $P_6$, a contradiction. 

It remains to consider the case where $c(z_1)=3$. If $c(z_2)=1$ then $c(z_3)=2$ and $c(z_4)=3$. Additionally, every remaining neighbour $c$ of $u$ is coloured $2$ to avoid a bichromatic $P_4$ with vertex set $\{z_2, z_1, u, c \}$. It must then be that $c(w_1)=2, c(w_2)=3$. In fact, this case is covered in the previous paragraph by an automorphism of $\overline{2 P_4 + 3 P_6}$  that swaps $u$ and $v$, as well as colours $2$ and $3$.
Therefore $c(z_2)=2$. To avoid a bichromatic $P_4$ along this copy of $P_6$, $c(z_3)=1$. Now $c(z_4)=2$ since otherwise we have a bichromatic $P_4$ with vertex set $\{z_3, z_4, v, x_2\}$. However,  this induces a bichromatic $P_4$ with vertex set $\{z_2,z_3,z_4,v\}$, a contradiction.
\end{proof}

\noindent
We continue with the following lemma. For an integer~$d$, a {\it $d^+$-vertex} in a graph is a vertex of degree at least~$d$.

\begin{lemma}\label{lem:bipartite-star-3-col-many}
Let $G$ be a bipartite connected $(\H_2,\H_4,\ldots)$-subgraph-free graph with more than two $3^+$-vertices. If $G$ is  $\mathbb{A}$-free, then $G$ is star $3$-colourable.
\end{lemma}

\begin{proof}
We may assume without loss of generality that $G$ is connected.
Suppose $G$ is $\mathbb{A}$-subgraph-free.
Note that $G$ cannot contain a $K_3$ since it is bipartite. By assumption, $G$ has more than two $3^+$-vertices. Let us choose three in a row $u,v,w$ on a path in that order, 
so as to minimise the total distance from $u$ to $w$. Note it is possible that there is a path from $u$ to $w$ that is shorter than the path from $u$ to $w$ via $v$.\footnote{We are not aware of an example like this that is both omits all $\H_{2i}$ as a subgraph and has such $u,v,w$ minimally chosen.}

Suppose $u,v,w$ are consecutive. Then $u$ and $w$ must have another common neighbour to avoid an $\H_2$. Now we violate Lemma~\ref{lem:s3c-c4-simple}. Thus, $u,v,w$ may not be consecutive.

Suppose now that $u$ and $w$ have two common neighbours $p$ and $q$. By the minimality of the length of the path from $u$ to $v$ to $w$ we may now assume that the subpaths from $u$ to $v$ and $v$ to $w$ have length at most $2$, and so as $u,v,w$ are not consecutive one of them has length $2$. Without loss of generality, let us assume this is $u$ to $v$ . According to Lemma~\ref{lem:s3c-c4-simple}, both $p$ and $q$ have degree $2$ and neither of these may have an edge to $v$, therefore $v$ has a distinct neighbour $z$. Since there is a path of length $2$ from $u$ to $v$ we have $\H_2$ as a subgraph and finish this case.

Suppose now that $u$ and $w$ have one common neighbour $p$ and two distinct neighbours $u'$ and $w'$, respectively. Since $u,v,w$ are not consecutive, there is an $\H_2$ using $u,p,w$ as the central path.

Suppose now that $u$ and $w$ have neighbours $u',u''$ and $w',w''$, respectively, so that these vertices are pairwise distinct. If the total length of the path on $u,v,w$ is even then we have some $\H_{2i}$. W.l.o.g., assume the path from $u$ to $v$ is even, whereupon the path from $v$ to $w$ must be odd. There must be an edge from $v$ to one of $u',u''$ (\mbox{w.l.o.g.} $u'$) to avoid an $\H_{2i}$. By minimality of the path $u,v,w$, the distance from $u$ to $v$ along the path we chose must be $2$ with middle vertex $s$. According to Lemma~\ref{lem:s3c-c4-simple}, both $s$ and $u'$ have degree $2$ in $G$. If $u$ has degree greater than $3$ there is an $\H_2$ with middle path $u,s,v$ (note that there cannot be an edge from $u$ to the neighbour $d$ of $v$ along the path to $w$, by our original minimal choice of the path along $u,v,w$).  Suppose $u''$ has no neighbours outside of $\{u,u',s,v\}$ (noting that there might be an edge from $u''$ to $v$). Then $v$ is a cut vertex. 
Note that none of the paths emanating from $w$ and not going through $v$ can have vertices of degree greater than $2$, unless they are part of a squash racket whose distinguished vertex is identified with $w$, as we would introduce an $\H_{2i}$. The case of the squash racket can be dealt with just like the case of an even cycle so we do not concern ourselves with it specially. Note however that there can be only one squash racket attached at $w$ and if there is a squash racket there can be no additional even cycles or paths attached at $w$. If the edge $u''$ to $v$ exists then there can be no new neighbours of $u''$ without introducing an $\mathbb{A}$ as a subgraph. This places us in the case shown in Figure~\ref{fig:bipartite-star-3-col-many-cut-2}, and the figure shows a star $3$-colouring in this case.

Suppose now that $u''$ has a neighbour outside $\{u,u',s,v\}$ which we call $t$ (noting that in this case there can be no edge from $u''$ to $v$ since then there would be three consecutive $3^+$ vertices). Let us  continue expanding from $t$ along some path away from $u''$ which we enumerate $t=t_1,t_2,\ldots$. Suppose no such path like this reaches $w$ (or $v$) avoiding $u$. Note that such a path cannot now arrive at either $v$ or $s$ since we have uncovered all of their neighbours (if $v$ had more neighbours then there would be an $\H_2$ with central path $u,s,v$). If one of the vertices $t_j$ has degree greater than $2$ then we have a $\H_{2i}$ involving a central path either from $t_j$ to $v$ via $u$ or from $t_j$ to $w$ via $u$. Thus we have nothing more than a path here which must terminate at some $t_{k}$. This case will be superseded by a more general case which will come later (see Figure~\ref{fig:bipartite-star-3-col-many-final}).

Suppose we have a path $t_1,t_2,\ldots$ that eventually joins the path between $v$ and $w$ exclusive of $v$ and $w$. Then we contradict minimality of the path from $u$ to $w$ via $v$. Thus we have a path $t_1,t_2,\ldots$ that eventually joins $w$. (Possibly it did so via, w.l.o.g., $w''$.) This path must be of odd length since otherwise it would produce an $\H_{2i}$. 
If any vertex on this path has degree greater than $2$ then it induces some $\H_{2i}$ on a path to either $u$ or $w$. Except in the case that it is distance $2$ from $w$, an odd distance from $u$, and shares two neighbours with $w$ (and these are $w$'s only neighbours). This case is drawn in Figure~\ref{fig:bipartite-star-3-col-many-middle}. Therefore, there is nothing but this path, except that we did not isolate the degree of $w$. Indeed, $w$ may have arbitrarily high degree. Any path emanating from $w$ and not eventually reaching $v$ (we have accounted for all such paths) may not have a vertex $c$ of degree greater than $2$, since otherwise this vertex would produce an $\H_{2i}$ with middle path to either $w$ or $v$. Except in the case that $c$ is at distance $2$ from $w$ and shares two neighbours the two neighbours $w'$ and $w''$ with $w$, in which case $c$ may have a path emanating from it in the form of $c=c_1$, $c_2$, etc. This case, in which there is a squash racket with distinguished vertex identified with $w$, will be superseded by the next. Note however that there can be only one squash racket attached at $w$ and if there is a squash racket there can be no additional even cycles or paths. It follows then that some path may leave $w$ and terminate in a vertex of degree $1$, or return to $w$ producing a cycle of even length. We claim the graph $G$ is a yes-instance of star $3$-colouring. Let us refer to Figure~\ref{fig:bipartite-star-3-col-many-final}.
\end{proof}

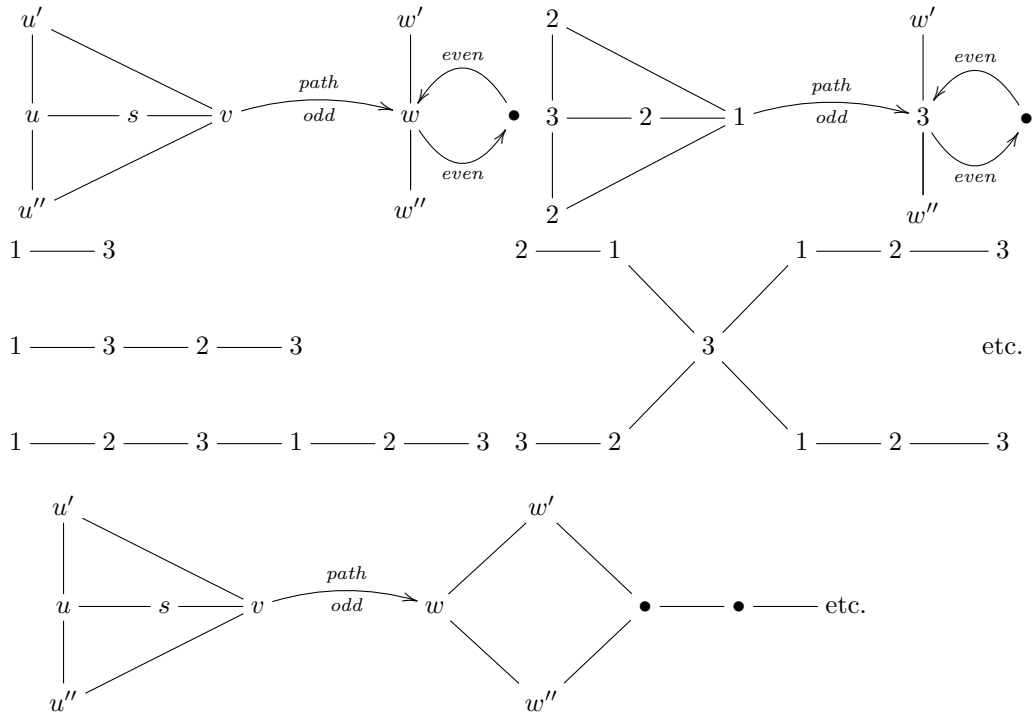
\begin{figure}
\mbox{
$
\xymatrix{
u' \ar@{-}[d] & & & & w' \ar@{-}[d] & \\
u \ar@{-}[r] & s \ar@{-}[r] & v \ar@{-}[lld] \ar@{-}[llu] \ar@/^0.5pc/^{\mathit{path}}_{\mathit{odd}}[rr] & &  w \ar@{-}[d] \ar@/_1.5pc/_{\mathit{even}}[r] & \bullet \ar@/_1.5pc/_{\mathit{even}}[l] \\ 
u'' \ar@{-}[u]  &  & & &  w'' \ar@{-}[u] & \\
}
$
$
\xymatrix{
2 \ar@{-}[d] & & & & w' \ar@{-}[d] & \\
3 \ar@{-}[r] & 2 \ar@{-}[r] & 1 \ar@{-}[lld] \ar@{-}[llu] \ar@/^0.5pc/^{\mathit{path}}_{\mathit{odd}}[rr] & &  3 \ar@{-}[d] \ar@/_1.5pc/_{\mathit{even}}[r] & \bullet \ar@/_1.5pc/_{\mathit{even}}[l] \\ 
2 \ar@{-}[u]  &  &  &  & w'' \ar@{-}[u] & \\
}
$
}

$
\xymatrix{
1 \ar@{-}[r] & 3 \\
1 \ar@{-}[r] & 3  \ar@{-}[r] & 2  \ar@{-}[r] & 3 \\
1 \ar@{-}[r] & 2  \ar@{-}[r] & 3  \ar@{-}[r] & 1 \ar@{-}[r] & 2  \ar@{-}[r] & 3 \\
}
$
$
\xymatrix{
2 \ar@{-}[r] & 1  \ar@{-}[dr]  & & 1 \ar@{-}[r] & 2  \ar@{-}[r] & 3 \\
  & & 3  \ar@{-}[ur]  \ar@{-}[dr] &  & & \mbox{etc.}\\
3 \ar@{-}[r]  & 2  \ar@{-}[ur] &  & 1 \ar@{-}[r] & 2  \ar@{-}[r] & 3 \\
}
$
\[
\xymatrix{
u' \ar@{-}[d] & & & &  & w' \ar@{-}[dr]& \\
u \ar@{-}[r] & s \ar@{-}[r] & v \ar@{-}[lld] \ar@{-}[llu] \ar@/^0.5pc/^{\mathit{path}}_{\mathit{odd}}[rr] & &  w \ar@{-}[ur] \ar@{-}[dr] & & \bullet \ar@{-}[r] & \bullet \ar@{-}[r] & \mbox{etc.} \\ 
u'' \ar@{-}[u]  &  & & & & w'' \ar@{-}[ur] & \\
}
\]
\caption{The second cut vertex case of Lemma~\ref{lem:bipartite-star-3-col-many} with its star $3$-colouring. There may be a squash racket attached at $w$ instead of even cycles or paths. The squash racket case is drawn on the bottom. For the colouring around $w$ we draw to the left the possible colours that may appear to the left and we draw to the right two paths that need to be identified to make an even cycle of squash racket (middle right).}
\label{fig:bipartite-star-3-col-many-cut-2}
\end{figure}
\begin{figure}
$
\xymatrix{
u' \ar@{-}[d] \ar@{-}[drr] & & & & & \\
u \ar@/_1.5pc/^{\mathit{odd}}_{\mathit{path}_2}[drrr] \ar@{-}[r] & s \ar@{-}[r] & v  \ar@/^0.5pc/^{\mathit{odd}}_{\mathit{path}_1}[rr] & & w \ar@{-}[d]  &  \\
&  & & t_k \ar@{-}[r]  \ar@{-}@/_1.5pc/[rr] & w'' &  w' \ar@{-}[ul] \\
}
$
$
\xymatrix{
2 \ar@{-}[d] \ar@{-}[drr] & & & & & \\
3 \ar@/_1.5pc/^{\mathit{odd}}_{\mathit{path}_2}[drrr]  \ar@{-}[r] & 2 \ar@{-}[r] & 1  \ar@/^0.5pc/^{\mathit{odd}}_{\mathit{path}_1}[rr] & & 3 \ar@{-}[d]  &  \\
&   & & 2 \ar@{-}[r]  \ar@{-}@/_1.5pc/[rr] & 1 &  1 \ar@{-}[ul] \\
}
$

\

$
\xymatrix{
1 \ar@{-}[r] & 3 & & \mathit{path}_1\\
1 \ar@{-}[r] & 3  \ar@{-}[r] & 2  \ar@{-}[r] & 3 \\
1 \ar@{-}[r] & 2  \ar@{-}[r] & 3  \ar@{-}[r] & 1 \ar@{-}[r] & 2  \ar@{-}[r] & 3 \\
}
$
$
\xymatrix{
3 \ar@{-}[r] & 2 & & \mathit{path}_2\\
3 \ar@{-}[r] & 1  \ar@{-}[r] & 3  \ar@{-}[r] & 2 \\
3 \ar@{-}[r] & 2  \ar@{-}[r] & 1  \ar@{-}[r] & 2 \ar@{-}[r] & 3  \ar@{-}[r] & 2 \\
}
$
\caption{The middle case of Lemma~\ref{lem:bipartite-star-3-col-many} with its star $3$-colouring (above). One could think of this as a squash racket with distinguished vertex identified with $w$ in which the path continues and actually reaches $u$. The concurrent star $3$-colouring of $\mathit{path}_1$ and $\mathit{path}_2$ is given below.}
\label{fig:bipartite-star-3-col-many-middle}
\end{figure}
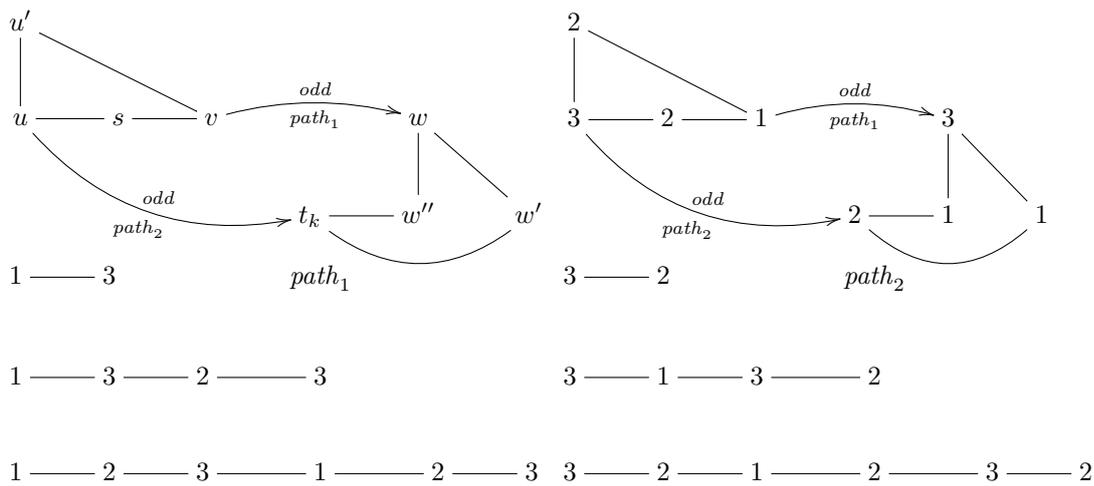

\begin{figure}
$
\xymatrix{
u' \ar@{-}[d] \ar@{-}[drr] & & & & w' \ar@{-}[d]  & \\
u \ar@{-}[d] \ar@{-}[r] & s \ar@{-}[r] & v  \ar@/^0.5pc/^{\mathit{odd}}_{\mathit{path}_1}[rr] & & w \ar@{-}[d] \ar@/_1.5pc/_{\mathit{even}}[r] & \bullet \ar@/_1.5pc/_{\mathit{even}}[l] \\
u'' \ar@{-}[r] & t \ar@/_1.5pc/^{\mathit{odd}}_{\mathit{path}_2}[rrru]  & & & w'' & \\
}
$
$
\xymatrix{
2 \ar@{-}[d] \ar@{-}[drr] & & & & w' \ar@{-}[d]  & \\
3 \ar@{-}[d] \ar@{-}[r] & 2 \ar@{-}[r] & 1  \ar@/^0.5pc/^{\mathit{odd}}_{\mathit{path}_1}[rr] & & 3 \ar@{-}[d] \ar@/_1.5pc/_{\mathit{even}}[r] & \bullet \ar@/_1.5pc/_{\mathit{even}}[l] \\
2 \ar@{-}[r] & 1 \ar@/_1.5pc/^{\mathit{odd}}_{\mathit{path}_2}[rrru]  & & & w'' & \\
}
$

$
\xymatrix{
1 \ar@{-}[r] & 3 & & & \\
1 \ar@{-}[r] & 3  \ar@{-}[r] & 2  \ar@{-}[r] & 3 \\
1 \ar@{-}[r] & 2  \ar@{-}[r] & 3  \ar@{-}[r] & 2 \ar@{-}[r] & 1  \ar@{-}[r] & 3 \\
}
$
$
\xymatrix{
2 \ar@{-}[r] & 1  \ar@{-}[dr]  & & 1 \ar@{-}[r] & 2  \ar@{-}[r] & 3 \\
  & & 3  \ar@{-}[ur]  \ar@{-}[dr] &  & & \mbox{etc.}\\
3 \ar@{-}[r] & 2  \ar@{-}[ur] &  & 1 \ar@{-}[r] & 2  \ar@{-}[r] & 3 \\
}
$

\

$
\xymatrix{
1 \ar@{-}[r] & 2  \ar@{-}[r] & 3  \ar@{-}[r] & 1 \\
}
$
$
\xymatrix{
2 \ar@{-}[r] & 3  \ar@{-}[dr]  & & 2 \ar@{-}[r] & 3  \ar@{-}[r] & 1 \\
  & & 1  \ar@{-}[ur]  \ar@{-}[dr] &  & & \mbox{etc.}\\
 &  &  & 2 \ar@{-}[r] & 3  \ar@{-}[r] & 1 \\
}
$
\caption{The coup de gr\^ace of Lemma~\ref{lem:bipartite-star-3-col-many} with its star $3$-colouring (above). The concurrent star $3$-colouring of $\mathit{path}_1$ and $\mathit{path}_2$ (middle left) and any even cycles or squash rackets (middle right, merge top and bottom paths to form even cycle or squash racket). At the bottom is the special case where both paths have length $1 \bmod 3$ in which case $w$ is coloured $2$.}
\label{fig:bipartite-star-3-col-many-final}
\end{figure}

\begin{figure}
\[
\xymatrixrowsep{0.1in}
\xymatrixcolsep{0.2in}
\xymatrix{
& 3 \ar@{-}[d]  & & 3 \ar@{-}[d] \\
& 1 \ar@{-}[d]  & & 1 \ar@{-}[d] \\
 & 2 \ar@{-}[dr] &  &  2 \ar@{-}[dl] \\  
1 \ar@{-}[r] & 2  \ar@{-}[r] & 3 \ar@{-}[r] & 1  & \\
2 \ar@{-}[u] & & 2 \ar@{-}[u] & 3 \ar@{-}[u] & \\
3 \ar@{-}[u] \ar@{-}[r] & 2 \ar@{-}[r] & 1 \ar@{-}[u] \ar@{-}[r] & 2 \ar@{-}[u] & \\
& 2 \ar@{-}[ur] &  & 2 \ar@{-}[ul] & \\  
& 3 \ar@{-}[u]  & & 3 \ar@{-}[u] \\
& 1 \ar@{-}[u]  & & 1 \ar@{-}[u] \\
}
\]
\caption{The first case of Lemma~\ref{lem:bipartite-star-3-col-two} with its star $3$-colouring. At the top and bottom the paths may be made into even cycles by the identification of two vertices. Note there is a $313$ down the path of length $4$ which would cause a problem were we to have another one, but we already explained this would introduce an even $\H_{2i}$.}
\label{fig:bipartite-star-3-col-two-even}
\end{figure}

\begin{figure}
\[
\xymatrixrowsep{0.1in}
\xymatrixcolsep{0.2in}
\xymatrix{
& & 2 \ar@{-}[d]  & & 2 \ar@{-}[d] \\
& & 1 \ar@{-}[d]  & & 1 \ar@{-}[d] \\
&  & 3 \ar@{-}[dr] &  &  3 \ar@{-}[dl] \\  
2 \ar@{-}[r] & 1 \ar@{-}[r] & 3 \ar@{-}[r] & 2 \ar@{-}[r] & 3 \\
3 \ar@{-}[u] & & &  & 1 \ar@{-}[u] \\
2 \ar@{-}[u] & & &  & 3 \ar@{-}[u] \\
1 \ar@{-}[u] \ar@{-}[r] & 3 \ar@{-}[r] & 2  \ar@{-}[r] & 1 \ar@{-}[uuu] \ar@{-}[r] & 2 \ar@{-}[u] \\
& & 2 \ar@{-}[ur] & &  2 \ar@{-}[ul] \\  
& & 3 \ar@{-}[u]  & & 3 \ar@{-}[u] \\
& & 1 \ar@{-}[u]  & & 1 \ar@{-}[u] \\
}
\]
\caption{The case from Lemma~\ref{lem:bipartite-star-3-col-two} in which there is path from $u$ to $v$ of length $1$ with its star $3$-colouring. At the top and bottom the paths may be made into even cycles by the identification of two vertices.}\label{fig:bipartite-star-3-col-two-odd-path-one}
\end{figure}
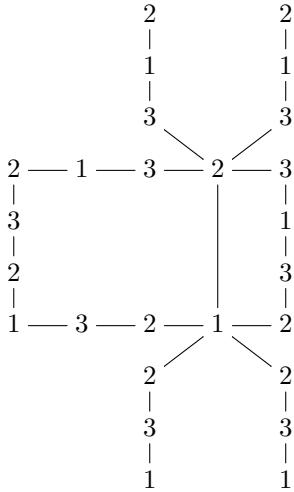
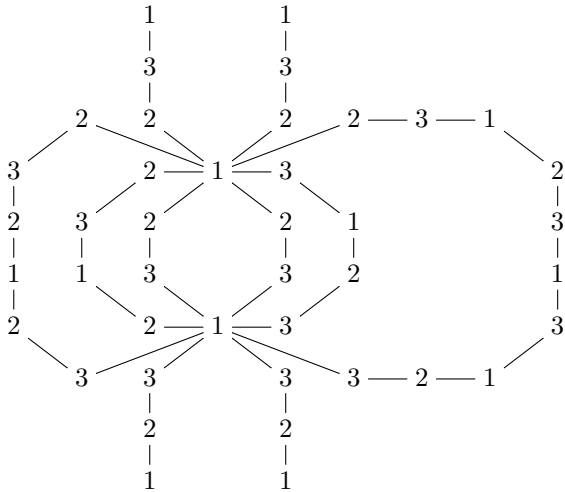
\begin{figure}
\[
\xymatrixrowsep{0.1in}
\xymatrixcolsep{0.2in}
\xymatrix{
& & 1 \ar@{-}[d]  & & 1 \ar@{-}[d] & \\
& & 3 \ar@{-}[d]  & & 3 \ar@{-}[d] & \\
 & 2 & 2 \ar@{-}[dr] & &  2 \ar@{-}[dl] & 2 \ar@{-}[r] & 3 \ar@{-}[r] & 1 \ar@{-}[dr] \\  
3 \ar@{-}[ur] \ar@{-}[d] & & 2  \ar@{-}[dl]  \ar@{-}[r] & 1 \ar@{-}[ull]  \ar@{-}[r] \ar@{-}[urr] & 3  \ar@{-}[dr] & & & & 2 \ar@{-}[d]\\
2 \ar@{-}[d] & 3 & 2 \ar@{-}[ru] & & 2 \ar@{-}[lu] & 1 \ar@{-}[d]  & & & 3 \ar@{-}[d]\\
1 \ar@{-}[d] & 1 \ar@{-}[u] & 3 \ar@{-}[u] & & 3 \ar@{-}[u]  & 2  & & & 1 \ar@{-}[d] \\
2 \ar@{-}[dr] &  & 2  \ar@{-}[ul] \ar@{-}[r]   & 1 \ar@{-}[ur] \ar@{-}[r] \ar@{-}[ul] \ar@{-}[dll] \ar@{-}[drr]   & 3 \ar@{-}[ur]  & & & & 3 \ar@{-}[dl]\\
& 3 & 3 \ar@{-}[ur] &  & 3 \ar@{-}[ul] &  3 \ar@{-}[r] & 2 \ar@{-}[r] & 1\\  
& & 2 \ar@{-}[u]  & & 2 \ar@{-}[u] & \\
& & 1 \ar@{-}[u]  & & 1 \ar@{-}[u] & \\
}
\]
\caption{The case of Lemma~\ref{lem:bipartite-star-3-col-two} in which, from $u$ to $v$, there are no paths of length $1$ and at most two paths of length $5$, with its star $3$-colouring.  There may be more paths of length $2$ between $u$ and $v$ but no more paths of length $5$. At the top and bottom the paths may be made into even cycles by the identification of two vertices.}
\label{fig:bipartite-star-3-col-two-odd-path-three}
\end{figure}
\begin{figure}
\[
\xymatrixrowsep{0.1in}
\xymatrixcolsep{0.2in}
\xymatrix{
& 1 \ar@{-}[d]  & & 1 \ar@{-}[d] & \\
& 3 \ar@{-}[d]  & & 3 \ar@{-}[d] & \\
 & 2 \ar@{-}[dr] & &  2 \ar@{-}[dl] & 2 \ar@{-}[r]  & 3 \ar@{-}[d]\\  
3   \ar@{-}[r]  \ar@{-}[d] & 2   \ar@{-}[r] & 1  \ar@{-}[r] \ar@{-}[urr] & 2 \ar@{-}[d]  & & 2\\
1 &  & 3 \ar@{-}[u] & 3  &  & 1 \ar@{-}[u]\\
2 \ar@{-}[u] &  & 1 \ar@{-}[u] & 1 \ar@{-}[u]  &  & 3 \ar@{-}[u]\\
1   \ar@{-}[r]  \ar@{-}[u] & 3  \ar@{-}[r]   & 2 \ar@{-}[u]  \ar@{-}[r]  \ar@{-}[drr]   & 3 \ar@{-}[u] & & 2 \ar@{-}[u]\\
& 3 \ar@{-}[ur] &  & 3 \ar@{-}[ul] & 3 \ar@{-}[r]  & 1 \ar@{-}[u]\\  
& 1 \ar@{-}[u]  & & 1 \ar@{-}[u] & \\
& 2 \ar@{-}[u]  & & 2 \ar@{-}[u] & \\
}
\]
\caption{The final case of Lemma~\ref{lem:bipartite-star-3-col-two} in which, from $u$ to $v$, there are no paths of length $1$ and at most one path of length $3$, with its star $3$-colouring.}
\label{fig:bipartite-star-3-col-two-odd-path-final}
\end{figure}
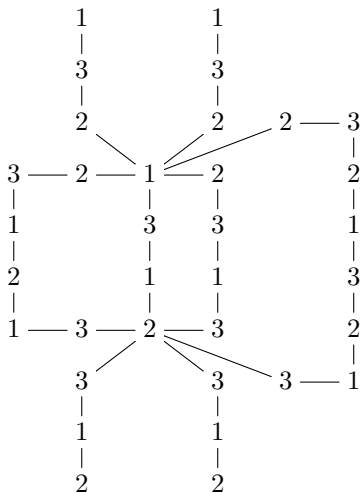

\noindent
We now prove the following lemma, which deals with the case where $G$ has exactly two $3^+$-vertices.

\begin{lemma}
Let $G$ be a bipartite $(\H_2,\H_4,\ldots)$-subgraph-free graph with exactly two $3^+$ vertices. If $G$ is $(\mathbb{A},\overline{2 P_4 + 3 P_6})$-subgraph-free, then $G$ is star $3$-colourable.
\label{lem:bipartite-star-3-col-two}
\end{lemma}

\begin{proof}
We may assume without loss of generality assume that $G$ is connected. Suppose $G$ is $(\mathbb{A},\overline{2 P_4 + 3 P_6})$-subgraph-free.
Let the two $3^+$ vertices be $u$ and $v$. There may be many paths from $u$ to $v$. Suppose one of these paths is of even length. Then, as $G$ is bipartite, they must all be of even length. 
There may further be even cycles attached to $u$ and even cycles attached to $v$. In order to avoid $\H_{2i}$, the number of these paths between $u$ and $v$ must be bounded, in particular, there can be no more than one path of length greater than or equal to $4$. See Figure~\ref{fig:bipartite-star-3-col-two-even} for the $3$-star colouring in this case.

Suppose, one of the paths between $u$ and $v$ is of odd length. Then they must all be of odd length. (There may further be even cycles attached to $u$ and even cycles attached to $v$.) Suppose there are paths between $u$ and $v$ of length both $1$ and $3$. Now, $u$ and $v$ must have third neighbours $u'$ and $v'$, respectively, that are not involved in these paths (e.g., by bipartiteness). Since $u'\neq v'$, again by bipartiteness, we have an $\mathbb{A}$ as a subgraph and this is a no-instance.

Now we consider three cases. Firstly, that there is a path of length $1$ and all other paths are of length at least $5$ (see Figure~\ref{fig:bipartite-star-3-col-two-odd-path-one}). Secondly, that all paths are of length at least $3$ and there are at most two paths of length $5$ (see Figure~\ref{fig:bipartite-star-3-col-two-odd-path-three}). Thirdly, that all paths are of length at least $3$ and there is at most one path of length $3$ (see Figure~\ref{fig:bipartite-star-3-col-two-odd-path-final}).
\end{proof}

\noindent
We are now ready to prove Theorem~\ref{t-bipartite}, which we restate below.

\medskip
\noindent
{\bf Theorem~\ref{t-bipartite} (restated)}
{\it A bipartite $(\H_2,\H_4,\H_6\ldots)$-subgraph-free~graph is star $3$-colourable if and only if it is $(\mathbb{A},\overline{2P_4 + 3P_6})$-subgraph-free.} 

\begin{proof}
Let $G$ be a bipartite $(\H_2,\H_4,\H_6\ldots)$-subgraph-free graph. If $G$ contains $(\mathbb{A}$ or $\overline{2P_4 + 3P_6}$ as a subgraph, we apply Lemma~\ref{l-not}. If $G$ is $(\mathbb{A},\overline{2P_4 + 3P_6})$-subgraph-free, then we can use Lemma~\ref{lem:bipartite-star-3-col-many} or Lemma~\ref{lem:bipartite-star-3-col-two} (note that the case in which there is only one or no $3^+$-vertex is covered a fortiori in the latter). 
\end{proof}

\subsection{The General Case: The Proof of Theorem~\ref{t-general}}

A {\it diamond} is a $C_4$ with a chord, that is, the graph with vertex set $\{a,b,c,d\}$ and edge set $\{ab,bc,cd,da,ac\}$.
Let \emph{diamond-with-pendant} be built from a diamond, with a pendant vertex added to a vertex of degree $2$, that is, the graph with vertex set $\{a,b,c,d,e\}$ and edge set $\{ab,bc,cd,da,ac,be\}$. 
We note that the diamond-with-pendant involves a $C_4$ with three $3^+$-vertices. See also Figure~\ref{f-diam}.

\begin{figure}[b]
	\centering
	\includegraphics[width=0.30\textwidth]{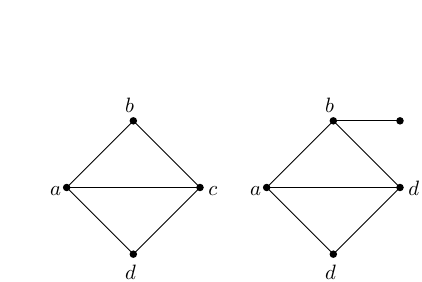}
	\caption{The diamond (left) and the diamond-with-pendant (right).}\label{f-diam}
\end{figure}

We start with proving the following lemma.

\begin{lemma}\label{lem:general-star-3-col-many}
Let $G$ be a connected graph $(\H_2,\H_4,\H_6\ldots)$-subgraph-free graph with at least five $3^+$-vertices. Then $G$ contains either a graph from $\{\mathbb{A},C_5\}$ as a subgraph or a $C_4$ with (at least) three $3^+$ vertices.
\end{lemma}

\begin{proof}
We will essentially follow the proof of Lemma~\ref{lem:bipartite-star-3-col-many} to show that, if we assume there are three $3^+$-vertices, we can never uncover more than one more, unless $G$ contains some $H_{2i}$ as a subgraph or contains $\mathbb{A}$ as a subgraph or contains some $C_4$ with (at least) three $3^+$ vertices. Of course, as we follow the proof of Lemma~\ref{lem:bipartite-star-3-col-many}, we can no longer use bipartiteness. This means we have an important new case at the start (later on we only used it to assume that certain cycles were even).

Suppose there are three consecutive $3^+$-vertices $u,v,w$ in a clique (the ``triangle''). Suppose any two of these have a common neighbour, then the graph is not star $3$-colourable by Lemma~\ref{lem:s3c-c4-simple}. Thus, they have private neighbours $u',v',w'$, respectively, that form with them the ``net''. Since $G$ is connected, there is a path from the net, \mbox{w.l.o.g.} through $u'$, to another $3^+$ vertex $t$. Choose such a $t$ closest to the triangle in the net. If $t$ has two neighbours outside of the net, then there is an $H_{2i}$ as a subgraph with central path either $t,\ldots,u$ (if distance from $t$ to $u$ even) or either of $t,\ldots,u,v$ and $t,\ldots,u,w$ (if distance from $t$ to $u$ odd). 
Thus, $t$ has a neighbour in the net and we have discovered a path from $u$ to (w.l.o.g.) $v$, avoiding $w$, with a $3^+$ vertex at distance either $1$ or $2$ from $v$. Let us choose a minimal such path (not permitting $v'=u'$ as we already ruled this our by assumption). This minimal path induces a cycle in $G$. If it is $C_4$ or $C_5$ we are in a no-instance (in both cases there is a -- potentially non-induced -- $C_5$). W.l.o.g., assume it is of length at least $6$.

Suppose $t$ is at distance $1$ from $v$ and we may \mbox{w.l.o.g.} assume it is $v'$ itself. If $v'$ has an edge to anywhere in the net, then we are in a no-instance (either we have a $C_5$ or diamond-with-pendant). Thus, some third neighbour $v''$ of $v'$ is outside the net. Now there is an $H_2$ with central path $v',v,u$. Suppose $t$ is at distance $2$ from $v$. We have found two neighbours of $t$ so let $t'$ be a third neighbour, If $t'$ is in the net then we are in a no-instance with a $C_5$; unless $t'=v$, in which case we were not in a minimal cycle (essentially $v'$ could be $t$). Thus, $t'$ is outside the net. Now there is an $H_2$ with central path $t,v',v$.

Now let us follow the proof as in the bipartite case (Lemma~\ref{lem:bipartite-star-3-col-many}) but no longer using bipartiteness. Suppose $u,v,w$ are consecutive. Then $u$ and $w$ must have another common neighbour to avoid an $H_2$. Now we violate Lemma~\ref{lem:s3c-c4-simple}. Thus, $u,v,w$ may not be consecutive.

Suppose now that $u$ and $w$ have two common neighbours $p$ and $q$. According to Lemma~\ref{lem:s3c-c4-simple}, both of these have degree $2$ and neither of these may have an edge to $v$, therefore $v$ has a distinct neighbour $z$. 
Suppose either $u$ or $v$ has degree greater than $3$, \mbox{w.l.o.g.} $u$. Then there is an $H_{2}$ with $u,p,w$ forming the central path. Thus, $u$ and $w$ are of degree $3$ and $v$ is a cut vertex.  If either of the paths from $u$ to $v$ or $v$ to $w$ is even then we have an $H_{2i}$. Thus we may assume both are odd. If $z$ has degree greater than $2$ then the paths from $v$ to $u$ and $v$ to $w$ must be even to avoid an $H_{2i}$ with central path (e.g.) $z,v,\ldots,u$. But then there is an $H_{2i}$ with central path $v,\ldots,u$. Thus, $z$ has degree $1$ or $2$. Indeed, we may follow a path $z=z_1$, $z_2$ etc. and never find a vertex of degree greater than $2$, unless it returns to $v$, as it would imply either an $H_{2i}$ with central path $z_j,\ldots,v$ (if $j$ even) or (e.g.) $z_j,\ldots,v,\ldots,u$ (if $j$ is odd). Thus, either the path ends or it returns to $v$ as a cycle. And there may be more than one such path or cycle. In this case, $G$ has three $3^+$-vertices. 

Suppose now that $u$ and $w$ have one common neighbour $p$ and two distinct neighbours $u'$ and $w'$, respectively. Since $u,v,w$ are not consecutive, there is an $H_2$ using $u,p,w$ as the central path.

Suppose now that $u$ and $w$ have neighbours $u',u''$ and $w',w''$, respectively, so that these vertices are pairwise distinct. If the total length of the path on $u,v,w$ is even then we have some $H_{2i}$. W.l.o.g., assume the path from $u$ to $v$ is even, whereupon the path from $v$ to $w$ must be odd. There must be an edge from $v$ to one of $u',u''$ (\mbox{w.l.o.g.} $u'$) to avoid an $H_{2i}$. By minimality of the path $u,v,w$, the distance from $u$ to $v$ along the path we chose must be $2$ with middle vertex $s$. According to Lemma~\ref{lem:s3c-c4-simple}, both $s$ and $u'$ have degree $2$ in $G$. If $u$ has degree greater than $3$ there is an $H_2$ with middle path $u,s,v$ (note that there cannot be an edge from $u$ to the neighbour $d$ of $v$ along the path to $w$, by our original minimal choice of the path along $u,v,w$).  Suppose $u''$ has no neighbours outside of $\{u,u',s,v\}$ (noting that there might be an edge from $u''$ to $v$). Then $v$ is a cut vertex. 
Note that none of the paths emanating from $w$ and not going through $v$ can have vertices of degree greater than $2$, unless they are part of a squash racket whose distinguished vertex is identified with $w$, as we would introduce an $H_{2i}$. The squash racket case can be dealt with just like a cycle so we do not concern ourselves with it specially. Note however that there can be only one squash racket attached at $w$ and if there is a squash racket there can be no additional cycles or paths. If the edge $u''$ to $v$ exists then there can be no new neighbours of $u''$ without introducing an $\mathbb{A}$ as a subgraph.
In this case, $G$ has three $3^+$-vertices.

Suppose now that $u''$ has a neighbour outside $\{u,u',s,v\}$ which we call $t$ (noting that in this case there can be no edge from $u''$ to $v$ since then there would be three consecutive $3^+$-vertices). Let us  continue expanding from $t$ along some path away from $u''$ which we enumerate $t=t_1,t_2,\ldots$. Suppose no such path like this reaches $w$ (or $v$) avoiding $u$. Note that such a path cannot now arrive at either $v$ or $s$ since we have uncovered all of their neighbours (if $v$ had more neighbours then there would be an $H_2$ with central path $u,s,v$). If one of the vertices $t_j$ has degree greater than $2$ then we have a $H_{2i}$ involving a central path either from $t_j$ to $v$ via $u$ or from $t_j$ to $w$ via $u$. Thus we have nothing more than a path here which must terminate at some $t_{k}$. In this case, $G$ has three $3^+$-vertices.

Suppose we have a path $t_1,t_2,\ldots$ that eventually joins the path between $v$ and $w$ exclusive of $v$ and $w$. Then we contradict minimality of the path from $u$ to $w$ via $v$. Thus we have a path $t_1,t_2,\ldots$ that eventually joins $w$. (Possibly it did so via, w.l.o.g., $w''$.) This path must be of odd length since otherwise it would produce an $H_{2i}$. 
If any vertex on this path has degree greater than $2$ then it induces some $H_{2i}$ on a path to either $u$ or $w$. Except in the case that it is distance $2$ from $w$, an odd distance from $u$, and shares two neighbours with $w$ (and these are $w$'s only neighbours).
In this case, $G$ has four $3^+$-vertices.

Therefore, there is nothing but this path, except that we did not isolate the degree of $w$. Indeed, $w$ may have arbitrarily high degree. Any path emanating from $w$ and not eventually reaching $v$ (we have accounted for all such paths) may not have a vertex $c$ of degree greater than $2$, since otherwise this vertex would produce an $H_{2i}$ with middle path to either $w$ or $v$. Except in the case that $c$ is at distance $2$ from $w$ and shares two neighbours the two neighbours $w'$ and $w''$ with $w$, in which case $c$ may have a path emanating from it in the form of $c=c_1$, $c_2$, etc. This case, in which there is a squash racket with distinguished vertex identified with $w$, will be superseded by the next. Note however that there can be only one squash racket attached at $w$ and if there is a squash racket there can be no additional cycles or paths. It follows then that some path may leave $w$ and terminate in a vertex of degree $1$, or return to $w$ producing a cycle. In this case, $G$ has three $3^+$-vertices.
\end{proof}

\noindent
We now prove a more general lemma. For an integer $d\geq 0$, let ${\cal G}_d$ be the class of graphs that consists of graphs, in which each connected component has at most $d$ $3^+$ vertices.

\begin{lemma}\label{lem:general-star-3-col-few}
For every $d\geq 1$, a graph from ${\cal G}_d$ is star $3$-colourable if and only if it is ${\cal F}'$-subgraph-free for some finite set of graphs ${\cal F}'$.
\end{lemma}

\begin{proof}
Let us give the proof for $d=2$ where we consider a connected graph $G$ from ${\cal G}_2$ that has exactly two $3^+$-vertices before we make the obvious generalisation. Let $u$ and $v$ be the $3^+$-vertices of $G$. Let $P_x$ be the set of paths from $x \in \{u,v\}$ that terminate with a vertex of degree $1$. Let $C_x$ be the set of paths from $x \in \{u,v\}$ that return to $x$ making a cycle. Let $P_{u,v}$ be the set of paths from $u$ to $v$. These account for all vertices of $G$. Recall that star colouring has the property that it can see up to distance $3$ from a certain vertex in terms of the effect of colours. Consider the paths in $P_{u,v}$ that are of length $\geq 3+4+3=10$. In any star $3$-colouring of a path of length $\geq 10$, the middle four vertices must contain a consecutive run of three distinct colours. It follows that a path of length $10$ covers all longer paths of length $1 \bmod 3$. Then, paths of length $11$ and $12$ cover all longer paths of length $2$ and $0 \bmod 3$, respectively. Thus, we may w.l.o.g. consider that the paths in $P_{u,v}$ are of length at most $12$. Now, suppose there are more than $N=3^6$ of some particular length $m$ of path. Then in any star $3$-colouring of these, some sequence of three colours at the beginning and at the end must appear more than once. Since it appears more than once it could be repeated if there were additional paths of length $m$ over and above the $N$ without violating the condition of star $3$-colouring. In this fashion, we may assume that there are at most $3^6$ copies of a path of any length. But now we reduced our graph to one of bounded size $\leq 5 \cdot 12 \cdot 12 \cdot 3^6$ (this is: \#sets, \#path-lengths, \#vertices, \#end-types, respectively) and we can check using brute force whether it is star $3$-colourable. If it is not star $3$-colourable then may add it to our list of forbidden subgraphs.

Now, when $d$ is larger, we consider $P_x, C_x, P_{x,y}$ for all vertices $x$ and $y$ giving a quadratic number in $d$ of sets to consider. The same bound on the length of paths applies as in the case $d=2$. The argument again reduces any graph to a graph of size at most $f(d)$ for some function $f$ which is $O(d^2)$ and the result follows as in the case $d=2$.
\end{proof}

\noindent
We are now ready to prove Theorem~\ref{t-general}, which we restate below.

\medskip
\noindent
{\bf Theorem~\ref{t-general} (restated).}
{\it A $(\H_2,\H_4,\H_6\ldots)$-subgraph-free~graph is star $3$-colourable if and only if it is ${\cal F}$-subgraph-free for some \emph{finite} set of graphs ${\cal F}$.} 

\begin{proof}
Let $G$ be a $(\H_2,\H_4,\H_6\ldots)$-subgraph-free~graph. We may assume without loss of generality that $G$ is connected. First suppose $G$ has at least five $3^+$-vertices. Then, by Lemma~\ref{lem:general-star-3-col-many}, we find that $G$ contains a subgraph isomorphic to $\mathbb{A}$ or $C_5$, or $G$ contains a $C_4$ with at least three $3^+$ vertices. Recall from Lemma~\ref{l-not} that ${\mathbb A}$ is not star $3$-colourable. It is readily seen that $C_5$ is not star 3-colourable. By Lemma~\ref{lem:s3c-c4-simple}, any graph that contains a $C_4$ with at least three $3^+$ vertices is not star $3$-colourable either. If $G$ has fewer than five $3^+$-vertices, then we apply
Lemma~\ref{lem:general-star-3-col-few}.
\end{proof}

\section{Acyclic ${\mathbf{3}}$-Colouring}\label{a-acyclic}

The {\sc Acyclic $3$-Colouring} problem is to decide if a graph $G$ has an {\it acyclic $3$-colouring}, which is a mapping $f:V(G)\to \{1,2,3\}$ such that for every $i$, the set $U_i$ of vertices of $G$ mapped to $i$ is independent (so, $f$ is a {\it $3$-colouring}) and $U_1\cup U_2$, $U_1\cup U_3$, $U_2\cup U_3$ all induce forests. Note that every star colouring is acyclic, while the reverse statement does not hold.

We note that {\sc Acyclic $3$-Colouring} is a C1-problem that does not satisfy C3. This is because the problem is polynomial-time solvable for graphs of bounded treewidth due to Courcelle's Theorem~\cite{Co90}, so C1 is satisfied.
Moreover, it follows from Proposition~\ref{p-sc3} that for all $p\geq 3$, the $p$-subdivision of a graph $G$ is star $3$-colourable and thus acyclic $3$-colourable, so C3 is not satisfied. As mentioned in Section~\ref{s-con}, the complexity of {\sc Acyclic $3$-Colouring} is still open for subcubic graphs, so it is not known if the problem satisfies C2.

We now prove a result using the same construction as in the corresponding result for {\sc Star $3$-Colouring} (Theorem~\ref{t-s3c}).

\begin{proposition}\label{p-acyclic}
{\sc Acyclic $3$-Colouring} is \NP-complete for planar bipartite graphs in which one partition class has size~$2$.
\end{proposition}

\begin{proof}
Let $G$ and $G'$ be the graphs as defined in the proof of Theorem~\ref{t-s3c}.
We claim that $G$ has a $3$-colouring if and only if $G'$ has an acyclic $3$-colouring.
First suppose $G$ has a $3$-colouring. From the proof of Theorem~\ref{t-s3c} we find that $G'$ has a star $3$-colouring, which is an acyclic $3$-colouring by definition. Now suppose $G'$ has an acyclic $3$-colouring $c$. Then no two adjacent vertices $u$ and $v$ in $G$ can be coloured alike by $c$. Hence, the restriction of $c$ to $G$ is a $3$-colouring.
\end{proof}

Our next result immediately follows from previous results.

\begin{theorem}\label{thm:main-acyclic}
{\sc Acyclic $3$-Colouring} is polynomial-time solvable for $(\H_{\ell},\H_{\ell+1},\ldots)$-subgraph-free graphs $(\ell\geq 1)$ and $(\H_i,\H_{2i},\H_{3i},\ldots)$-subgraph-free graphs $(i\geq 1)$, but \NP-complete for  $(\H_1,\H_3,\H_5\ldots)$-subgraph-free~graphs.\\[-5mm] 
\end{theorem}

\begin{proof}
The polynomial results follow from combining the above observation that the problem satisfies C1 with Propositions~\ref{p-h} and~\ref{p-h2}, respectively. The \NP-completeness result follows from Proposition~\ref{p-acyclic}.
\end{proof}


\begin{thebibliography}{10}

\bibitem{ACKKR04}
Michael~O. Albertson, Glenn~G. Chappell, Henry~A. Kierstead, Andr{\'{e}}
  K{\"{u}}ndgen, and Radhika Ramamurthi.
\newblock Coloring with no $2$-colored $P_4$'s.
\newblock {\em Electronic Journal of Combinatorics}, 11(1), 2004.

\bibitem{ABKL07}
Vladimir~E. Alekseev, Rodica Boliac, Dmitry~V. Korobitsyn, and Vadim~V. Lozin.
\newblock N{P}-hard graph problems and boundary classes of graphs.
\newblock {\em Theoretical Computer Science}, 389:219--236, 2007.

\bibitem{AK90}
Vladimir~E. Alekseev and Dmitry~V. Korobitsyn.
\newblock Complexity of some problems on hereditary graph classes.
\newblock {\em Diskretnaya Matematika}, 2:90--96, 1990.

\bibitem{AK92}
Vladimir~E. Alekseev and Dmitry~V.   Korobitsyn. Complexity of some problems on hereditary
  graph classes. {\em Diskretnaya Matematika}, 4:34--40 (1992)

\bibitem{AP89}
Stefan Arnborg and Andrzej Proskurowski.
\newblock Linear time algorithms for {N}{P}-hard problems restricted to partial
  $k$-trees.
\newblock {\em Discrete Applied Mathematics}, 23:11--24, 1989.

\bibitem{Bi91}
Daniel Bienstock. On the complexity of testing for odd holes and induced odd paths. {\em Discrete Mathematics}, 90:85--92, 1991 (see also Corrigendum, {\em Discrete Mathematics} 102:109, 1992).

\bibitem{BRST91}
Daniel Bienstock, Neil Robertson, Paul D. Seymour, and Robin Thomas. Quickly excluding a
  forest. {\em Journal of Combinatoral Theory, Series {B}}, 52:274--283, 1991.
  
\bibitem{BHKKOOZ20}
Hans L. Bodlaender, Tesshu Hanaka, Yasuaki Kobayashi, Yusuke Kobayashi, Yoshio Okamoto, Yota Otachi, Tom C. van der Zanden.
Subgraph Isomorphism on Graph Classes that Exclude a Substructure. {\em Algorithmica}, 82:3566--3587, 2020.

\bibitem{BJMOPPSV}
Hans L. Bodlaender, Matthew Johnson, Barnaby Martin, Jelle J. Oostveen, Sukanya Pandey, Dani\"el Paulusma, Siani Smith and Erik Jan van Leeuwen.
Complexity framework for forbidden subgraphs IV: The Steiner Forest problem. {\em CoRR},  abs/2305.01613, 2023.
 
\bibitem{BL02}
Rodica Boliac and Vadim~V. Lozin.
\newblock On the clique-width of graphs in hereditary classes.
\newblock {\em Proc. ISAAC 2022, LNCS}, 2518:44--54, 2002.

\bibitem{BGMOPS22}
Christoph Brause, Petr A. Golovach, Barnaby Martin, Pascal Ochem, Daniël Paulusma, and Siani Smith.
Acyclic, Star, and Injective Colouring: Bounding the diameter. {\em Electronic Journal of Combinatorics}, 29:P2.43, 2022.

\bibitem{CHRSZ19}
Maria Chudnovsky, Shenwei Huang, Pawel Rzazewski, Sophie Spirkl, and Mingxian
  Zhong.
\newblock Complexity of ${C}_k$-coloring in hereditary classes of graphs.
\newblock {\em Proc. ESA 2019, LIPIcs}, 144:31:1--31:15, 2019.

\bibitem{Co90}
Bruno Courcelle.
\newblock The monadic second-order logic of graphs. {I}. {R}ecognizable sets of
  finite graphs.
\newblock {\em Information and Computation}, 85:12--75, 1990.

\bibitem{Da80}
David P. Dailey. Uniqueness of colorability and colorability of planar $4$-regular graphs are \NP-complete. {\em Discrete Mathematics}, 30:289--293, 1980.

\bibitem{DP89}
Rina Dechter and Judea Pearl.
\newblock Tree clustering for constraint networks.
\newblock {\em Artificial Intelligence}, 38:353--366, 1989.

\bibitem{GraphTheory}
Reinhard Diestel.
\newblock {\em Graph theory}, volume 173 of {\em Graduate Texts in
  Mathematics}.
\newblock Springer-Verlag, New York, 1997.

\bibitem{EHK98}
Thomas Emden-Weinert, Stefan Hougardy and Bernd Kreuter. Uniquely colourable graphs and the hardness of colouring graphs of large girth. {\em Combinatorics, Probability \& Computing}, 7:375--386, 1998.

\bibitem{GHN00}
Anna Galluccio, Pavol Hell, and Jaroslav Ne\v{s}et\v{r}il.
\newblock The complexity of ${H}$-colouring of bounded degree graphs.
\newblock {\em Discrete Mathematics}, 222:101--109, 2000.

\bibitem{GJS76}
Michael~R. Garey, David~S. Johnson, and Larry~J. Stockmeyer.
\newblock Some simplified {N}{P}-complete graph problems.
\newblock {\em Theoretical Computer Science}, 1:237--267, 1976.

\bibitem{GJT76}
Michael~R. Garey, David~S. Johnson, and Robert~Endre Tarjan.
\newblock The {P}lanar {H}amiltonian {C}ircuit problem is {N}{P}-complete.
\newblock {\em {SIAM} Journal on Computing}, 5:704--714, 1976.

\bibitem{GR09}
Alfred Geroldinger and Imre Z. Ruzsa.
Combinatorial Number Theory and Additive Group Theory.
Birkh\"auser, 2009.

\bibitem{GP14}
Petr~A. Golovach and Dani{\"{e}}l Paulusma.
\newblock List coloring in the absence of two subgraphs.
\newblock {\em Discrete Applied Mathematics}, 166:123--130, 2014.

\bibitem{GPR15}
Petr~A. Golovach, Dani{\"{e}}l Paulusma, and Bernard Ries.
\newblock Coloring graphs characterized by a forbidden subgraph.
\newblock {\em Discrete Applied Mathematics}, 180:101--110, 2015.

\bibitem{HN90}
Pavol Hell and Jaroslav Nesetril.
\newblock On the complexity of ${H}$-coloring.
\newblock {\em Journal of Combinatorial Theory, Series {B}}, 48:92--110, 1990.

\bibitem{JMOPPSV}
Matthew Johnson, Barnaby Martin, Jelle~J. Oostveen, Sukanya Pandey, Dani\"el
  Paulusma, Siani Smith, and Erik~Jan van Leeuwen.
\newblock Complexity framework for forbidden subgraphs {I}: The framework.
\newblock {\em CoRR}, 2211.12887, 2022.

\bibitem{JMPPSV23}
Matthew Johnson,  Barnaby Martin, Sukanya Pandey, Dani\"el Paulusma, Siani Smith, and Erik Jan van Leeuwen,
Complexity framework for forbidden subgraphs III: When problems are
  tractable on subcubic graphs.
{\em  Proc. MFCS 2023, LIPIcs}, 272:57:1-57:15, 2023.

\bibitem{Ka12}
Marcin Kami\'nski. Max-{C}ut and containment relations in graphs. 
{\em Theoretical Computer Science},  438:89--95, 2012.

\bibitem{KLMT11}
Nicholas Korpelainen, Vadim~V. Lozin, Dmitriy~S. Malyshev, and Alexander
  Tiskin.
\newblock Boundary properties of graphs for algorithmic graph problems.
\newblock {\em Theoretical Computer Science}, 412:3545--3554, 2011.

\bibitem{LLMT09}
Benjamin L{\'{e}}v{\^{e}}que, David~Y. Lin, Fr{\'{e}}d{\'{e}}ric Maffray, and
  Nicolas Trotignon.
\newblock Detecting induced subgraphs.
\newblock {\em Discrete Applied Mathematics}, 157:3540--3551, 2009.

\bibitem{LR22}
Vadim V. Lozin and Igor Razgon, Tree-width dichotomy. {\em European Journal of
  Combinatorics}, 103:103517, 2022.

\bibitem{MPSL23}
Barnaby Martin, Dani{\"{e}}l Paulusma, Siani Smith, and Erik~Jan van Leeuwen.
\newblock Few induced disjoint paths for ${H}$-free graphs.
\newblock {\em Theoretical Computer Science}, 939:82--193, 2023.

\bibitem{RS84}
Neil Robertson and Paul D. Seymour. Graph minors. {I}{I}{I}. {P}lanar tree-width.
{\em  Journal of Combinatorial Theory, Series {B}}, 36:49--64, 1984.

\bibitem{RS86}
Neil Robertson and Paul D. Seymour. Graph minors. {V}. {E}xcluding a planar graph.
 {\em Journal of Combinatorial Theory, Series {B}},  41:92--114, 1986.

\bibitem{RS95}
Neil Robertson and Paul~D. Seymour.
\newblock Graph minors. {X}{I}{I}{I}. {T}he {D}isjoint {P}aths problem.
\newblock {\em Journal of Combinatorial Theory, Series {B}}, 63:65--110, 1995.

\bibitem{Sc94}
Petra Scheffler.
\newblock A practical linear time algorithm for disjoint paths in graphs with  bounded tree-width.
\newblock Preprint 396, Technische Universit\"at Berlin, Institut f\"ur  Mathematik, 1994.

\bibitem{SA24}
M. A. Shalu, Cyriac Antony.
Hardness transitions and uniqueness of acyclic colouring. {\em Discrete Applied Mathematics}. 345:77--98, 2024.

\bibitem{SA22}
M. A. Shalu, Cyriac Antony.
Star colouring of bounded degree graphs and regular graphs. {\em Discrete Mathematics}, 345:112850, 2022.

\bibitem{Sh80}
Yossi Shiloach.
\newblock A polynomial solution to the undirected {Tw}o {P}aths problem.
\newblock {\em Journal of the {ACM}}, 27:445--456, 1980.

\end{thebibliography}
\end{document}